\numberwithin{equation}{section}
\newtheorem{thm}{Theorem}[section]
\newtheorem{cor}[thm]{Corollary}
\newtheorem{assumption}{Assumption}
\newtheorem{lem}[thm]{Lemma}
\newtheorem{prop}[thm]{Proposition}
\newtheorem{rem}[thm]{Remark}
\newcommand{\rig}{\big\rangle}
\newcommand{\lef}{\big\langle}
\newcommand{\B}{\mathscr{B}}
\newcommand{\HH}{{\rm H}}
\newcommand{\eps}{\varepsilon}
\newcommand{\N}{\mathbb{N}}
\newcommand{\R}{\mathbb{R}}
\newcommand{\C}{\mathbb{C}}
\newcommand{\h}{\mathcal{H}}
\newcommand{\hh}{\mathscr{H}}
\newcommand{\U}{{\rm U}}
\newcommand{\pd}{\partial}
\newcommand{\F}{\mathcal{F}}
\newcommand{\E}{\mathcal{E}}
\newcommand{\id}{\mathds{1}}
\title[Impurity-bound excitons]{Impurity-bound excitons in one and two dimensions}
\author{Horia Cornean${}^*$}\thanks{${}^*$ Department of Mathematical Sciences, Aalborg University, Skjernvej 4A, 9220 Aalborg, Denmark} \author{Hynek Kova\v{r}\'{\i}k${}^{**}$}\thanks{${}^{**}$ DICATAM, Sezione di Matematica
Universit{\`a} degli studi di Brescia,
Via Branze, 38-25123
Brescia, Italy}\author{Thomas G. Pedersen${}^{***}$}\thanks{${}^{***}$ Department of Materials and Production, Aalborg University, Skjernvej 4A, 9220 Aalborg, Denmark}
\begin{document}

\begin{abstract}
We study three-body Schr\"odinger operators in one and two dimensions modelling an exciton interacting with a charged impurity. We consider certain classes of  multiplicative interaction potentials proposed in the physics literature. We show that if the impurity charge is larger than some critical value, then  three-body bound states cannot exist. Our spectral results are confirmed by variational numerical computations based on projecting on  a finite dimensional subspace generated by a Gaussian basis.
\end{abstract}

\maketitle

\section{\bf Introduction and main results}\label{sect-1}

Three-body complexes, in which one particle is oppositely charged from the other two, play an important role in solid-state physics. Such complexes are typically encountered when excitons (i.e. two-body complexes consisting of a negative electron and a positive hole) interact with a third charge. If the third particle is an additional mobile electron or hole, charged excitons (trions) may form. Alternatively, excitons interacting with an immobile charged impurity may lead to impurity-bound excitons \cite{Mostaani,Ganchev}. The latter can be modelled as a light electron-hole pair interacting with an infinitely heavy impurity charge $\kappa$. n two previous papers we studied in detail one-dimensional impurity-bound excitons where the interactions were modelled by contact potentials \cite{hkpc}, as well as trions \cite{rpc}. In the current manuscript, we extend the analysis to the physically more relevant case of two-dimensional atomically thin semiconductors, in which impurity-bound excitons are frequently observed. We consider interactions given by multiplicative potential operators of the Keldysh form \cite{Keldysh, cud, tpv}, both in one and two dimensions. Hence, in this paper we study the spectral properties of the operators
\begin{equation}
 \HH_{\kappa,\lambda}(V) = -\Delta -\kappa V(x) +\lambda V(y) - V(x-y)  \quad \text{in} \ \ L^2(\R^{2d}), \quad d=1,2,
\end{equation}
where $V:\R^d\to \R$ is a potential function and $\kappa, \lambda$ are positive coupling constants. In the sequel we will adopt the notation
$$
\HH_\kappa(V):=\HH_{\kappa,\kappa}(V) .
$$
The operator $\HH_\kappa(V)$ describes an impurity of infinite mass interacting with an exciton. The impurity charge $\kappa$ controls how the impurity interacts with the electron and the hole. Our main interest here is to show that if $\kappa$ is larger than some critical value, then generically, $\HH_\kappa(V)$ does not have "three-body" bound states. Also, we numerically analyze the asymptotic behavior for $|\kappa|\ll 1$ and demonstrate important differences with respect to the contact potential model. Our analytical findings are supported by numerical results for both critical and asymptotic limits.

Very roughly said, the generic situation is the following: if $\kappa>0$ is small enough, then one expects at least one discrete eigenvalue (even infinitely many for the class of $2d$ potentials we consider), while when $\kappa$ is larger than some critical value, no discrete eigenvalues can exist.

In the $1d$ case we show in Theorem \ref{thm-strong} and Corollary \ref{cor-1} that if the interaction potential is even, localized, smooth enough and with a non-degenerate maximum at zero, then the above "generic" case applies. 
 Nevertheless, in Proposition \ref{prop-well} we construct a flat-well potential which has at least one bound state for all $\kappa>1$.

In the $2d$ case we only consider the "physical" potential proposed by \cite{cud} (see also \eqref{v-appr}), which has a logarithmic divergence near the origin and goes like $-1/|x|$ at infinity. For this particular potential we show in Theorem \ref{thm-2d} that $\HH_\kappa(V)$ has infinitely many discrete eigenvalues for a certain interval of variation for $\kappa$, but no eigenvalues at all for large enough $\kappa$. 

\subsection{Notation}  
Given a set $M$ and two functions $f_1,\, f_2:M\to\R$, we write $f_1(m) \lesssim f_2(m)$ if there exists a numerical constant $c$ such that $f_1(m) \leq c\, f_2(m)$ for all $m\in M$. The symbol $f_1(m) \gtrsim f_2(m)$ is defined analogously. Moreover, we use the notation 
$$
f_1(m) \asymp f_2(m)  \quad \Longleftrightarrow \quad f_1(m) \lesssim f_2(m) \ \wedge \ f_2(m) \lesssim f_1(m)
$$
Given $f,g\in L^2(\R^n)$ we denote by
$$
\lef f, g \rig_{L^2(\R^n)} = \int_{\R^n} \overline{ f(x)}\cdot g(x)\, dx
$$
the scalar product of $f$ and $g$. Since our operators are real, we will often work with real functions, only, especially when we perform variational arguments. Given a self-adjoint operator $A$ on a Hilbert space $\mathcal H$ 
we will denote by $N(A, \tau)_{\mathcal H}$ the number of eigenvalues of $A$ less than $\tau$ counted with their multiplicities. 

\vspace{0.2cm}

Now we formulate our main results.

\vspace{0.2cm}

\subsection{The one-dimensional case}

\noindent Let $v: \R\to \R$ be a potential function  satisfying the following 
\begin{assumption} \label{ass-v} We have

\begin{enumerate} 

\medskip
\item $v\in C^3_0(\R)$ and $v(x)\geq 0$ for all $x\in\R$.  

\medskip

\item $v(x)=v(-x)$ for all $x\in\R$. 

\medskip

\item $v(x) < v(0)$ for all $x\neq 0$ and $v''(0) < 0$. 

\end{enumerate}  
\end{assumption}

\noindent Given such a $v$, it is easily seen that  the operator  $\HH_{\kappa,\lambda}(v)$ in $L^2(\R^2)$ is associated with the closed quadratic form
\begin{align*} 
q_{\kappa, \lambda} [u] & = \int_{\R^2} |\nabla u|^2\, dx dy - \kappa \int_{\R^2} u^2(x,y) v(x)\, dx dy + \lambda \int_{\R^2} u^2(x,y) v(y)\, dx dy \\
& \quad -  \int_{\R^2} u^2(x,y) v(x-y)\, dx dy ,
\end{align*}
with $ d(q)= H^1(\R^2)$. Let $
q_{\kappa, \kappa} [u]= q_\kappa [u] .
$


\begin{thm} \label{thm-strong}
Let assumption \ref{ass-v} be satisfied. Then to any $\lambda >1$ there exists $\kappa_c(\lambda)>0$ such that 
\begin{equation} 
\kappa \geq \kappa_c(\lambda) \quad \Longrightarrow \quad \sigma_d( \HH_{\kappa,\lambda}(v)) = \emptyset. 
\end{equation}
\end{thm}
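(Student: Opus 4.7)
The plan is to first identify $\inf \sigma_{ess}(\HH_{\kappa,\lambda}(v))$ by an HVZ-type decomposition and then to prove a quadratic form lower bound. The three cluster thresholds obtained by sending either particle or the exciton to infinity are $e_1(\kappa) := \inf \sigma(-\partial_x^2 - \kappa v(x))$, $\inf \sigma(-\partial_y^2 + \lambda v(y)) = 0$, and the exciton threshold $e_1(1) = \inf \sigma(-\partial_r^2 - v(r))$. A trial-state argument based on harmonic approximation near the non-degenerate maximum of $v$ yields $e_1(\kappa) \leq -\kappa v(0) + O(\sqrt{\kappa})$, so that for $\kappa$ beyond some $\kappa_0 = \kappa_0(v)$ one has $e_1(\kappa) < \min\{0, e_1(1)\}$. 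Hence $\inf \sigma_{ess}(\HH_{\kappa,\lambda}(v)) = e_1(\kappa)$, and the theorem reduces to
\begin{equation*}
q_{\kappa,\lambda}[u] \,\geq\, e_1(\kappa)\, \|u\|_{L^2(\R^2)}^2 \qquad \text{for every } u \in H^1(\R^2).
\end{equation*}

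I would prove this via a Born--Oppenheimer-type substitution. Let $\phi_\kappa > 0$ denote the $L^2$-normalized ground state of $-\partial_x^2 - \kappa v(x)$. Writing $u(x,y) = \phi_\kappa(x)\, g(x,y)$ and integrating by parts in $x$ with the eigenvalue equation produces the exact identity
\begin{equation*}
q_{\kappa,\lambda}[u] - e_1(\kappa)\|u\|^2 \,=\, \int_{\R^2} \phi_\kappa^2(x)\, \bigl( |\nabla g|^2 + (\lambda v(y) - v(x-y))\, g^2 \bigr)\, dx\, dy \,=:\, R_\kappa[g],
\end{equation*}
and the task becomes to show $R_\kappa[g] \geq 0$ for every admissible $g$.

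I would next split $g(x,y) = \bar g(y) + \tilde g(x,y)$ with $\bar g(y) := \int g(x,y)\phi_\kappa^2(x)\, dx$, so that $\tilde g(\cdot, y)$ is orthogonal to the constants in $L^2(\phi_\kappa^2\, dx)$ for every $y$. Expanding $R_\kappa[g]$ produces three contributions: a diagonal $\bar g$-term $\langle \bar g, h_\kappa \bar g\rangle_{L^2(\R_y)}$ with
\begin{equation*}
h_\kappa \,:=\, -\partial_y^2 + \lambda v(y) - V_\kappa(y), \qquad V_\kappa \,:=\, v \ast \phi_\kappa^2;
\end{equation*}
a diagonal $\tilde g$-term bounded below using the spectral gap $\omega_\kappa := e_2(\kappa) - e_1(\kappa) \asymp \sqrt{\kappa}$ (from the harmonic approximation at $x = 0$) together with the Poincar\'e inequality for $\phi_\kappa^2\, dx$; and a cross term controlled by the variance $\beta(y) := \|v(\cdot - y) - V_\kappa(y)\|_{L^2(\phi_\kappa^2 dx)} = O(\sigma_\kappa)$, where $\sigma_\kappa^2 := \int x^2\phi_\kappa^2\, dx \asymp \kappa^{-1/2}$. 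A Cauchy--Schwarz/Young estimate, exploiting the divergence of $\omega_\kappa$ and the vanishing of $\|\beta\|_\infty$, absorbs the $\tilde g$-diagonal and the cross contributions into the perpendicular part at the cost of subtracting a quantity of order $\kappa^{-1}$ from $h_\kappa$. The proof is thus reduced to establishing that $h_\kappa$ is bounded below by such an $O(\kappa^{-1})$ quantity for all $\kappa \geq \kappa_c(\lambda)$.

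This last step is the main obstacle and the place where the hypothesis $\lambda > 1$ genuinely enters. Formally $h_\kappa \to h_\infty := -\partial_y^2 + (\lambda-1)v(y) \geq 0$ as $\kappa \to \infty$, because $V_\kappa \to v$ uniformly. However, a Taylor expansion gives $V_\kappa(y) - v(y) = \tfrac{1}{2} v''(y)\sigma_\kappa^2 + O(\sigma_\kappa^3)$, so $\lambda v - V_\kappa$ actually develops small attractive pockets, of depth $O(\sigma_\kappa^4/(\lambda-1))$ and width $O(\sigma_\kappa/\sqrt{\lambda-1})$, near the inflection points of $v$ (where $v'' > 0$), and in isolation such a one-dimensional pocket would bind a weak negative eigenstate. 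The saving grace is that $(\lambda-1)v(y)$ is bounded below by a strictly positive constant on a macroscopic neighbourhood of $y = 0$ and therefore provides a barrier that effectively imposes Dirichlet conditions separating the two attractive pockets near $y = \pm R$, $R := \sup(\supp v)$; in the resulting half-line configuration the pockets fail to bind because they satisfy a Bargmann-type smallness condition. I would therefore hope to prove $h_\kappa \geq 0$ (with a gap of order $\kappa^{-1}$) by combining this barrier mechanism with an IMS partition of unity in $y$ and the $C^3_0$-smoothness of $v$ to control all error terms; the quantitative analysis then fixes $\kappa_c(\lambda)$, which is expected to blow up polynomially as $\lambda \downarrow 1$.
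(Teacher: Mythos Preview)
Your reduction coincides with the paper's at the structural level: both project onto the ground state $\phi_\kappa$ in the $x$-variable and reduce the question to a one-dimensional problem in $y$. The paper does this via an exact Feshbach--Schur decomposition (after a $\kappa^{1/4}$-rescaling) and then analyses the resulting effective operator $B_{\kappa,\lambda}$ by a Birman--Schwinger argument: it writes the resolvent kernel of $-\partial_y^2$ as $\tfrac{1}{2\sqrt{\xi}} + m(y,y')$, isolates the rank-one threshold singularity $|\Phi\rangle\langle\Psi|$, shows that $\langle\Psi,\Phi\rangle = (\lambda-1)\kappa^{-1/4}\int v - O(\kappa^{-3/4}) > 0$, and concludes invertibility of the Birman--Schwinger matrix. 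This is carried out first for $\lambda$ slightly larger than $1$, and then extended to all $\lambda>1$ by monotonicity of the eigenvalue count in $\lambda$.

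Your quadratic-form route (ground-state substitution, splitting $g=\bar g+\tilde g$, absorbing the cross term by Cauchy--Schwarz using the gap $\omega_\kappa$) is a legitimate alternative to Feshbach--Schur, and the reduction to the positivity of $h_\kappa$ minus an $O(\kappa^{-1})$ potential correction is correct. The gap is in your proposed endgame. The barrier/IMS/Bargmann mechanism, as written, does not close: the IMS localisation error is an $O(1)$ potential (set by the cutoff geometry, independent of $\kappa$), while the barrier height $(\lambda-1)v$ is of order $\lambda-1$; for $\lambda$ close to $1$ the barrier cannot absorb the localisation error, and making $\kappa$ large does not help since the error is $\kappa$-independent. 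Moreover, on the full line a small residual attractive pocket always binds, so one genuinely needs a mechanism that exploits the repulsive term beyond its role as a spatial barrier.

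What actually makes the one-dimensional step work is that $-\partial_y^2 + (\lambda-1)v$ has \emph{no zero-energy resonance} when $\lambda>1$: any non-trivial solution of $-u''+(\lambda-1)vu=0$ is convex on $\mathrm{supp}\,v$ and hence grows linearly at infinity, so the bounded solution of the free Laplacian is destroyed. Consequently the zero-energy Green's function is locally bounded, the Birman--Schwinger operator $|P_\kappa^-|^{1/2}\bigl(-\partial_y^2+(\lambda-1)v\bigr)^{-1}|P_\kappa^-|^{1/2}$ has norm $O(\kappa^{-1/2})$, and no negative eigenvalue can emerge from the small perturbation $P_\kappa = (v-V_\kappa) - O(\kappa^{-1})(v')^2$. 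This replaces your barrier argument cleanly and explains the blow-up of $\kappa_c(\lambda)$ as $\lambda\downarrow 1$ (the Green's function bound deteriorates). The paper's threshold Birman--Schwinger computation is the resolvent-side incarnation of exactly this fact.
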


\smallskip

\begin{cor} \label{cor-1} 
There exists $\kappa_c$ such that the discrete spectrum of $\HH_\kappa(v)$ is empty for all $\kappa\geq \kappa_c$. 
\end{cor}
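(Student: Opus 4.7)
The plan is to reduce Corollary \ref{cor-1} to Theorem \ref{thm-strong} via a monotonicity argument in the second coupling constant $\lambda$, crucially exploiting the positivity $v\geq 0$ from Assumption \ref{ass-v}(1).

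First I would fix, once and for all, some $\lambda_0>1$ (for concreteness $\lambda_0=2$), let $\kappa_c(\lambda_0)$ be the constant produced by Theorem \ref{thm-strong}, and set $\kappa_c:=\max\{\kappa_c(\lambda_0),\lambda_0\}$. For every $\kappa\geq\kappa_c$ and every $u\in H^1(\R^2)$, the elementary identity
\[
q_{\kappa,\kappa}[u]-q_{\kappa,\lambda_0}[u]=(\kappa-\lambda_0)\int_{\R^2}u(x,y)^2\,v(y)\,dx\,dy\;\geq\;0
\]
yields the quadratic-form inequality $\HH_{\kappa,\kappa}(v)\geq\HH_{\kappa,\lambda_0}(v)$, hence by min-max $\inf\sigma\bigl(\HH_{\kappa,\kappa}(v)\bigr)\geq\inf\sigma\bigl(\HH_{\kappa,\lambda_0}(v)\bigr)$.

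Next I would identify the bottom of the essential spectrum. The parameter $\lambda$ enters only through the repulsive one-body operator $-\Delta_y+\lambda v(y)$, whose spectrum is $[0,\infty)$ for every $\lambda>0$; consequently the two relevant two-body cluster ground states---the electron-impurity energy $e_0(\kappa)=\inf\sigma(-\Delta-\kappa v)$ and the exciton energy $E_{\mathrm{exc}}=\inf\sigma(-2\Delta_r-v(r))$---are both independent of $\lambda$. An HVZ-type argument constructing Weyl sequences (either the hole escapes to infinity leaving a bound electron-impurity pair behind, or the electron-hole complex escapes together as a neutral exciton) should then yield
\[
\inf\sigma_{\mathrm{ess}}\bigl(\HH_{\kappa,\lambda}(v)\bigr)=\Sigma(\kappa):=\min\bigl\{e_0(\kappa),\,E_{\mathrm{exc}}\bigr\}
\]
for every $\lambda>0$; in particular the essential-spectrum thresholds of $\HH_{\kappa,\lambda_0}(v)$ and $\HH_{\kappa,\kappa}(v)$ coincide. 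This Weyl-sequence analysis is the only step that is not entirely routine and is the main technical point to verify carefully.

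Putting the pieces together, for $\kappa\geq\kappa_c$ Theorem \ref{thm-strong} yields $\sigma_d(\HH_{\kappa,\lambda_0}(v))=\emptyset$, hence $\inf\sigma(\HH_{\kappa,\lambda_0}(v))=\Sigma(\kappa)$; the form inequality then forces $\inf\sigma(\HH_{\kappa,\kappa}(v))\geq\Sigma(\kappa)=\inf\sigma_{\mathrm{ess}}(\HH_{\kappa,\kappa}(v))$, so $\HH_{\kappa,\kappa}(v)$ has no eigenvalue strictly below its essential spectrum and $\sigma_d(\HH_{\kappa,\kappa}(v))=\emptyset$, as claimed.
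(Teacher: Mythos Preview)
Your proposal is correct and matches the paper's own argument essentially line for line: the paper fixes $\lambda_0=3/2$ (you chose $2$), invokes Proposition~\ref{prop-es} to get that $\sigma_{\rm ess}(\HH_{\kappa,\lambda}(v))$ is independent of $\lambda$ (this is exactly the HVZ-type step you flagged as the main technical point, and it is already proved in the paper), and then uses $\HH_\kappa(v)\geq \HH_{\kappa,3/2}(v)$ for $\kappa\geq 3/2$ together with the min-max principle.
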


\noindent The following example indicates that the non-flatness condition in Assumption \ref{ass-v} cannot be omitted. 

\begin{prop} \label{prop-well}
Let $w: \R \to \R$ be given by 
$$
w(x) =  \begin{cases}
w_0
  & \text{for }\quad |x| \leq 1
  \,,
  \\
  0
  & \text{for }  \, \quad  |x| > 1
  \,.
\end{cases}
$$
Then there exists $w_c \in (0,\infty)$ such that for $w_0 \geq w_c$ the operator
$$
\HH_\kappa(w) = -\pd_x^2 -\pd_y^2 -\kappa\, w(x) +\kappa\, w(y) - w(x-y) 
$$
in $L^2(\R^2)$ has at least one discrete eigenvalue for all $\kappa>1$.
\end{prop}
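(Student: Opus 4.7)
The plan is to find a trial function in $H^1(\R^2)$ whose quadratic form value lies strictly below $\inf\sigma_{\mathrm{ess}}(\HH_\kappa(w))$; by the variational principle this produces at least one discrete eigenvalue. The first step is therefore to identify the bottom of the essential spectrum. Since the impurity is infinitely heavy, $\HH_\kappa(w)$ is a two-particle Schr\"odinger operator on $L^2(\R^2)$ with external potential. A Persson/IMS localization of $\R^2$ into regions corresponding to the four two-cluster channels---electron-impurity bound with hole free; hole stuck near impurity with electron free (threshold $\geq 0$ as the potential is repulsive); exciton with freely moving center of mass; and all apart---gives
$$
\inf\sigma_{\mathrm{ess}}\bigl(\HH_\kappa(w)\bigr) \;=\; \min\bigl\{0,\ -\mu_1(\kappa),\ -E_{\mathrm{ex}}\bigr\},
$$
where $-\mu_1(\kappa):=\inf\sigma(-\pd_x^2-\kappa w)$ and $-E_{\mathrm{ex}}:=\inf\sigma(-2\pd_r^2-w(r))$. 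Since $-\pd^2\le -2\pd^2$ as operators on $H^1(\R)$, the variational principle gives $\mu_1(1)\ge E_{\mathrm{ex}}$, strictly because the kinetic energy of the exciton ground state is positive; combined with the monotonicity $\mu_1(\kappa)\ge\mu_1(1)$ for $\kappa\ge 1$, this yields $\inf\sigma_{\mathrm{ess}}(\HH_\kappa(w))=-\mu_1(\kappa)$ for every $\kappa\ge 1$.

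Next I take the trial function $\Psi(x,y)=\psi_1(x)\chi(y)$, where $\psi_1$ is the normalized ground state of $-\pd_x^2-\kappa w$ and $\chi$ is normalized, supported in $(1,1+\delta)$ for some small $\delta\in(0,1)$ to be fixed. Integration by parts gives
$$
\langle\Psi,\HH_\kappa(w)\Psi\rangle \;=\; -\mu_1(\kappa) \;+\; \|\chi'\|_{L^2(\R)}^2 \;+\; \int_\R \bigl(\kappa w(y)-G_{\psi_1}(y)\bigr)\chi(y)^2\,dy,
$$
where $G_{\psi_1}(y):=\int_\R\psi_1(x)^2\,w(x-y)\,dx$. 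Since $\supp\chi$ lies outside $[-1,1]$, the repulsive term $\kappa w$ vanishes on $\supp\chi$, and it suffices to arrange $\|\chi'\|_{L^2}^2 < \int_\R G_{\psi_1}\chi^2\,dy$. A direct calculation gives $G_{\psi_1}(y)=w_0\int_{y-1}^{y+1}\psi_1(x)^2\,dx \ge w_0\int_{y-1}^{1}\psi_1(x)^2\,dx$ for $y\in(1,2)$. For $\kappa\ge 1$ and $w_0$ large, a standard deep-well analysis---uniform in $\kappa\ge 1$ because the effective coupling $\kappa w_0$ is at least $w_0$---yields the convergence $\psi_1\to\cos(\pi\cdot/2)\,\mathbf{1}_{(-1,1)}$ in $L^2(\R)$ as $w_0\to\infty$. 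Hence for any fixed $\delta\in(0,1)$ there exist $c_\delta>0$ and $w_c>0$ such that $G_{\psi_1}(y)\ge c_\delta\, w_0$ for all $y\in(1,1+\delta)$, $\kappa\ge 1$, and $w_0\ge w_c$.

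Finally, with $\chi(y)=\sqrt{2/\delta}\sin\bigl(\pi(y-1)/\delta\bigr)\mathbf{1}_{(1,1+\delta)}(y)$ one computes $\|\chi'\|_{L^2}^2=\pi^2/\delta^2$, so
$$
\langle\Psi,\HH_\kappa(w)\Psi\rangle \;\le\; -\mu_1(\kappa) \;+\; \frac{\pi^2}{\delta^2} \;-\; c_\delta\, w_0.
$$
Fixing, e.g., $\delta=1/10$ and enlarging $w_c$ so that $w_c>\pi^2/(c_\delta\delta^2)$ yields $\langle\Psi,\HH_\kappa(w)\Psi\rangle<-\mu_1(\kappa)=\inf\sigma_{\mathrm{ess}}(\HH_\kappa(w))$ for all $\kappa\ge 1$ and $w_0\ge w_c$, producing the desired discrete eigenvalue. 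The main obstacle is the rigorous identification of $\inf\sigma_{\mathrm{ess}}$: the potential does not decay along the diagonal $\{x=y\}\subset\R^2$, so one cannot invoke a "$V\to 0$" argument, and the four-region IMS partition matched to the four cluster channels must be carried out explicitly; the quantitative control on $\psi_1$ used to lower-bound $G_{\psi_1}$, while uniform in $\kappa\ge 1$, is more routine and follows from the explicit solution of the 1d square-well problem.
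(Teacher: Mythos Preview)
Your proposal is correct and follows the same overall variational strategy as the paper: take a product trial state $\varphi_\kappa(x)\,g(y)$ with $\varphi_\kappa$ the ground state of $-\partial_x^2-\kappa w$, choose $g$ supported in $\{y>1\}$ so that the repulsive term $\kappa w(y)$ drops out, and show that the gain from $-w(x-y)$ beats the kinetic cost of $g$ once $w_0$ is large, uniformly in $\kappa$.

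The implementations differ in two places. First, the paper does not redo the HVZ/IMS analysis here but simply invokes its Proposition~\ref{prop-es}, which already gives $\inf\sigma_{\rm ess}(\HH_\kappa(w))=e_w(\kappa)$ for $\kappa\ge k_e$ (and $k_e<1$); your sketch of the four-channel decomposition is fine but unnecessary in context. Second, and more substantively, the paper chooses a fixed piecewise function $f$ on $(1,\infty)$ and then extracts the needed uniform-in-$\kappa$ lower bound on $\int\varphi_\kappa^2$ over the relevant subinterval by direct square-well algebra: it bounds the normalization constant $C_\kappa^2$ from below via the explicit eigenvalue equation together with a variational upper bound $e_w(\kappa)\le -\kappa w_0/(2+(\kappa w_0)^{-1})$. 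You instead pick $\chi$ supported on a short interval $(1,1+\delta)$ and obtain the mass bound by invoking the $L^2$-convergence $\psi_1\to\cos(\pi\cdot/2)\,\mathbf 1_{(-1,1)}$ as the effective depth $\kappa w_0\to\infty$, which is indeed uniform for $\kappa\ge 1$. Your route is cleaner and avoids the explicit constants; the paper's route is more self-contained and yields an explicit quantitative criterion $2-\tfrac{4}{9\pi^4}\,\tfrac{w_0^2}{3w_0+2}<0$ for $w_c$.
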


\vspace{0.2cm}

\subsection{The two-dimensional case.} 

It was shown in \cite{cud} that the Coulomb potential energy created by a point charge at the origin that
electrons feel in a two-dimensional layer is well approximated by the function
\begin{equation} \label{v-appr}
V_{\rm ctr}(x) = {\frac{1}{r_0}} \log\frac{|x|}{|x|+{r_0}} -w(|x|), \qquad x\in\R^2,
\end{equation}
{ where $r_0>0$ is a constant\ } and $w:\R_+\to \R$ satisfies 

\begin{assumption} \label{ass-w} We have

\begin{enumerate} 

\medskip
\item $w\in C^2(\R_+)$ and  for all $r\in\R_+$ it holds
$$
0 \, \leq \, w(r) \, \leq\, w(0)\, .
$$

\medskip

\item {$w(r)= \mathcal{O}(e^{-r})$ as $r\to\infty$. } 

\end{enumerate}  
\end{assumption}

\smallskip 
\noindent {Without loss of generality in what follows we will put
$$
r_0 =1.
$$
}
\noindent Let us consider the operator 
\begin{equation} \label{ham-2d}
\hh_\kappa = -\Delta_x -\Delta_y  +\kappa V_{\rm ctr}(x) -\kappa V_{\rm ctr}(y) + V_{\rm ctr}(|x-y|) , \qquad x,y \in\R^2
\end{equation} 
in $L^2(\R^4)$.

\begin{thm} \label{thm-2d}
The discrete spectrum of $\hh_\kappa$ 

\noindent {\rm (i)} is empty for $\kappa$ larger than some critical value,  

\noindent {\rm (ii)} but contains infinitely many eigenvalues for certain values of $\kappa\in (1/2,1)$.  
\end{thm}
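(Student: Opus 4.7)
The strategy is to identify $\inf\sigma_{\rm ess}(\hh_\kappa)$ via HVZ, and then treat (ii) and (i) separately using a ground-state substitution $u(x,y)=\phi_\kappa(x)v(x,y)$, where $\phi_\kappa\in L^2(\R^2)$ is the normalized positive ground state of $-\Delta+\kappa V_{\rm ctr}$, with eigenvalue $E_{\rm hyd}(\kappa)<0$. The three cluster channels are: (a) $y\to\infty$ leaves $-\Delta_x+\kappa V_{\rm ctr}(x)$, bottom $E_{\rm hyd}(\kappa)$; (b) $x\to\infty$ leaves $-\Delta_y-\kappa V_{\rm ctr}(y)\geq 0$ (since $-\kappa V_{\rm ctr}\geq 0$), bottom $0$; (c) joint escape with $x-y$ bounded reduces to the free exciton $-\tfrac12\Delta_r+V_{\rm ctr}(r)$, bottom $E_{\rm exc}<0$. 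Hence $\inf\sigma_{\rm ess}(\hh_\kappa)=\min\{E_{\rm hyd}(\kappa),E_{\rm exc}\}$, and since $E_{\rm hyd}(\kappa)\to-\infty$ monotonically as $\kappa\to\infty$ while $E_{\rm exc}$ is fixed, for $\kappa$ beyond some $\kappa_\star$ we have $\inf\sigma_{\rm ess}(\hh_\kappa)=E_{\rm hyd}(\kappa)$.

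For (ii), restricting the Rayleigh quotient of $\hh_\kappa$ to $\{\phi_\kappa(x)\psi(y):\psi\in C_c^\infty(\R^2)\}$ yields, after integration in $x$,
\[
E_{\rm hyd}(\kappa)+\frac{\lef\psi,(-\Delta_y+W_\kappa)\psi\rig}{\|\psi\|^2},\qquad W_\kappa(y):=-\kappa V_{\rm ctr}(y)+(\phi_\kappa^2\ast V_{\rm ctr})(y).
\]
Using $V_{\rm ctr}(r)=-1/r+O(r^{-2})$ as $r\to\infty$ together with the exponential decay of $\phi_\kappa$, one obtains $W_\kappa(y)=(\kappa-1)/|y|+O(|y|^{-2})$ as $|y|\to\infty$. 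For $\kappa<1$ this is attractive Coulomb in $2d$, and $-\Delta_y+W_\kappa$ has infinitely many negative eigenvalues accumulating at $0$ via dyadic-annulus Weyl sequences. When additionally $E_{\rm hyd}(\kappa)\leq E_{\rm exc}$, min-max delivers infinitely many eigenvalues of $\hh_\kappa$ strictly below $\inf\sigma_{\rm ess}(\hh_\kappa)=E_{\rm hyd}(\kappa)$. A Gaussian variational comparison between $E_{\rm hyd}(\kappa)$ and $E_{\rm exc}$ exhibits a non-empty sub-interval of $(1/2,1)$ on which both $\kappa<1$ and $E_{\rm hyd}(\kappa)\leq E_{\rm exc}$ hold.

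For (i), I would choose $\kappa$ so large that $\inf\sigma_{\rm ess}(\hh_\kappa)=E_{\rm hyd}(\kappa)$ and then prove $\hh_\kappa\geq E_{\rm hyd}(\kappa)$ as a quadratic form, which rules out discrete eigenvalues. Under $u=\phi_\kappa v$,
\[
\lef u,\hh_\kappa u\rig-E_{\rm hyd}(\kappa)\|u\|^2=\int_{\R^4}\phi_\kappa^2(x)\bigl[|\nabla v|^2+Q_\kappa(x,y)|v|^2\bigr]\,dx\,dy,\ Q_\kappa:=-\kappa V_{\rm ctr}(y)+V_{\rm ctr}(|x-y|).
\]
Split $v(x,y)=\bar v(y)+v^\perp(x,y)$ with $\int\phi_\kappa^2(x)v^\perp(x,y)\,dx=0$ for a.e.\ $y$. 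The $\bar v$-piece becomes $\int[|\nabla_y\bar v|^2+W_\kappa|\bar v|^2]\,dy\geq 0$ once $W_\kappa\geq 0$ pointwise; the latter holds for $\kappa$ large by a rearrangement bound $(\phi_\kappa^2\ast|V_{\rm ctr}|)(y)\leq|V_{\rm ctr}(y)|$ outside the $\kappa^{-1}$-localization scale of $\phi_\kappa$, while inside this small region the log singularity of $-\kappa V_{\rm ctr}(y)$ dominates. The $v^\perp$-piece is bounded below using the spectral gap $\gamma_\kappa=E_2(\kappa)-E_{\rm hyd}(\kappa)\asymp\kappa^2$ of the ground-state weighted Laplacian $v\mapsto -\phi_\kappa^{-2}\nabla_x\cdot(\phi_\kappa^2\nabla_x v)$, which absorbs the log-singular contribution of $V_{\rm ctr}(|x-y|)$; cross-terms are handled by Cauchy--Schwarz with a small parameter.

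The principal obstacle is the $v^\perp$ estimate in (i): the log singularity of $V_{\rm ctr}(|x-y|)$ admits no $2d$ Hardy inequality, so absorbing it into $\gamma_\kappa\asymp\kappa^2$ demands a scale-dependent Sobolev or Orlicz-type bound on balls of radius $\sim\kappa^{-1}$, carefully tracking the self-similar rescaling of $\phi_\kappa$ as $\kappa\to\infty$. The secondary difficulty, in (ii), is locating a non-empty window inside $(1/2,1)$ on which $E_{\rm hyd}(\kappa)\leq E_{\rm exc}$, which requires quantitative upper/lower bounds for the $2d$ log-Coulomb ground-state energies of $-\Delta+\kappa V_{\rm ctr}$ and $-\tfrac12\Delta+V_{\rm ctr}$.
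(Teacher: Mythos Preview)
Your overall architecture matches the paper's: HVZ for the essential spectrum, a ground-state decomposition $u=\varphi_\kappa(x)\psi(y)+f$ (equivalently your $v=\bar v+v^\perp$), an attractive Coulomb tail in the effective $y$-potential for (ii), and a spectral-gap argument for the $f$-piece in (i). But two of your scaling claims are wrong, and they are exactly the ones you would need to make the $v^\perp$-estimate go through.

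\medskip

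\textbf{The gap is $\asymp\kappa$, not $\asymp\kappa^2$, and the localization scale is $\kappa^{-1/2}$, not $\kappa^{-1}$.} Near the origin $V_{\rm ctr}(x)\sim\log|x|$, not $-1/|x|$; the hydrogenic $\kappa^2$ scaling does not apply. Under $x\mapsto\kappa^{-1/2}x$ one finds
\[
-\Delta+\kappa V_{\rm ctr}(x)\ \cong\ \kappa\bigl(-\Delta+\log|x|\bigr)-\tfrac{\kappa}{2}\log\kappa+o(\kappa),
\]
so the eigenvalue spacings of $-\Delta+\kappa V_{\rm ctr}$ are $\kappa(E_j-E_1)+o(\kappa)$ where $E_1<E_2<\cdots$ are the eigenvalues of the confining operator $-\Delta+\log|x|$. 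The paper makes this the organizing principle: it scales first, setting $A_\kappa=\kappa^{-1}\mathcal U_\kappa^*\hh_\kappa\,\mathcal U_\kappa$ and $a_\kappa=-\Delta+V_{\rm ctr}(\kappa^{-1/2}x)$, and proves (Lemma~\ref{lem2d-1}) that $a_\kappa+\tfrac12\log\kappa$ converges to $-\Delta+\log|x|$. In this scaled picture the gap is a \emph{fixed} positive number $E_2-E_1$, while the electron-hole term becomes $\kappa^{-1}V_{\rm ctr}(\kappa^{-1/2}|x-y|)$, which is relatively form-bounded by the Laplacian with constant $O(\kappa^{-1}\log\kappa)$ (Lemma~\ref{lem-vc}). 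That is what kills the log singularity; no Orlicz machinery on balls of radius $\kappa^{-1}$ is needed. Your unscaled route is equivalent in principle, but if you try to implement it with $\gamma_\kappa\asymp\kappa^2$ and localization $\kappa^{-1}$, the bookkeeping will not close. The pointwise bound $W_\kappa\geq 0$ that you need for the $\bar v$-piece is proved in the paper as Lemma~\ref{lem-conv}, in the form $(\varphi_\kappa^2\ast|V_{\rm ctr}|)\leq C\,|V_{\rm ctr}|$; it is not a rearrangement inequality but a direct estimate using the uniform exponential and pointwise decay of $\varphi_\kappa$ (Lemma~\ref{lem-exp}).

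\medskip

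\textbf{The ``secondary difficulty'' in (ii) is not a difficulty.} You propose a Gaussian variational comparison of $E_{\rm hyd}(\kappa)$ and $E_{\rm exc}$ to locate a window in $(1/2,1)$. The paper avoids this entirely: Proposition~\ref{prop-es} shows by two one-line scaling inequalities that the crossover value $k_e$ (where $\Lambda_1(k_e,V)=\Lambda_0(V)$) always lies in $(1/2,1)$, so $[k_e,1)$ is automatically non-empty and on it $\inf\sigma_{\rm ess}(\hh_\kappa)=E_{\rm hyd}(\kappa)$. Your Coulomb-tail computation $W_\kappa(y)\sim(\kappa-1)/|y|$ then finishes (ii) exactly as in Proposition~\ref{prop-infty}.
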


\begin{figure}
\includegraphics[width=0.70\linewidth]{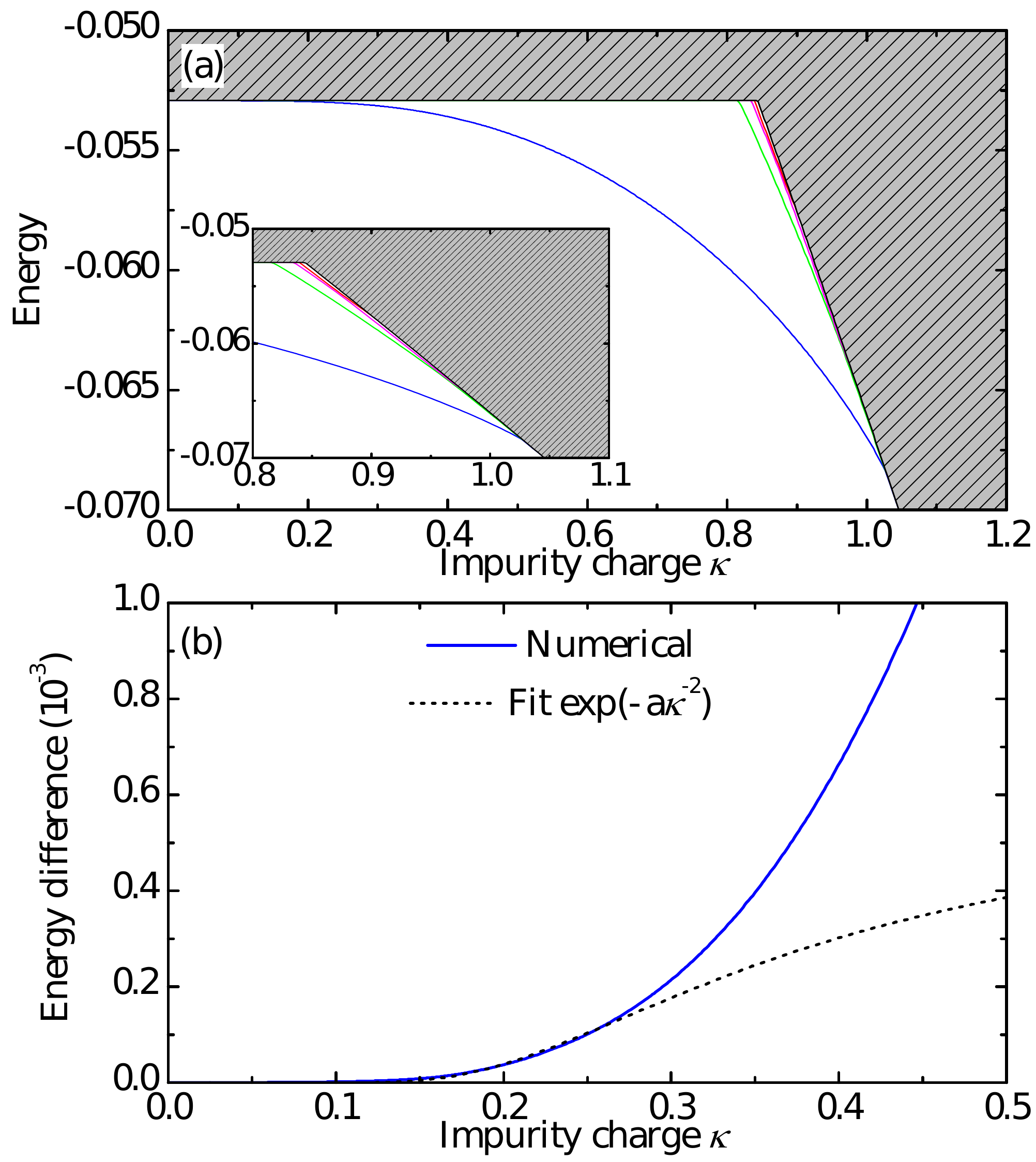}
\caption[Numerics]{\label{fig:1}%
(a) Numerical eigenvalues computed from expansion in a Gaussian basis. The hatched area illustrates the continuous spectrum while the colored lines are discrete eigenvalues. The inset is a zoom near the critical region. (b) Separation of the fundamental discrete eigenvalue from the continuum (blue solid line) and comparison with an analytical fit (black dashed line).}
\end{figure}%

\subsection{Numerical results}
To illustrate and support the exact finding of the present work, we now analyze numerically a concrete model of impurity-bound excitons in $d=2$. To this end, we apply the full Keldysh potential
\begin{equation} \label{v-keldysh}
V(x) = \frac{\pi}{2r_0}\left\{H_0\left(\frac{|x|}{r_0}\right) - Y_0\left(\frac{|x|}{r_0}\right)\right\}, 
\end{equation}
where $H_0$ and $Y_0$ are Struve and Bessel functions, respectively, and we take $r_0=20$. We expand eigenstates in a Gaussian basis 
$\psi_{nmp}=\exp(-\alpha_n x^2-\beta_m y^2-\gamma_p (x-y)^2)$ with exponents between $0$ and $7$. A total of 320 basis functions are used in the expansion. In the unperturbed case, $\kappa=0$, an exciton binding energy  $\Lambda_0(V)\sim -0.0529$ is found (see \eqref{lambda0} for the definition of $\Lambda_0(V)$). This value also gives the lower bound of the essential spectrum for small $\kappa$. In Fig. 1a, the continuum is illustrated by the hatched area. As $\kappa$ is increased above $k_e\sim 0.844$, the bottom of the essential spectrum  is given by the two-body electron-impurity complex instead (see also \eqref{lambda0} for the definition of $\Lambda_1(\kappa,V)$).

In Proposition \ref{prop-es} we show that for the more general class of potentials we consider, the value of $k_e$ always lies in the interval $(1/2,1)$.

As illustrated by the colored lines in Fig. 1a, discrete eigenstates exist when $0<\kappa<\kappa_c\approx 1.029$. Only a single discrete eigenvalue (marked by the blue line) exists in the entire range $0<\kappa<\kappa_c$ with the others only emerging above a certain lower critical value $\tilde{\kappa}_c$, e.g. $\tilde{\kappa}_c\approx 0.815$ for the second eigenvalue (shown in green).

It is particularly interesting to investigate the $\kappa$-dependence of the fundamental discrete eigenvalue shown in blue in Fig. 1a. Hence, in Fig. 1b, we have shown the difference between this state and the bottom of the continuum. It is immediately clear from the plot that this energy difference has a very weak $\kappa$-dependence in the asymptotic limit $\kappa\rightarrow 0$. In the figure, we have fitted the numerical behavior to the analytical form $\Delta E=A \exp(-a\kappa^{-2})$. A rather satisfactory fit is observed for $\kappa\lesssim 0.25$.

The rigorous analysis of the small $\kappa$ behaviour will be done elsewhere, but let us give a hand-waving argument for why one should expect a binding energy which goes like $\exp(-a\kappa^{-2})$. The explanation is that our operator is somehow similar with a one-body $2d$-Schr\"odinger operator with a potential $\kappa W$ where $\int_{\R^2}W(x)dx=0$. Thus the perturbation is effectively of order $\kappa^2$. Up to a Birman-Schwinger argument, and knowing that the resolvent of the free Laplacian in $2d$ has a logarithmic  threshold behavior, one expects to have a bound state $\lambda<0$, $|\lambda|\ll 1$,  which obeys an estimate of the form $\log(-\lambda)\kappa^2\sim -1$, \cite{si2}.

\subsection{The structure of the paper}

After the Introduction, in Section \ref{sect-2} we identify the essential spectrum of this class of operators, a result which is valid for both dimensions. In Section \ref{sect-3} we treat the one-dimensional case, while in Section \ref{sect-4} we deal with $2d$. We end with an Appendix.

\vspace{0.2cm}

\section{\bf Preliminaries}\label{sect-2}

\subsection{The essential spectrum.} 

\begin{prop} \label{prop-es}
Let $V\geq 0$ be non-zero and assume that $V\in L^p(\R^d) +L^\infty_\eps(\R^d)$ with $p=1$ if $d=1,$ and $p>1$ if $d=2$. Then there exists $k_e \in \big (\frac{1}{2}, 1\big )$ such that 
\begin{equation}  \label{lambda-k}
\sigma_{\rm es} (\HH_{\kappa,\lambda}(V)) = [\Lambda(\kappa,V), \infty),  \qquad \forall\, \lambda >0,
\end{equation}
where 
\begin{equation} \label{ke}
\Lambda(\kappa,V) = \begin{cases}
\inf \sigma (-\Delta - V(x-y))  
  & \text{if }\quad \kappa < k_e 
  \,,
  \\
\inf \sigma (-\Delta - \kappa V(x))   
  & \text{if }  \, \quad  \kappa \geq k_e
  \,.
\end{cases}
\end{equation}
\end{prop}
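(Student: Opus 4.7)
My strategy is an HVZ-type identification of $\sigma_{es}(\HH_{\kappa,\lambda}(V))$ followed by a variational comparison of the two relevant cluster thresholds in order to locate $k_e$.

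First I would enumerate the two-cluster channels at infinity. Because the impurity is fixed and $+\lambda V(y)$ is repulsive for $\lambda>0$, only two channels can bind energy at infinity. In the \emph{impurity--electron channel} the hole $y$ escapes and $x$ is trapped by $-\kappa V(x)$, with threshold $\mu(\kappa) := \inf\sigma(-\Delta - \kappa V)$ on $L^2(\R^d)$. In the \emph{exciton channel} the pair drifts off together while $r = x-y$ remains bounded; the change of variables $R = (x+y)/2,\ r = x-y$ turns $-\Delta_x - \Delta_y - V(x-y)$ into $-\tfrac{1}{2}\Delta_R - 2\Delta_r - V(r)$, so the threshold equals $\inf\sigma(-2\Delta - V)$ on $L^2(\R^d)$. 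A standard trial-function argument using $V \geq 0$, $V \not\equiv 0$ gives $\mu(\kappa) < 0$ for every $\kappa > 0$ in $d = 1, 2$, so both channels dominate the trivial $0$ threshold coming from configurations in which every particle flies away freely. To establish $\sigma_{es} = [\Lambda,\infty)$ with $\Lambda = \min\{\mu(\kappa),\, \inf\sigma(-2\Delta - V)\}$, one combines Weyl sequences --- of the form $\phi(x)\chi_n(y)$ for the impurity channel and $\phi(x-y)\chi_n((x+y)/2)$ for the exciton channel, with $\chi_n$ a translating $H^1$ bump and the discarded interactions vanishing because $V \in L^p + L^\infty_\eps$ decays at infinity --- with an IMS--Persson localization via a smooth partition of unity on $\R^{2d}$ adapted to the three regions in which respectively (a) all of $|x|, |y|, |x-y|$ are bounded, (b) $|y| > R$, and (c) $|x-y| > R$ with $|x| \leq R$. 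On each outer region only one group of interactions survives, so the quadratic form satisfies $q_{\kappa,\lambda}[u] \geq \Lambda(\kappa, V)\, \|u\|^2 - o_R(1)$ for $u$ supported outside a large ball, which gives the lower bound on $\inf \sigma_{es}$ after $R \to \infty$.

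Second, to locate $k_e$ I would observe that $\kappa \mapsto \mu(\kappa)$ is continuous and strictly decreasing wherever $\mu(\kappa) < 0$. At $\kappa = 1/2$, since $\mu(1/2) < 0$,
$$
\mu(1/2) > 2\mu(1/2) = \inf\sigma(-2\Delta - V),
$$
so the exciton threshold is strictly lower. At $\kappa = 1$, letting $\phi$ be the normalized ground state of $-2\Delta - V$,
$$
\mu(1) \leq \langle \phi, (-\Delta - V)\phi \rangle = \langle \phi, (-2\Delta - V)\phi \rangle - \|\nabla \phi\|^2 = \inf\sigma(-2\Delta - V) - \|\nabla \phi\|^2,
$$
which is strictly less than $\inf\sigma(-2\Delta - V)$ because $\phi$ is a nonconstant $H^1$ function. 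By the intermediate value theorem there is a unique $k_e \in (1/2, 1)$ with $\mu(k_e) = \inf\sigma(-2\Delta - V) = \inf\sigma(-\Delta - V(x-y))$, and this immediately yields the case distinction \eqref{ke}.

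The main technical obstacle is the IMS--Persson step: the three overlapping outer regions must correctly separate the exciton channel from the impurity channel, and one must verify that the $L^p + L^\infty_\eps$ hypothesis produces $o_R(1)$ errors in the quadratic form uniformly for test functions supported outside a large ball. The exciton region $\{|x|, |y| \to \infty,\, |x-y| \leq R\}$ is geometrically slab-like and requires more care than the other two, but the decay of $V$ at infinity still suffices because only $V(x)$ and $V(y)$ --- not $V(x-y)$ --- need to become negligible there.
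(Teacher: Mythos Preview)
Your proposal is correct and follows essentially the same route as the paper: identify the essential spectrum via HVZ as $[\min\{\Lambda_0(V),\Lambda_1(\kappa,V)\},\infty)$, reduce $\Lambda_0(V)$ to $\inf\sigma(-2\Delta - V)$ by the center-of-mass change of variables, and then bracket the crossing value $k_e$ in $(1/2,1)$ using the scaling identity $\Lambda_1(1/2,V)=\tfrac{1}{2}\Lambda_0(V)$ and the operator inequality $-2\Delta - V > -\Delta - V$. The only noteworthy difference is that the paper simply invokes the HVZ theorem by citation, whereas you sketch a direct Weyl-sequence/IMS--Persson proof; your sketch is correct in spirit, though the partition you describe would need to be formulated more carefully (it should be a genuine cluster partition covering all asymptotic regimes, including the one where both $|x|$ and $|x-y|$ are large), but since the HVZ theorem under your hypotheses is standard this is not a genuine gap.
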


\begin{proof}
Let 
$$
\Lambda_0(V) = \inf \sigma (-\Delta - V(x-y))  , \quad \text{and} \quad \Lambda_1(\kappa,V)  = \inf \sigma (-\Delta - \kappa V(x))  \, .
$$
Since $V\geq 0$, the HVZ-theorem (see e.g.~\cite{si}) implies that \eqref{lambda-k} holds true with 
\begin{equation} \label{eq-hvz}
\Lambda(\kappa,V) =  \min \left\{\Lambda_0(V), \Lambda_1(\kappa,V) \right\}. 
\end{equation}
By introducing the new variables $s= x-y$ and $t= \frac{x+y}{2}$ we find that $-\Delta - V(x-y)$ is unitarily equivalent to the operator 
$
-2 \Delta_s -\frac 12\, \Delta_t -V(s) 
$
in $L^2(\R^{2d})$. Hence if $\kappa>0$ we have
\begin{equation} \label{lambda0}
\Lambda_0(V) = \inf \sigma(-2 \Delta_s  -V(s) )< 0, \quad \Lambda_1(\kappa,V) = \inf \sigma(-\Delta_x  -\kappa V(x) ) < 0,
\end{equation}
where the strict inequalities follow from the fact that $V\geq 0, V\neq 0$ and $d\leq 2$. On the other hand, $\Lambda_1(0,V) =0$ and standard spectral theory arguments show that $\Lambda_1(\cdot,V)$ is a continuously decreasing function of $\kappa$ which obeys $\Lambda_1(\kappa,V) \to -\infty$ as $\kappa\to\infty$. This implies that there exists a unique $k_e>0$ for  which $\Lambda_0(V)=\Lambda_1(k_e,V)$. Now if $\kappa\geq 1$ we have the inequalities 
$$
-2\Delta_s -V(s)> -\Delta_s -V(s)\geq -\Delta_x -\kappa V(x),
$$
thus in view of equations \eqref{eq-hvz} and \eqref{lambda0} we conclude that $k_e<1$.  Also, if $0<\kappa\leq 1/2$ we have 
$$
0>\Lambda_1(\kappa,V)\geq \Lambda_1(2^{-1},V)=2^{-1} \inf \sigma(-2\Delta -V)=2^{-1}\Lambda_0(V)>\Lambda_0(V),
$$
which shows that $k_e>1/2$.
 
\end{proof}

\begin{rem}
The exact value of $k_e$ depends on the profile of $V$. If $d=1$ and $V$ is replaced by a Dirac delta quadratic form, then $k_e=\frac{1}{\sqrt{2}}$, see \cite{hkpc}. 
\end{rem}

\section{\bf  Proofs in the one-dimensional case}\label{sect-3}

\subsection{Auxiliary results}

\noindent To prove Theorem  \ref{thm-strong} we will need several auxiliary results.  Let us introduce a scaling function 
$\U_\kappa: L^2(\R^2) \to L^2(\R^2)$ 
given by 
\begin{equation}€ \label{eq-U}
(\U_\kappa\,  f)(x,y) =  \kappa^{\frac 14} f\big( \kappa^{\frac 14} x,  \kappa^{\frac 14} y)\, .
\end{equation}
Then $\U_\kappa$ maps $L^2(\R^2)$ unitarily onto itself.  We define the operator 
\begin{equation} \label{ham-scaled-def} 
\h_{\kappa,\lambda} = \frac{1}{\sqrt{\kappa}}\, \U^*_\kappa\,  \HH_{\kappa,\lambda}(v)\, \U_\kappa
\end{equation} 
in $L^2(\R^2)$. Obviously 
\begin{equation} \label{equiv}
 \sigma_d( \HH_{\kappa,\lambda}(v)) = \emptyset \quad \Longleftrightarrow \quad  \sigma_d(\h_{\kappa,\lambda}) = \emptyset.
\end{equation} 

\medskip

\noindent Next we define the operator
\begin{equation} \label{h-1d}
h_\kappa = -\pd_x^2 -\sqrt{\kappa}\ v( \kappa^{-\frac 14}\, x) 
\end{equation}
in $L^2(\R)$. Clearly 
\begin{equation} 
\sigma_{es} (h_\kappa) = [0,\infty) \quad \forall\ \kappa >0,
\end{equation}
and a simple calculation shows that 
\begin{equation} \label{ham-scaled} 
\h_{\kappa,\lambda} =  h_\kappa -\pd_y^2 +\frac{\lambda}{\sqrt{\kappa}}\, v(\kappa^{-\frac 14}\, y) - \frac{1}{\sqrt{\kappa}}\ v\left(\kappa^{-\frac 14}\, (x-y)\right) .
\end{equation} 
Similarly as above we use the notation 
$$
\h_{\kappa,\kappa} =\h_\kappa \, .
$$

\vspace{0.2cm}

\noindent Now we turn our attention to the case of large $\kappa$. Let
\begin{equation} \label{beta}
\omega= \sqrt{-\frac{v''(0)}{2}}\, .
\end{equation}
We have

\begin{lem} \label{lem-1}
Let $E_1(\kappa)$ and $E_2(\kappa)>E_1(\kappa)$ be the two lowest eigenvalues of $h_\kappa$. Then 
\begin{equation} \label{gap}
\lim_{\kappa\to\infty} (E_j(\kappa) +\sqrt{\kappa}\ v(0) ) = (2j-1)\, \omega, \qquad j=1,2\, .
\end{equation} 
\end{lem}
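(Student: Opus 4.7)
The plan is to treat $h_\kappa$ as a deep-well Schr\"odinger operator whose low-lying spectrum is well approximated, after an energy shift, by a harmonic oscillator near the minimum of $-v$. Set
\[
W_\kappa(x) := \sqrt{\kappa}\,\bigl(v(0) - v(\kappa^{-1/4} x)\bigr) \geq 0,
\]
so that $\tilde h_\kappa := h_\kappa + \sqrt{\kappa}\, v(0) = -\pd_x^2 + W_\kappa$. The lemma is equivalent to showing that the $j$-th eigenvalue of $\tilde h_\kappa$ converges to $(2j-1)\omega$ for $j=1,2$; the upper bound established below will yield $E_j(\kappa) < 0$ for $\kappa$ large, so these are indeed discrete eigenvalues below the essential spectrum $[0,\infty)$ of $h_\kappa$. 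Since $v$ is even and $v\in C^3_0(\R)$, Taylor's theorem gives
\[
W_\kappa(x) = \omega^2 x^2 + s_\kappa(x),\qquad |s_\kappa(x)| \leq C\, \kappa^{-1/4}|x|^3 \ \text{ for all } |x|\leq \delta_0\kappa^{1/4},
\]
for fixed $\delta_0, C>0$; meanwhile Assumption \ref{ass-v}(3) combined with the compact support of $v$ yields $W_\kappa(x) \geq c_0\sqrt{\kappa}$ for $|x|\geq \delta_0\kappa^{1/4}$ with some $c_0>0$.

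For the upper bound I would use the normalized Hermite eigenfunctions $\phi_1, \phi_2$ of the limiting harmonic oscillator $H_\infty := -\pd_x^2 + \omega^2 x^2$ (with eigenvalues $\omega$ and $3\omega$) as trial functions. Splitting the integral $\int (W_\kappa - \omega^2 x^2)\, \phi_i \phi_j\, dx$ at $|x|=\kappa^{1/8}$, the cubic remainder bound in the inner region, together with the super-polynomial Gaussian decay of $\phi_i\phi_j$ in the outer region, gives
\[
\langle \phi_i, \tilde h_\kappa \phi_j\rangle \to (2i-1)\omega\, \delta_{ij} \quad \text{as } \kappa\to\infty,\qquad i,j\in\{1,2\}.
\]
The variational principle then yields $E_j(\kappa) + \sqrt{\kappa}\, v(0) \leq (2j-1)\omega + o(1)$ for $j=1,2$.

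For the matching lower bound I would employ an IMS localization with cutoff $\chi_\kappa(x) = \chi(x/\kappa^\alpha)$ and $\eta_\kappa := \sqrt{1-\chi_\kappa^2}$, for a fixed $\alpha\in(0,1/12)$ and $\chi\in C_0^\infty(\R)$ with $\chi=1$ on $[-1,1]$ and $\chi=0$ outside $[-2,2]$. On $\supp(\chi_\kappa)$ the cubic Taylor error satisfies $|s_\kappa|\lesssim \kappa^{3\alpha-1/4}=o(1)$, so $\chi_\kappa \tilde h_\kappa \chi_\kappa \geq \chi_\kappa H_\infty \chi_\kappa - o(1)\chi_\kappa^2$; outside $\{|x|\leq \kappa^\alpha\}$ one has $W_\kappa \gtrsim \kappa^{2\alpha}$, while the localization error $|\chi_\kappa'|^2 + |\eta_\kappa'|^2$ is $O(\kappa^{-2\alpha})$. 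Combining,
\[
\tilde h_\kappa \geq \chi_\kappa H_\infty \chi_\kappa + c\, \kappa^{2\alpha}\eta_\kappa^2 - o(1).
\]
Given any $j$-dimensional subspace $L_j \subset H^1(\R)$ appearing in the min-max formula for $\tilde h_\kappa$, I would argue by dichotomy: either some unit $\psi\in L_j$ has $\|\eta_\kappa\psi\|^2 \geq \kappa^{-\alpha}$, in which case $\langle \psi,\tilde h_\kappa\psi\rangle \geq c\, \kappa^\alpha\to\infty$; or every such $\psi$ satisfies $\|\chi_\kappa\psi\|^2 \geq 1-\kappa^{-\alpha}$, so $\chi_\kappa|_{L_j}$ is nearly isometric and its image $\chi_\kappa(L_j)$ is $j$-dimensional for $\kappa$ large. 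Applying the min-max principle for $H_\infty$ on $\chi_\kappa(L_j)$ then produces $\psi\in L_j$ with $\langle \chi_\kappa\psi, H_\infty \chi_\kappa\psi\rangle \geq (2j-1)\omega\,\|\chi_\kappa\psi\|^2$, whence $\langle \psi, \tilde h_\kappa \psi\rangle \geq (2j-1)\omega - o(1)$. Taking the infimum over $L_j$ gives $E_j(\kappa) + \sqrt{\kappa}\, v(0) \geq (2j-1)\omega - o(1)$ for $j=1,2$.

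The main obstacle is the careful bookkeeping in the lower bound: balancing the cubic Taylor error $\kappa^{-1/4}|x|^3$ in the inner region against the growth of $W_\kappa$ outside (which fixes the admissible localization scale $\kappa^\alpha$ with $\alpha<1/12$), and then transferring min-max from $\tilde h_\kappa$ to $H_\infty$ through the near-isometry $\chi_\kappa|_{L_j}$ with errors uniform in $L_j$.
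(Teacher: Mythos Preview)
Your argument is correct. Both upper and lower bounds are carried out carefully, and the choice of localization scale $\kappa^\alpha$ with $\alpha<1/12$ precisely balances the cubic Taylor remainder against the localization error, while the dichotomy in the min-max step is the right way to pass from the IMS lower bound to the harmonic-oscillator eigenvalues.

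Your route, however, is genuinely different from the paper's. The paper does not argue variationally: it builds an approximate resolvent
\[
S_\kappa = \widetilde\chi(\kappa^{-\eps}x)\,(-\pd_x^2+\omega^2x^2-i)^{-1} + (1-\widetilde\chi(\kappa^{-\eps}x))\,(-\pd_x^2+\Delta_\kappa(x)-i)^{-1}
\]
by gluing the harmonic-oscillator resolvent near the origin with the resolvent of a uniformly large potential outside, shows $(h_\kappa+\sqrt{\kappa}\,v(0)-i)S_\kappa=\id+T_\kappa$ with $\|T_\kappa\|\to 0$, and concludes \emph{norm-resolvent convergence} of $h_\kappa+\sqrt{\kappa}\,v(0)$ to $-\pd_x^2+\omega^2x^2$. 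The eigenvalue convergence then follows at once. Your IMS/min-max proof is more elementary and entirely self-contained, but it gives only what the lemma states. The paper's approach is heavier yet yields more: norm-resolvent convergence immediately delivers the strong convergence $\psi_\kappa\to\phi_1$ of the ground states, which the paper then exploits in the proof of Lemma~\ref{lem-2} (the uniform exponential decay of $\psi_\kappa$). If you pursue your line, you would need a separate argument for that eigenfunction convergence when it is needed later.
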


\begin{proof}
Let $\chi:\R\to [0,1]$ and $ \widetilde\chi:  \R\to [0,1]$ be two $C_0^\infty$ functions such that
$$
\chi(x) = 1 \ \ \  \forall\ x\in[-1,1], \qquad \chi(x) = 0 \ \ \ \forall\, x: \ |x| >2,
$$
and
$$
\widetilde\chi(x) = 1 \ \ \  \forall\ x\in[-3,3], \qquad \widetilde\chi(x) = 0 \ \  \ \forall\, x: \ |x| >4\, . 
$$ 
Take $0<\eps < \frac{1}{16}$ and define 
\begin{align} \label{s-kappa}
S_\kappa & = \widetilde\chi(\kappa^{-\eps} x) \left(-\pd_x^2 +\omega^2 x^2 -i\right)^{-1} + \left(1- \widetilde\chi(\kappa^{-\eps} x )\right)  \left(-\pd_x^2 +\Delta_\kappa(x) -i \right)^{-1} ,
\end{align}
where
$$
\Delta_\kappa(x) = \sqrt{\kappa}\, (v(0) -v ( \kappa^{-\frac 14}\, x) ) + \kappa^{2\eps}\, \chi(\kappa^{-\eps} x)\, .
$$
 Let
\begin{align} \label{eq-comm}
C_\kappa & := \big [h_\kappa\, ,\,  \widetilde\chi(\kappa^{-\eps} x) \big ] = -\kappa^{-2\eps}\, \widetilde\chi^{\, \prime\prime}(\kappa^{-\eps} x) - 2 \kappa^{-\eps} \, \widetilde\chi^{\, \prime}(\kappa^{-\eps} x)\, \pd_x  .
\end{align}
Using the fact that 
$$
\left(1- \widetilde\chi\left( \frac{x}{\kappa^{\eps}}\right)\right) \, \chi(\kappa^{-\eps} x) = 0 \qquad\forall\ x\in\R\, ,
$$
we obtain 
\begin{align}
(h_\kappa +\sqrt{\kappa}\, v(0)-i) S_\kappa  & = \widetilde\chi\left( \frac{x}{\kappa^{\eps}}\right)(h_\kappa +\sqrt{\kappa}\, v(0)-i) (-\pd_x^2 +\omega^2 x^2 -i)^{-1} \nonumber    \\
&  \quad+ \left(1- \widetilde\chi\left( \frac{x}{\kappa^{\eps}}\right)\right)(h_\kappa +\sqrt{\kappa}\, v(0)-i)  \left(-\pd_x^2 +\Delta_\kappa(x) -i \right)^{-1} \nonumber \\
&  \quad+ C_\kappa \left( (-\pd_x^2 +\omega^2 x^2 -i)^{-1} + (-\pd_x^2 +\Delta_\kappa(x) -i)^{-1}\right) \nonumber \\
& = \widetilde\chi\left( \frac{x}{\kappa^{\eps}}\right)(h_\kappa +\sqrt{\kappa}\, v(0)-i) (-\pd_x^2 +\omega^2 x^2 -i)^{-1} 
 + \left(1- \widetilde\chi\left( \frac{x}{\kappa^{\eps}}\right)\right) \nonumber \\
& \quad + C_\kappa \left( (-\pd_x^2 +\omega^2 x^2 -i)^{-1} + (-\pd_x^2 +\Delta_\kappa(x) -i)^{-1}\right) \label{comm1} \, .
\end{align} 
 From Taylor's formula with remainder applied to $v$, given any $x\in\R$ one can find $t_x\in \R $ such that 
\begin{equation} \label{taylor}
v(0) -v( \kappa^{-\frac 14}\, x) = \omega^2 x^2 \kappa^{-\frac 12}  + \frac{v'''(t_x)}{6}\ x^3 \kappa^{-\frac 34} \, . 
\end{equation}
This together with the definition of $\chi$ implies that there exists $c_1>0$, independent of $\kappa$, such that
\begin{equation} \label{d-lowerb}
\Delta_\kappa(x) \, \geq\, c_1\, \kappa^{2\eps}
\end{equation}
holds for all $x\in\R$ and all $\kappa\geq 1$, see Lemma \ref{lem-app}. Hence in view of \eqref{eq-comm} 
$$
 \|\, C_\kappa \left( (-\pd_x^2 +\omega^2 x^2 -i)^{-1} + (-\pd_x^2 +\Delta_\kappa(x) -i)^{-1}\right) \|_{L^2(\R)\to L^2(\R)}  \ \leq \ c_2 \, \kappa^{-\eps}
$$
holds for all $\kappa\geq 1$ and some $c_2>0$ independent of $\kappa$. 
To control the remaining term in \eqref{comm1} we use again the expansion \eqref{taylor} and note that 
$$
h_\kappa +\sqrt{\kappa}\, v(0) = -\pd_x^2 +\omega^2 x^2 +\frac{v'''(t_x)}{6}\, x^3\, \kappa^{-\frac 14}\, ,
$$
which implies 
\begin{align*}
 \widetilde\chi\left( \frac{x}{\kappa^{\eps}}\right)(h_\kappa +\sqrt{\kappa}\, v(0)-i) (-\pd_x^2 +\omega^2 x^2 -i)^{-1}  & =  \widetilde\chi\left( \frac{x}{\kappa^{\eps}}\right) + T_{\kappa,1}
 \end{align*}
where
$$
\| T_{\kappa,1}\|_{L^2(\R)\to L^2(\R)}   \ \leq\  c_3\, \kappa^{-\eps}, \qquad \kappa\geq 1, 
$$
and $c_3>0$ is a constant independent of $\kappa$.
Putting the above estimates together we conclude that 
\begin{equation}
(h_\kappa +\sqrt{\kappa}\, v(0)-i)\, S_\kappa = \id + T_\kappa, 
\end{equation} 
with $T_\kappa$ satisfying the estimate
\begin{equation} \label{t-upperb}
 \| T_\kappa\|_{L^2(\R)\to L^2(\R)}   \ \leq\  (c_2+c_3) \, \kappa^{-\eps} \qquad \forall\ \kappa\geq 1\, . 
\end{equation} 
Hence for $\kappa$ large enough the operator $\id + T_\kappa$ is invertible and the Neumann series for $(\id + T_\kappa)^{-1}$ shows that
\begin{equation} \label{hs}
\lim_{\kappa\to\infty}\,  \|(h_\kappa +\sqrt{\kappa}\, v(0)-i)^{-1} - S_\kappa \|_{L^2(\R)\to L^2(\R)} =0 .
\end{equation} 
On the other hand, since the multiplication operator $\widetilde\chi(\kappa^{-\eps} x)-1$ converges strongly to zero in $L^2(\R)$ as $\kappa\to\infty$ and $   (-\pd_x^2  +\omega^2 x^2 -i)^{-1}$ is compact, it follows that 
$$
\big\|\, \widetilde\chi(\kappa^{-\eps} x)   (-\pd_x^2 +\omega^2 x^2 -i)^{-1}  -(-\pd_x^2 +\omega^2 x^2 -i)^{-1} \big\|_{L^2(\R)\to L^2(\R)} \ \to \ 0 
$$
as $\kappa\to\infty$. Hence in view of \eqref{s-kappa}  and \eqref{d-lowerb} 
\begin{equation} \label{S-osc}
\lim_{\kappa\to\infty}\,  \|\, S_\kappa - (-\pd_x^2 +\omega^2 x^2 -i)^{-1} \|_{L^2(\R)\to L^2(\R)}= 0. 
\end{equation} 
This in combination with \eqref{hs} implies that $h_\kappa +\sqrt{\kappa}\, v(0)$ converges in the norm resolvent sense to 
$-\pd_x^2 +\omega^2 x^2$ and the claim follows.
\end{proof}

\begin{lem} \label{lem-2} 
Let $\psi_\kappa$ be the positive eigenfunction of $h_\kappa$ associated to the eigenvalue $E_1(\kappa)$ and normalized such that $\|\psi_\kappa\|=1$ for all $\kappa>0$. Then there exist $\alpha>0$ and $\kappa_2\geq1$ such that 
\begin{equation} \label{exp}
\sup_{\kappa\geq \kappa_2} \int_\R e^{2\alpha\sqrt{1+x^2}}\ |\psi_\kappa(x)|^2\, dx < \infty.
\end{equation}
\end{lem}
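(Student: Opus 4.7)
The plan is to apply the Agmon method to the rescaled one-dimensional operator $h_\kappa$. Fix a small parameter $\alpha>0$, take the weight $\phi(x) = \alpha\sqrt{1+x^2}$, and note that $\phi'(x)^2 \leq \alpha^2$ pointwise. Conjugating the eigenvalue equation $h_\kappa \psi_\kappa = E_1(\kappa)\psi_\kappa$ by $e^{\phi}$ and integrating by parts yields the classical Agmon identity
$$\int_{\R} \bigl|(e^{\phi}\psi_\kappa)'(x)\bigr|^2\,dx + \int_{\R} F_\kappa(x)\, e^{2\phi(x)}\psi_\kappa(x)^2\,dx = 0,$$
where, after adding and subtracting $\sqrt{\kappa}\,v(0)$,
$$F_\kappa(x) = \sqrt{\kappa}\bigl(v(0) - v(\kappa^{-1/4}x)\bigr) - \bigl(E_1(\kappa) + \sqrt{\kappa}\,v(0)\bigr) - \phi'(x)^2 .$$

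The crux of the proof is a pointwise lower bound on $F_\kappa$ that is uniform in large $\kappa$. By Lemma~\ref{lem-1}, $|E_1(\kappa) + \sqrt{\kappa}\,v(0)| \leq 2\omega$ once $\kappa$ is sufficiently large, so the last two terms of $F_\kappa$ are bounded below by $-(2\omega+\alpha^2)$. For the first term, Taylor's formula together with Assumption~\ref{ass-v}(3) furnishes constants $\delta,c_0>0$ with $v(0)-v(s) \geq c_0 s^2$ for $|s|\leq\delta$, so whenever $|\kappa^{-1/4}x|\leq \delta$ one has $\sqrt{\kappa}\,(v(0)-v(\kappa^{-1/4}x)) \geq c_0 x^2$. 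On the complementary range $|\kappa^{-1/4}x|>\delta$, the compact support of $v$ and the strict maximum at $0$ yield $\mu :=\inf_{|s|\geq\delta}(v(0)-v(s))>0$, and therefore $\sqrt{\kappa}\,(v(0)-v(\kappa^{-1/4}x))\geq \sqrt{\kappa}\,\mu\to\infty$. Combining the two regimes, there exist $R>0$ depending only on $\alpha, c_0, \omega$, and a threshold $\kappa_2\geq 1$, such that $F_\kappa(x)\geq 1$ for all $|x|\geq R$ and all $\kappa\geq \kappa_2$; while on $[-R,R]$ one has the crude uniform bound $F_\kappa(x) \geq -(2\omega+\alpha^2)$, since the first term of $F_\kappa$ is non-negative.

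Inserting these bounds into the Agmon identity, dropping the non-negative kinetic term, and splitting the integral into $|x|\leq R$ and $|x|>R$ yields
$$\int_{|x|>R} e^{2\phi(x)}\psi_\kappa(x)^2\,dx \;\leq\; (2\omega+\alpha^2)\int_{|x|\leq R} e^{2\phi(x)}\psi_\kappa(x)^2\,dx \;\leq\; (2\omega+\alpha^2)\,e^{2\alpha\sqrt{1+R^2}}\|\psi_\kappa\|_{L^2(\R)}^2,$$
and combined with the trivial estimate on $|x|\leq R$ this delivers \eqref{exp} with constants independent of $\kappa \geq \kappa_2$. The only technicality is justifying the Agmon identity, since a priori it is not clear that $e^{\phi}\psi_\kappa \in H^1(\R)$; this is handled either by observing that for each fixed $\kappa\geq\kappa_2$ the potential $-\sqrt{\kappa}v(\kappa^{-1/4}\cdot)$ has compact support so that standard ODE asymptotics give $\psi_\kappa$ a decay rate $\sqrt{-E_1(\kappa)}\to\infty$, whence $e^{\phi}\psi_\kappa\in H^1(\R)$ for any fixed $\alpha>0$ once $\kappa$ is large, or alternatively by replacing $\phi$ with the bounded truncation $\phi_n(x):=\min(\phi(x),n)$ and passing to the limit $n\to\infty$ by monotone convergence. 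The hard part is therefore the quantitative pointwise bound on $F_\kappa$, which relies crucially both on the non-degeneracy $v''(0)<0$ in Assumption~\ref{ass-v} and on the sharp asymptotics of $E_1(\kappa)$ supplied by Lemma~\ref{lem-1}.
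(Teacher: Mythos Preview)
Your proof is correct and takes a genuinely different route from the paper's. You use the classical Agmon energy identity, deriving a pointwise lower bound on the effective potential $F_\kappa$ and then splitting the integral at a fixed radius $R$. The paper instead conjugates the \emph{resolvent}: it shows that $e^{\alpha\sqrt{1+(\cdot)^2}}(h_\kappa-z)e^{-\alpha\sqrt{1+(\cdot)^2}}=h_\kappa-z+W_\alpha$ remains boundedly invertible on a small circle $\Gamma$ about $E_1(\kappa)$ by Kato's perturbation theorem, deduces a uniform bound on $e^{\alpha\sqrt{1+(\cdot)^2}}P_\kappa e^{-\alpha\sqrt{1+(\cdot)^2}}$ via the Riesz contour integral, and finally exploits the strong convergence $\psi_\kappa\to\phi_1$ (from the norm-resolvent convergence in Lemma~\ref{lem-1}) to write $\psi_\kappa=P_\kappa(\chi_R\psi_\kappa)/\langle\psi_\kappa,\chi_R\psi_\kappa\rangle$ with a denominator bounded away from zero.

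Your argument is more elementary and self-contained: it needs only the eigenvalue asymptotics from Lemma~\ref{lem-1} and the quadratic non-degeneracy of $v$ at zero, and avoids resolvent machinery, Kato perturbation, and the eigenfunction convergence altogether. Your pointwise bound on $F_\kappa$ is essentially the content of Lemma~\ref{lem-app} in the Appendix, which the paper uses elsewhere but not here. The paper's approach, in turn, is more template-driven and is reused verbatim for the two-dimensional analogue in Lemma~\ref{lem-exp}; it also yields the result for \emph{some} small $\alpha>0$ via an abstract relative-boundedness estimate rather than an explicit potential comparison. Both methods rest on Lemma~\ref{lem-1} in an essential way.
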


\begin{proof}
For any $z\in\C$ and $f\in C_0^\infty(\R)$ we have
$$
e^{\alpha\sqrt{1+(\cdot)^2}}\, (h_\kappa-z) \, e^{-\alpha\sqrt{1+(\cdot)^2}} f\, =( h_\kappa -z + W_\alpha) f, 
$$
where
\begin{equation}
W_\alpha := \frac{2\alpha x}{\sqrt{1+x^2}}\ \pd_x + \frac{\alpha}{(1+x^2)^{3/2}} - \frac{\alpha^2 x^2}{1+x^2}\, .
\end{equation} 
This shows that for any $u\in D(h_\kappa)= H^2(\R)$ and every $\alpha\in (0,1)$
\begin{equation}\label{W-upperb}
\|W_\alpha \, u\|^2 \ \leq \ 4\alpha^2\, \|\pd_x u\|^2 + 4\alpha^2\, \|u\|^2\, .
\end{equation}
In particular, this shows that $h_\kappa + W_\alpha$ is closed on the domain of $h_\kappa$. Next we define the curve 
\begin{equation}
\Gamma := \{ z\in\C:\, |z+ \sqrt{\kappa}\, v(0)-\omega| = \omega  \}
\end{equation}
By Lemma \ref{lem-1} there exists $\kappa_\omega$ such that 
\begin{equation} 
\sup_{z\in\Gamma}\,  \sup_{\kappa\geq \kappa_\omega}\, \|(h_\kappa -z )^{-1} \|\  < \ \infty. 
\end{equation}
On the other hand, since $h_\kappa + \sqrt{\kappa}\, v(0) \geq -\pd_x^2\, $ in the sense of quadratic forms, for any $z\in\Gamma$ we have 
\begin{align*}
\big | \lef (h_\kappa-z) u, u \rig_{L^2(\R)} \big | &\ \geq \ {\rm Re\, } \lef (h_\kappa-z) u, u \rig_{L^2(\R)} \ \geq\ \|\pd_x u\|^2 -2\omega \|u\|^2.
\end{align*}
This in combination with \eqref{W-upperb} implies that 
\begin{align*}
\|W_\alpha\, u\|^2  & \ \leq \ 4\alpha^2 (1+2\omega)\, \|u\|^2 + 4\alpha^2 \big | \lef (h_\kappa-z) u, u \rig_{L^2(\R)} \big | \\
& \ \leq \ 4\alpha^2 (1+2\omega)\, \|u\|^2 + 4\alpha^2\, \|(h_\kappa-z) u\|\, \|u\| \\
& \ \leq \ 8\alpha^2 (1+\omega)\, \|u\|^2 + 2\alpha^2\, \|(h_\kappa-z) u\|^2\, 
\end{align*}
for all $z\in\Gamma$. Hence by \cite[Thm.~IV.1.16]{ka} there exists $\alpha \in(0,1)$ small enough such that the operator 
$ h_\kappa + W_\alpha-z 
$
is invertible for all $z\in\Gamma$ and all $\kappa\geq\kappa_\omega$, with a bounded inverse. Then one can prove the identity:
$$
\,\, (h_\kappa-z)^{-1} \, e^{-\alpha\sqrt{1+(\cdot)^2}}=  e^{-\alpha\sqrt{1+(\cdot)^2}}(h_\kappa + W_\alpha-z)^{-1} \, ,
$$
which shows that 
\begin{equation}\label{sup-sup}
\sup_{z\in\Gamma}\,  \sup_{\kappa\geq \kappa_\omega}\, \big\| e^{\alpha\sqrt{1+(\cdot)^2}}\, (h_\kappa -z )^{-1}\, e^{-\alpha\sqrt{1+(\cdot)^2}}\ \big\|\ < \infty. 
\end{equation} 
Now denote by 
\begin{equation} \label{pk}
P_\kappa = \psi_\kappa \lef\, \cdot\, , \psi_\kappa \rig_{L^2(\R)}
\end{equation}
the projection on the eigenspace of $h_\kappa$ associated to $E_1(\kappa)$. Then by Lemma \ref{lem-1} and equation \eqref{sup-sup} 
\begin{align} \label{kappa-omega}
\sup_{\kappa\geq \kappa_\omega}\,  \big\| e^{\alpha\sqrt{1+(\cdot)^2}}\, P_\kappa\, e^{-\alpha\sqrt{1+(\cdot)^2}}\ \big\| & \ \leq \ \sup_{\kappa\geq \kappa_\omega}\, \frac{1}{2\pi} \oint_\Gamma  \big\| e^{\alpha\sqrt{1+(\cdot)^2}}\, (h_\kappa -z )^{-1}\, e^{-\alpha\sqrt{1+(\cdot)^2}}\ \big\|\, dz \ < \infty. 
\end{align}
To continue we denote by $\phi_1$ the normalized ground state of the harmonic oscillator  $-\pd_x^2 + \omega^2 x^2$. Let $\chi_R$ be the characteristic function of the interval $[-R,R]$. Since $\|\phi_1\|_{L^2(\R)}=1$, there exists $R$ large enough such that $\lef \phi_1,  \chi_R\, \phi_1\rig_{L^2(\R)} \geq 3/4$. On the other hand, from the proof of Lemma \ref{lem-1}, see equations \eqref{hs} and \eqref{S-osc}, it follows that $\psi_\kappa$ converges strongly to $\phi_1$ in $L^2(\R)$ as $\kappa\to\infty$. Therefore 
$$
\lef \psi_\kappa,  \chi_R\, \psi_\kappa\rig_{L^2(\R)}\  \geq \frac 12
$$
holds true for all $\kappa \geq \kappa_R$, where $\kappa_R$ depends only on the (fixed) value of $R$. Writing 
$$
\psi_\kappa = \frac{ P_\kappa (\psi_\kappa \, \chi_R)}{\lef \psi_\kappa,  \chi_R\, \psi_\kappa\rig_{L^2(\R)}}
$$
we thus conclude with the estimate
\begin{align*}
 \|\, e^{\alpha\sqrt{1+x^2}}\, \psi_\kappa\|_{L^2(\R)} & \leq 2\,   \big\| \, e^{\alpha\sqrt{1+(\cdot)^2}}\, P_\kappa\, e^{-\alpha\sqrt{1+(\cdot)^2}}\, \big\|\,  \big\| e^{\alpha\sqrt{1+(\cdot)^2}} \chi_R\, \psi_\kappa\, \big\|_{L^2(\R)} \\
  & \leq 2  \, e^{\alpha\sqrt{1+R^2}} \, \big\|\,  e^{\alpha\sqrt{1+(\cdot)^2}}\, P_\kappa\, e^{-\alpha\sqrt{1+(\cdot)^2}}\, \big\| \, ,
\end{align*}
which holds for all  $\kappa \geq \kappa_R$. Hence in view of \eqref{kappa-omega} 
$$
\sup_{\kappa\geq \kappa_2} \  \|\, e^{\alpha\sqrt{1+x^2}}\, \psi_\kappa\|_{L^2(\R)}  \, < \infty, 
$$ 
where $ \kappa_2 = \max\{\kappa_R, \kappa_\omega\}$. 
\end{proof}


\vspace{0.2cm}

\subsection{ Proof of Theorem \ref{thm-strong}}
We will prove the absence of discrete eigenvalues of $\h_{\kappa, \lambda}$ when $\kappa$ is larger than some critical value depending on $\lambda>1$. By Proposition \ref{prop-es} we have
\begin{equation}
\sigma_{\rm es} (\h_{\kappa, \lambda}) = [E_1(\kappa), \infty) \qquad \forall \ \kappa \geq 1, \ \forall\ \lambda>0. 
\end{equation}
Hence from \eqref{ham-scaled} and  perturbation theory the claim follows if we can show that
\begin{equation} \label{enough} 
\h_{\kappa, \lambda} -z \quad \text{is  invertible} \qquad \forall \ z \in \Big [ E_1(\kappa) -\frac{v(0)}{\sqrt{\kappa}}\, , \, E_1(\kappa) \Big)\, .
\end{equation}
Define the projection $\Pi_\kappa$  on $L^2(\R^2)$ by 
$$
\Pi_\kappa = P_\kappa \otimes \id_y, 
$$
where $P_\kappa$ is given by \eqref{pk} and $\id_y$ denotes the identity operator in $L^2(\R)$. Let $\Pi_\kappa^\perp = \id -\Pi_\kappa$.  
Then, according to the Feshbach-Schur formula \cite{schur}, \eqref{enough} is equivalent to proving that 
 for all 
\begin{equation} \label{z-int}
 z \in \Big [ E_1(\kappa) -\frac{v(0)}{\sqrt{\kappa}}\, , \, E_1(\kappa)   \Big),  
\end{equation}
the operator
\begin{equation} \label{ff-1}
\Pi_\kappa^\perp (\h_{\kappa, \lambda} -z) \Pi_\kappa^\perp  \quad  \text{is  invertible in} \ {\rm Ran}(\Pi_\kappa^\perp),
\end{equation} 
and at the same time, the operator 
\begin{equation} \label{ff-2}
\Pi_\kappa (\h_{\kappa, \lambda} -z) \Pi_\kappa - \Pi_\kappa \h_{\kappa, \lambda} \Pi_\kappa^\perp \big( \Pi_\kappa^\perp \, (\h_{\kappa, \lambda} -z)  \, \Pi_\kappa^\perp\big)^{-1} \, \Pi_\kappa^\perp \h_{\kappa, \lambda} \Pi_\kappa
  \ \   \text{is  invertible in} \ {\rm Ran}(\Pi_\kappa).
\end{equation} 

As for \eqref{ff-1} we note that by \eqref{ham-scaled}  
\begin{align*}
\Pi_\kappa^\perp \, (\h_{\kappa, \lambda} - z) \, \Pi_\kappa^\perp\ \geq \ (E_2(\kappa)-E_1(\kappa) -v(0) \, \kappa^{-\frac 12})\,  \Pi_\kappa^\perp \, , 
\end{align*}
Hence in view of Lemma \ref{lem-1} there exists $C_1>0$, independent of $\kappa$ and $z$, such that 
\begin{equation} \label{ff-1-ok}
\big\| \big( \Pi_\kappa^\perp\, (\h_{\kappa, \lambda} -z) \, \Pi_\kappa^\perp\big)^{-1} \big\|_{{\rm Ran}(\Pi_\kappa^\perp) \to {\rm Ran}(\Pi_\kappa^\perp)} \ \leq \ C_1
\end{equation}
holds for all $\kappa$ large enough and all $z$ satisfying \eqref{z-int}.

\medskip

\noindent In order to prove \eqref{ff-2} we note that 
$$
v(\kappa^{-\frac 14}\, (x-y)) = v(\kappa^{-\frac 14}\, y) + \kappa^{-\frac 14} \int_0^x v'(\kappa^{-\frac 14}\, (t-y))\, dt,
$$
which together with \eqref{ham-scaled} yields 
\begin{equation} 
\Pi_\kappa\, \h_{\kappa, \lambda}\, \Pi_\kappa\,  = -\pd_y^2+E_1(\kappa) + \frac{\lambda-1}{\sqrt{\kappa}}\, v(\kappa^{-\frac 14}\, y) -\kappa^{-\frac 34} \int_\R \psi^2(x) \int_0^x v'(\kappa^{-\frac 14}\, (t-y))\, dt dx .
\end{equation} 
Denote by
\begin{equation} \label{kz}
K_z(x,y,x',y') = \big( \Pi_\kappa^\perp\, (\h_{\kappa, \lambda} -z) \, \Pi_\kappa^\perp\big)^{-1/2}(x,y,x',y')
\end{equation}
the Schwartz integral kernel of $\big( \Pi_\kappa^\perp\, (\h_{\kappa, \lambda} -z) \, \Pi_\kappa^\perp\big)^{-1/2}$. 
To treat the second term in \eqref{ff-2} we introduce the bounded operator $A: L^2(\R) \to L^2(\R^2)$ corresponding to:
\begin{equation} \label{A-def}
(A f)(x,y) = \int_{\R^2} K_z(x,y,x',y')\ v(\kappa^{-\frac 14}\, (x'-y'))\, \psi_\kappa(x') f(y') \, dx' dy', 
\end{equation} 
where the above formal expression has to be first understood as a map from the Schwartz space $S(\R)$ to the dual of $S(\R^2)$, which can afterwards be extended to a bounded operator between $L^2(\R)$ to $L^2(\R^2)$. An important role here is played by the estimate:
$$
\int_{\R^2} | v(\kappa^{-\frac 14}\, (x'-y'))\, \psi_\kappa(x') f(y')| ^2\, dx' dy' \ \leq \ v(0)^2 \int_\R f^2(y')\, dy' . 
$$
Another important observation is that $A$ is also bounded from $L^\infty(\R)$ to $L^2(\R^2)$ due to the inequality:
$$
\int_{\R^2} | v(\kappa^{-\frac 14}\, (x'-y'))\, \psi_\kappa(x') f(y')| ^2\, dx' dy' \ \leq \kappa^{1/4} \|v\|_2^2\|f\|_\infty^2. 
$$

Hence $A$ is bounded uniformly with respect to $z\in [E_1(\kappa) -v(0)/\sqrt{\kappa}\, , \, E_1(\kappa))$ in view of \eqref{ff-1-ok}, and we can write:
$$
\Pi_\kappa\, \h_{\kappa, \lambda} \, \Pi_\kappa^\perp \big( \Pi_\kappa^\perp \, (\h_{\kappa, \lambda} -z)  \, \Pi_\kappa^\perp\big)^{-1} \, \Pi_\kappa^\perp \, \h_{\kappa, \lambda} \, \Pi_\kappa = \frac 1\kappa A^*A\, .
$$
This implies: 
\begin{align*}
 \Pi_\kappa\, (\h_{\kappa, \lambda}-z) \, \Pi_\kappa & - \Pi_\kappa\, \h_{\kappa, \lambda} \, \Pi_\kappa^\perp \big( \Pi_\kappa^\perp \, (\h_{\kappa, \lambda} -z)  \, \Pi_\kappa^\perp\big)^{-1} \, \Pi_\kappa^\perp \, \h_{\kappa, \lambda} \, \Pi_\kappa= \\
 & =  \Pi_\kappa\, (B_{\kappa, \lambda} +E_1(\kappa)-z)\,  \Pi_\kappa\, ,  
\end{align*}
where
\begin{align} \label{hc1}
B_{\kappa, \lambda} &:=
-\pd_y^2+ \frac{\lambda-1}{\sqrt{\kappa}}\, v(\kappa^{-\frac 14}\, y) -\kappa^{-\frac 34} \int_\R \psi^2(x) \int_0^x v'(\kappa^{-\frac 14}\, (t-y))\, dt dx  - \frac 1\kappa A^*A\, .
\end{align}
To prove \eqref{enough} it thus suffices to show that 
\begin{equation} \label{enough-2}
B_{\kappa, \lambda} + \xi \quad \text{is  invertible in} \  \ L^2(\R) \ \  \text{for all} \  \ \xi  \in \big (0, \kappa^{-\frac12}\, v(0) \big] \, .
\end{equation} 
To this end we write 
\begin{align} \label{B-1}
B_{\kappa, \lambda} +\xi = -\pd_y^2 + \xi +a_1\, a_2 +b_1\, b_2 + d_1\, d_2 \, ,
\end{align} 
where 
$$
d_1 =-\frac 1\kappa A^* \, , \qquad d_2 = A, 
$$
and $a_1, a_2, b_1$ and $b_2$ are multiplication operators in $L^2(\R)$ given by 
\begin{align}
a_1(y) &  =  \frac{\lambda-1}{\sqrt{\kappa}}\, \sqrt{v(\kappa^{-\frac 14}\, y)},   \quad a_2 = \sqrt{v(\kappa^{-\frac 14}\, y)}, \quad b_1(y)  = -\kappa^{-\frac 34}  \Big |\, \psi^2(x) \int_0^x v'(\kappa^{-\frac 14}\, (t-y))\, dt dx\Big |^{\frac 12} \nonumber  \\
 b_2(y)  &  =   \Big |\, \psi^2(x) \int_0^x v'(\kappa^{-\frac 14}\, (t-y))\, dt dx\Big |^{\frac 12} \ {\rm sign} \Big(\psi^2(x)\! \int_0^x v'(\kappa^{-\frac 14}\, (t-y))\, dt dx\Big).   \label{ab}
\end{align}
Let
\begin{equation} \label{Q}
Q_{\kappa, \lambda}(\xi) = \id
+
\left( \begin{array}{ccc}
a_2(-\pd_y^2+\xi)^{-1} a_1 & a_2(-\pd_y^2+\xi)^{-1} b_1 & a_2(-\pd_y^2+\xi)^{-1} d_1 \\
b_2(-\pd_y^2+\xi)^{-1} a_1 & b_2(-\pd_y^2+\xi)^{-1} b_1 & b_2(-\pd_y^2+\xi)^{-1} d_1 \\
d_2(-\pd_y^2+\xi)^{-1} a_1 & d_2(-\pd_y^2+\xi)^{-1} b_1 & d_2(-\pd_y^2+\xi)^{-1} d_1 
\end{array} \right)
\end{equation} 
be a matrix-valued operator in $L^2(\R) \oplus L^2(\R) \oplus L^2(\R^2)$. By \eqref{B-1} and the resolvent equation it follows that if 
$Q_{\kappa, \lambda}(\xi)$ is invertible for all $ \xi  \in \big (0, \kappa^{-\frac12}\, v(0) \big]$, then \eqref{enough-2} holds true and 
\begin{align*}
(B_{\kappa, \lambda} +\xi )^{-1} &= (-\pd_y^2+\xi)^{-1}-(-\pd_y^2+\xi)^{-1}\ (a_1, b_1, d_1)\, \big(Q_{\kappa, \lambda}(\xi)\big)^{-1}  (a_2, b_2, d_2)^{\rm T}\, (-\pd_y^2+\xi)^{-1}\, .
\end{align*}
Next we note that $(-\pd_y^2+\xi)^{-1}$ is an integral operator in $L^2(\R)$ with the kernel 
\begin{equation} \label{m}
(-\pd_y^2+\xi)^{-1} (y, y') = \frac{e^{-\sqrt{\xi}\ |y-y'|}}{2 \sqrt{\xi}} =: \frac{1}{2\sqrt{\xi}} + m(y,y'). 
\end{equation} 
Let $M$ be the integral operator in $L^2(\R)$ with the kernel $m(y,y')$ defined above and let 
$$
|\,  \Phi \rig := \Big (a_2(y) , b_2 (y) , \int_\R d_2(x,y;y')\, dy' \Big)^{\rm T} , \quad \lef\Psi\, | := \Big (a_1(y) , b_1 (y) , \int_\R d_1(x;y,y')\, dx \Big), 
$$
where $d_1(x;y, y')$ and $d_2(x,y;y')$ are the integral kernels of $A^*$ and $A$ respectively. The integrals involving the integral kernels make sense because $A$ is bounded from $L^\infty(\R)$ to $L^2(\R^2)$. Equations \eqref{Q} and \eqref{m} then imply that
\begin{equation} \label{Q-2}
Q_{\kappa, \lambda}(\xi) = \id +  R_{\kappa,\lambda}+ \frac{1}{2\sqrt{\xi}} \   |\, \Phi \rig \lef\Psi\, | 
\end{equation} 
holds true with 
\begin{equation} \label{eq-R}
R_{\kappa,\lambda} = 
\left( \begin{array}{ccc}
a_2M a_1 & a_2M b_1 & a_2M d_1 \\
b_2M a_1 & b_2M b_1 & b_2M d_1 \\
d_2M a_1 & d_2M b_1 & d_2M d_1 
\end{array} \right).
\end{equation}
To prove the invertibility of $Q_{\kappa, \lambda}(\xi)$ we first show that $\id+R_{\kappa,\lambda}$ is invertible for $\lambda-1$ sufficiently small, but positive, and $\kappa$ sufficiently large, uniformly with respect to $\xi>0$. To do so we estimate  
the operator norm of all the entries of $R_{\kappa,\lambda}$ keeping in mind that the integral kernel of $M$ satisfies
\begin{equation} \label{m-upperb}
|\, m(y,y')\, | =  \frac{1-e^{-\sqrt{\xi}\ |y-y'|}}{2 \sqrt{\xi}}   \ \leq \ |y-y'| \, .
\end{equation}
To simplify the notation in the sequel we introduce the following shorthands; 
\begin{align} \label{notation}
m_{j,k} & = \int_\R x^j\, v^k(x)\, dx, \qquad   m'_{j,k}  = \int_\R |x|^j\, |v'(x)|^k\, dx, \qquad \mu_j = \sup_{\kappa\geq \kappa_2} \int_\R |x|^j\, \psi_\kappa^2(x)\, dx, 
\end{align}
where $j,k\in\N$ and $\kappa_2$ is given by Lemma \ref{lem-2}. We start with the first column of $R_{\kappa,\lambda}$. Using \eqref{m-upperb} we get
\begin{align*}
\|a_2\, M\, a_1\|_{L^2(\R)\to L^2(\R)}^2 & \ \leq \ \|a_2\, M\, a_1\|_{\rm HS}^2 \ \leq \ \frac{(\lambda-1)^2}{\kappa} \int_{\R^2}\, v(\kappa^{-\frac 14}\, y) \, |y-y'|^2\, v(\kappa^{-\frac 14}\, y')\, dy dy' \\
& \leq \ \frac{(\lambda-1)^2}{\kappa} \int_{\R^2}\, v(\kappa^{-\frac 14}\, y)\,  (2y^2+2y'^2)\, \, v(\kappa^{-\frac 14}\, y')\, dy dy' \\
& = 4(\lambda-1)^2 \, m_{0,1}\, m_{2,1} \, .
\end{align*}
In the same way, using Lemma \ref{lem-app-2}, it follows that 
\begin{align}
\|b_2\, M\, a_1\|_{L^2(\R)\to L^2(\R)}^2 &\  \leq \ \frac{(\lambda-1)^2}{\kappa} \int_{\R^2}\, b_2^2(y)\,  (2y^2+2y'^2)\, \, v(\kappa^{-\frac 14}\, y')\, dy dy' \nonumber \\
& = \ \frac{2 (\lambda-1)^2}{\kappa^{3/4}} \, \left(m_{0,1} \int_{\R}\, b_2^2(y)\, y^2 dy +  m_{2,1} \int_{\R}\, b_2^2(y)\,  dy\right)\nonumber \\
& \leq \ 2 (\lambda-1)^2\, m_{0,1}\, m'_{2,1}\, \mu_1 + \frac{2 (\lambda-1)^2}{\sqrt{\kappa}}\ (m_{0,1}\, m'_{0,1}\, \mu_3+ m_{2,1}\, m'_{2,1}\, \mu_1) \nonumber \\
& \leq \ C_{ba}\, (\lambda-1)^2 \qquad \forall\ \kappa\geq \kappa_2, \label{ba-upperb}
\end{align}
where $C_{ba}$ is a constant independent of $\lambda$ and $\kappa$. Similarly,  
\begin{align*}
\|b_2\, M\, b_1\|_{L^2(\R)\to L^2(\R)}^2 &\  \leq \ \kappa^{-\frac 32} \int_{\R^2}\, b_2^2(y)\,  (2y^2+2y'^2)\, \, b_2^2(y')\, dy dy' \\
& \leq \ C_{bb}\, \kappa^{-\frac 12} \qquad \forall\ \kappa\geq \kappa_2,
\end{align*}
where we used Lemma \ref{lem-app-2} and \eqref{ab}. 

\medskip

\noindent To estimate $d_2M a_1$ we first observe that for any $f\in L^2(\R)$ 
\begin{equation} \label{d2-a1}
(d_2M a_1 f)(x,y) = \frac{\lambda-1}{\sqrt{\kappa}}\, \int_{\R^2} \, K_z(x,y,x',y')\, u(x',y')\, dx' dy',
\end{equation}
where $K_z(x,y,x',y')$ is given by \eqref{kz} and 
$$
u(x',y') = \psi_\kappa(x')  v(\kappa^{-\frac 14}\, (x'-y')) \int_\R m(y', y'') \sqrt{ v(\kappa^{-\frac 14}\, y'')}\ f(y'')\, dy''
$$
Hence by the H\"older inequality and Lemma \ref{lem-app-2} 
\begin{align*}
\|u\|_{L^2(\R^2)}^2 & \leq \ \|f\|_{L^2(\R)}^2\ \int_{\R^2} \psi^2_\kappa(x') \,  v^2(\kappa^{-\frac 14}\, (x'-y')) \left(\int_\R (2y'^2+2y''^2) v(\kappa^{-\frac 14}\, y'') \, dy'' \right)\, dx' dy' \\
& = 2 \|f\|_{L^2(\R)}^2\  \int_{\R^2} \psi^2_\kappa(x') \,  v^2(\kappa^{-\frac 14}\, (x'-y')) (y'^2  \kappa^{\frac 14}\, m_{0,1}+ \kappa^{\frac 34}\, m_{2,1} \ )\, dx' dy' \\
& \leq \ 2 \|f\|_{L^2(\R)}^2 \Big [ 2 \kappa^{\frac 14}\, m_{0,1} (\kappa^{\frac 34}\, m_{2,2} +  \kappa^{\frac 14}\, m_{0,2}\, \mu_2) + \kappa\, m_{2,1}\, m_{0,2} )\Big]
\end{align*}
Therefore, in view of \eqref{ff-1-ok} and \eqref{d2-a1} there exists a constant $C_{da}$, independent of $\kappa$, such that 
\begin{equation} \label{da-upperb}
\|d_2\, M\, a_1\|_{L^2(\R)\to L^2(\R^2)}\  \leq\ C_{da}\, (\lambda-1)\qquad \forall\ \kappa\geq \kappa_2.
\end{equation}
In the same way it follows that
\begin{equation} \label{db-upperb}
\|d_2\, M\, b_1\|_{L^2(\R)\to L^2(\R^2)}\  \leq\ C_{db}\, \kappa^{-\frac 14} \qquad \forall\ \kappa\geq \kappa_2.
\end{equation}
As for the operator  $d_2M d_1: L^2(\R^2)\to L^2(\R^2)$ we note that for any $f\in L^2(\R)$ it holds 
\begin{equation} \label{d2d1-def}
(d_2 \, M\, d_1 f)(x,y) = -\frac 1\kappa\, \big( \Pi_\kappa^\perp\, (\h_{\kappa, \lambda} -z) \, \Pi_\kappa^\perp\big)^{-1/2} \, \psi_\kappa(x) \, v(\kappa^{-\frac 14}\, (x-y))\, u(y),
\end{equation}
where
$$
u(y) = \int_{\R^4} \psi_\kappa(y') \, m(y,y')\, v(\kappa^{-\frac 14}\, (y'-t))\, K_z(y',t,x',s)\, f(x',s)\, dx' ds dt dy'
$$
From \eqref{d2d1-def} and H\"older inequality we then obtain 
\begin{align*}
 & \|d_2\, M\, d_1\|^2_{L^2(\R^2)\to L^2(\R^2)}  \leq \frac{1}{\kappa^2}\ \big\|\big( \Pi_\kappa^\perp\, (\h_{\kappa, \lambda} -z) \, \Pi_\kappa^\perp\big)^{-1/2}\big\|^2   \\
&\qquad \qquad\quad   \int_{\R^4} \psi^2_\kappa(x) \, \psi^2_\kappa(t) \, (s-t)^2\,  v^2(\kappa^{-\frac 14}\, (x-s)) \, v^2(\kappa^{-\frac 14}\, (t-y))\, dx dy ds dt \\
& \qquad \quad =  C\,  m_{0,2}\, \kappa^{-\frac 74} \, \int_{\R^3} \psi^2_\kappa(x) \, \psi^2_\kappa(t) \, (s-x+x-t)^2\,  v^2(\kappa^{-\frac 14}\, (x-s)) \, dx ds dt \\
&\qquad \quad   \leq \ 2\, C\,  m_{0,2}\, \kappa^{-\frac 74}  \, \int_{\R^3} \psi^2_\kappa(x) \, \psi^2_\kappa(t) \, \big((s-x)^2+(x-t)^2\big)\,  v^2(\kappa^{-\frac 14}\, (x-s)) \, dx ds dt \\
& \qquad \quad  = 2\, C\,  m_{0,2}\, \kappa^{-\frac 74}  \, \left(\kappa^{\frac 34} \, m_{2,2} + \int_{\R^3} \psi^2_\kappa(x) \, \psi^2_\kappa(t) \, (x^2+t^2)\,  v^2(\kappa^{-\frac 14}\, (x-s)) \, dx ds dt \right) \\
&  \qquad \quad = 2\, C\,  m_{0,2}\, \kappa^{-\frac 74} \, \left(\kappa^{\frac 34} \, m_{2,2} +2 \kappa^{\frac 14}\, m_{0,2}\, \mu_2\right)\, ,
\end{align*}
where we have used Lemma \ref{lem-app-2} again and the fact that $\psi_\kappa(\cdot)$ is even. Hence there exists a constant $C_d$ such that 
\begin{equation} 
 \|d_2\, M\, d_1\|_{L^2(\R^2)\to L^2(\R^2)}\ \leq \ C_d\, \kappa^{-\frac 12} \qquad \forall \ \kappa\geq \kappa_2\, .
\end{equation}
Concerning the remaining entries of $R_{\kappa,\lambda}$, we notice that by duality, \eqref{ab} and \eqref{da-upperb}
$$
\|a_2\, M\, d_1\|_{L^2(\R^2)\to L^2(\R)} = \frac{1}{(\lambda-1)\, \sqrt{\kappa}}\ 
\|\, d_2\, M\, a_1\|_{L^2(\R)\to L^2(\R^2)}\  \leq\ C_{da}\, \kappa^{-\frac 12} \qquad \forall\ \kappa\geq \kappa_2.
$$
Similarly it follows from \eqref{ba-upperb} and \eqref{db-upperb} that
$$
\|a_2\, M\, b_1\|_{L^2(\R)\to L^2(\R)} =  \frac{1}{(\lambda-1)\, \kappa^{1/4} }\  \|b_2\, M\, a_1\|_{L^2(\R)\to L^2(\R)} \ \leq \ C_{ba}\, \kappa^{-\frac 14} \qquad \forall\ \kappa\geq \kappa_2,
$$
and
$$
\|b_2\, M\, d_1\|_{L^2(\R^2)\to L^2(\R)} = \kappa^{-\frac 14}  \|d_2\, M\, b_1\|_{L^2(\R)\to L^2(\R^2)} \ \leq \ C_{bb}\, \kappa^{-\frac 12} \qquad \forall\ \kappa\geq \kappa_2.
$$
Putting together the above estimates we conclude that 
\begin{equation}\label{R-upperb0}  
 \|R_{\kappa,\lambda} \| \ \leq \ C_R\, (\, |\lambda-1| + \kappa^{-\frac 14} ) \qquad \forall  \lambda  >1, \ \ \forall \ \kappa\geq \kappa_2, 
\end{equation}
hold for some $C_R>0$, 
where the norm of $R_{\kappa,\lambda}$ is calculated on $L^2(\R) \oplus L^2(\R) \oplus L^2(\R^2)$. Hence there exists $1<\lambda_0<2$ (which has to be chosen close enough to $1$) and some $\kappa_0 \geq \kappa_2$ (independent of $\lambda_0$), such that 
\begin{equation}\label{R-upperb}  
 \|R_{\kappa,\lambda} \| \ \leq \ \frac 12 \qquad \forall \ \lambda \in (1, \lambda_0), \ \ \forall \ \kappa\geq \kappa_0
\end{equation} 
For these values of $\lambda$ and $\kappa$ the operator $\id + R_{\kappa,\lambda} $ is invertible, uniformly in $\xi$, and 
\eqref{Q-2} becomes:
\begin{align}\label{hc2}
Q_{\kappa,\lambda}(\xi)=\left (\id +  \frac{1}{2\sqrt{\xi}} \   |\, \Phi \rig \lef\Psi\, | (\id +  R_{\kappa,\lambda})^{-1}\right )(\id +  R_{\kappa,\lambda}),
\end{align}
hence we reduced the invertibility of $Q_{\kappa,\lambda}(\xi)$ to the one of 
$$\id +  \frac{1}{2\sqrt{\xi}} \   |\, \Phi \rig \lef\Psi\, | (\id +  R_{\kappa,\lambda})^{-1}.$$
After a second Feshbach-Schur reduction with respect to the projection on the vector $|\Phi\rangle$, we notice that this operator is invertible if and only if the function 
\begin{align}\label{hc3}
f_{\kappa,\lambda}(\xi):=1+\frac{1}{2\sqrt{\xi}}\lef \Psi\, , \, (\id +R_{\kappa,\lambda})^{-1}\Phi\rig, \quad  \xi  \in \big (0, \kappa^{-\frac12}\, v(0) \big] 
\end{align}
is never zero. The Neumann series for $ (\id +R_{\kappa,\lambda})^{-1}$ in combination with \eqref{R-upperb} gives
\begin{align}
\lef \Psi\, , \, (\id +R_{\kappa,\lambda})^{-1}\Phi\rig  &  =   \lef \Psi\, ,  \Phi\rig + \sum_{n=1}^\infty (-1)^n \lef \Psi\, , \, R^n_{\kappa,\lambda} \, \Phi\rig 
\nonumber \\ 
& \geq  \lef \Psi\, ,  \Phi\rig - \| \Phi\|\, \|\Psi\|\, \frac{\|R_{\kappa,\lambda} \| }{1-\|R_{\kappa,\lambda} \|} \nonumber\\
& \geq  \lef \Psi\, ,  \Phi\rig - 2\, \| \Phi\|\, \|\Psi\|\, \|R_{\kappa,\lambda} \|,  \label{neumann}
\end{align} 
where all the scalar products and norms are calculated on $L^2(\R) \oplus L^2(\R) \oplus L^2(\R^2)$. Equation \eqref{ff-1-ok} and a straightforward computation show that 
\begin{align}
 \lef \Psi\, ,  \Phi\rig & = (\lambda-1) \, \kappa^{-\frac14} \ m_{0,1} - \frac{1}{\kappa}\ {\rm tr} (A^*A)\label{phi-psi}\,  \geq \ (\lambda-1) \, \kappa^{-\frac14}\ m_{0,1} - C_1\, \kappa^{-\frac 34} \ m_{0,2},
\end{align} 
and that there exists a constant $C_0$ such that 
\begin{equation}
\| \Phi\|\, \|\Psi\|\, \leq \ C_0 \sqrt{\kappa^{-\frac 12}\ (\lambda-1)^2 + \kappa^{-1} }  \qquad \forall\ \kappa \geq \kappa_0.
\end{equation} 
Hence if we set 
\begin{equation}  \label{kc}
 \varkappa_c(\lambda) = \max \big\{ \kappa_0, (\lambda-1)^{-4} \, \big\} ,
\end{equation} 
then equations \eqref{R-upperb0}, \eqref{neumann} and \eqref{phi-psi} imply 
$$
\kappa^{1/4}\lef \Psi\, , \, (\id +R_{\kappa,\lambda})^{-1}\Phi\rig \geq  (\lambda-1) \, m_{0,1} +\mathcal{O}((\lambda-1)^2) \qquad \forall\ \lambda \in (1, \lambda_0),\  \ \forall\ \kappa \geq  \varkappa_c(\lambda).
$$
This shows that there exists $0<\lambda_c <\lambda_0<2$ such that
$$\lef \Psi\, , \, (\id +R_{\kappa,\lambda})^{-1}\Phi\rig\geq 0,
  \qquad \forall\ \lambda \in (1, \lambda_c), \  \ \ \ \forall\ \kappa \geq  \varkappa_c(\lambda).
$$ 
Thus $f_{\kappa,\lambda}(\xi)$ in \eqref{hc3} is never zero if $\lambda>1$ is close enough to $1$ and, at the same time, $\kappa$ is larger than some $\lambda$-dependent critical value. Since the number of discrete eigenvalues of $\h_{\kappa,\lambda}$ is non-increasing with respect to $\lambda$, we obtain the claim of the theorem for all $\lambda>1$. 
\qed

\vspace{0.2cm}

\subsection{Proof of Corollary \ref{cor-1}}

\noindent We know from Proposition \ref{prop-es}  that  $\HH_\kappa(v)$ and $\HH_{\kappa,3/2}(v)$ have the same essential spectrum if $\kappa>1$.  Due to Theorem \ref{thm-strong}, the discrete spectrum of $\HH_{\kappa,3/2}(v)$ is empty if $\kappa$ is larger than some critical value. Since $\HH_\kappa(v)\geq \HH_{\kappa,3/2}(v)$ for all $\kappa\geq 3/2$, the result follows from the min-max principle. 
\qed

\vspace{0.2cm}

\subsection{Proof of Proposition \ref{prop-well}}

We are interested in the case when $k>k_e$.
Let $\mathfrak{h}_{\kappa w} $ be the operator in $L^2(\R)$ given by 
\begin{equation}  \label{h-frak}
\mathfrak{h}_{\kappa w} = -\pd_x^2 -\kappa\, w(x)\,,
\end{equation} 
and let $e_w(\kappa)<0$ be its lowest eigenvalue. 
In view of Proposition \ref{prop-es} we have 
\begin{equation} \label{es-ex}
\sigma_{\rm es}(\HH_\kappa) = [e_w(\kappa), \infty), \qquad \forall\, \kappa \geq k_e. 
\end{equation}
We will construct a test function $u$ in the form 
\begin{equation} 
u(x,y) = \varphi_\kappa(x) \, f(y),
\end{equation}
where $\varphi_\kappa(x)$ is a normalized eigenfunction of the operator $\mathfrak{h}_{\kappa w} $ associated to its lowest eigenvalue $e_w(\kappa)$, and 
$$
f(y) = \begin{cases}
0
  & \text{if}\quad y\leq 1
  \\
y-1 & \text{if}\quad 1 < y < 2   \\

\exp(4-2y)  & \text{if} \quad  2 \leq y
  \,.
\end{cases}
$$
Integration by parts then shows that 
\begin{align}
Q_w[u] & :=  \lef u, \, \HH_\kappa(w)\,  u\rig_{L^2(\R^2)} -e_w(\kappa)\, \|u\|^2 \nonumber \\
&= \int_{\R^2}  \varphi^2_\kappa(x) |f'(y)|^2\, dx dy +\kappa \int_\R w(y) f^2(y)\, dy - \int_{\R^2} w(x-y) \, \varphi^2_\kappa(x) f^2(y)\, dx dy\nonumber
\\
& = 2 - \int_{\R^2} w(x-y) \, \varphi^2_\kappa(x) f^2(y)\, dx dy\,  \leq 2 - w_0 \int_1^{2} (y-1)^2 \int_{y-1}^{y+1}  \varphi^2_\kappa(x)\, dx dy \nonumber  \\
& \leq  2 -  w_0  \int_0^{1} \int_{1}^{x+1}(y-1)^2 \varphi^2_\kappa(x)\, dydx = 2 -  \frac{w_0}{3} \int_0^{1}   \varphi^2_\kappa(x)\, x^3\, dx .  \label{Q-fact-2}
\end{align}
On the other hand, an explicit calculation yields
\begin{equation}
 \varphi_\kappa(x) =  \begin{cases}
C_\kappa \cos (\beta_\kappa x) 
  & \text{for }\quad |x| \leq 1
  \,,
  \\
 D_\kappa \, e^{-|x|  \omega_\kappa\, } 
  & \text{for }  \, \quad  |x| > 1
  \, ,
\end{cases}
\end{equation} 
where 
\begin{equation} \label{beta-omega}
\beta_\kappa = \sqrt{\kappa\, w_0 +e_w(\kappa)} \ , \qquad \omega_\kappa = \sqrt{-e_w(\kappa)}\, , 
\end{equation} 
and $C_\kappa$ and $D_\kappa$ are constants satisfying 
\begin{align} 
C_\kappa   \cos (\beta_\kappa) & = D_\kappa \, e^{-\omega_\kappa} \label{cd-1} \\ 
\beta_\kappa\, C_\kappa   \sin (\beta_\kappa) & = \omega_\kappa\, D_\kappa \, e^{-\omega_\kappa}  \label{cd-2}.
 \end{align}
 The last two equations imply that $e_w(\kappa)$ is given by the smallest solution to the implicit equation 
\begin{equation} \label{eq-ompl}
\frac{\sqrt{-e_w(\kappa)}}{\beta_\kappa} = \tan (\beta_\kappa)\, ,
\end{equation} 
where
\begin{equation} \label{beta-eq}
0 < \  \beta_\kappa \ < \frac\pi 2.
\end{equation} 
From equations \eqref{Q-fact-2}, \eqref{beta-eq} and the elementary inequality $\sin^2 (x) \leq x^2$ we then obtain the upper bound
\begin{align}
Q_w[u] &\ \leq \ 2 -  \frac{w_0}{3}\, C_\kappa^2 \int_0^1   \left(1-\sin^2(\beta_\kappa x)\right) x^3\, dx \ \leq \ 
2 -  \frac{w_0}{3}\, C_\kappa^2 \int_0^{\frac 2\pi}   \left(1-\sin^2(\beta_\kappa x)\right) x^3\, dx  \nonumber\\
& = 2 -  \frac{w_0}{3}\, C_\kappa^2 \left( \int_0^{\frac 2\pi}  x^3\, dx - \beta_\kappa^2 \int_0^{\frac 2\pi}  x^5\, dx\right)\nonumber\ \\
& \leq 2 -\frac{4 w_0}{9 \pi^4}\, C_\kappa^2\, . \label{Q-upprb1}
\end{align} 
To prove that $Q_w[u]$ is negative for $w_0$ large enough we thus need a lower bound on $C_\kappa$ independent of $\kappa$.
The condition $\|\varphi_\kappa\|=1$ gives
$$
 C^2_\kappa \left(1+\beta^{-1}_\kappa \cos (\beta_\kappa) \sin (\beta_\kappa) \right) + \frac{D^2_\kappa}{\omega_\kappa} \, e^{-2\omega_\kappa} = 1 ,
$$
which in view of equations \eqref{beta-omega} and \eqref{cd-1}-\eqref{cd-2} implies 
\begin{equation} \label{ckappa-1}
C^2_\kappa \left[1+ \cos (\beta_\kappa) \sin (\beta_\kappa)\frac{\kappa\, w_0}{\beta_\kappa\, \omega^2_\kappa } \right] =1 .
\end{equation} 
Using $\sin(\beta_\kappa)/\beta_\kappa\leq 1$ in the above expression together with the identity $\omega_\kappa^2=-e_w(\kappa)$ we have:
\begin{equation} \label{ckappa-2}
1 \leq C^2_\kappa \left (1-\frac{\kappa\, w_0}{e_w(\kappa)} \right) .
\end{equation} 
 To continue we have to estimate $e_w(\kappa)$ from above. The  choice of the test function 
$$
\psi(x) =  \begin{cases}
1
  & \text{for }\quad |x| \leq 1
  \,,
  \\
  e^{(1-|x|) \kappa w_0}
  & \text{for }  \, \quad  |x| > 1
  \,.
\end{cases}
$$ 
gives 
$$
e_w(\kappa) \leq \frac{\langle \psi, \mathfrak{h}_{\kappa w} \psi\rangle}{\|\psi\|^2}= -\frac{\kappa w_0}{2+ \frac{1}{\kappa w_0}} \, < 0.
$$
Hence we get the upper bound
$$
- \frac{\kappa\, w_0}{e_w(\kappa)} \, \leq \, 2+ \frac{1}{\kappa w_0} \, .
$$
This in combination with \eqref{ckappa-2} leads to
$$C_\kappa^2\geq \left (3+\frac{1}{\kappa w_0}\right )^{-1},\quad \forall \kappa\geq k_e.$$
Because $k_e > 1/2$ (see Proposition \ref{prop-es}), the above estimate implies $$C_\kappa^2\geq \left (3+\frac{2}{ w_0}\right )^{-1},\quad \forall \kappa\geq k_e.$$
 Inserting this back into \eqref{Q-upprb1} leads to:
$$
Q_w[u] \, \leq   2 - \frac{4}{9 \pi^4}\, \frac{w_0^2}{3 w_0 + 2}\qquad \forall\, \kappa \geq  k_e. 
$$
Hence $Q_w[u] <0$ for $w_0$ large enough, uniformly in $\kappa \geq  k_e$, which ends the proof. 
\qed


\section{\bf Proofs in the two-dimensional case} \label{sect-4}

\noindent We introduce the scaling function  
\begin{equation}\label{eq-u2}
(\mathcal{U}_\kappa\,  f)(x,y) =  \kappa\,  f\big( \kappa^{\frac 12} x,  \kappa^{\frac 12} y),
\end{equation}
where $\mathcal{U}_\kappa$ maps $L^2(\R^4)$ unitarily onto itself, and define the operator 
\begin{equation} \label{A-2d} 
A_{\kappa} := \frac{1}{\kappa}\, \mathcal{U}^*_\kappa\, \hh_\kappa \, \mathcal{U}_\kappa = a_\kappa -\Delta_y  - V_{\rm ctr}( \kappa^{-\frac 12} y) + \frac 1\kappa V_{\rm ctr}(\kappa^{-\frac 12}|x-y|),
\end{equation} 
where 
\begin{equation} \label{ak}
a_k = -\Delta +V_{\rm ctr}( \kappa^{-\frac 12} x)\,  \qquad \text{in} \ \  L^2(\R^2).
\end{equation}
Next we consider the quadratic form 
\begin{equation} \label{form-a0} 
\int_{\R^2} \left( |\nabla u|^2 +(\log |x|) \, |u|^2\right)\, dx, \qquad u\in C_0^\infty(\R^2)\, .
\end{equation}
By Lemma \ref{lem-vc} this form is bounded from below. We denote by $q_0$ its closure with the domain:
$$
d(q_0) = \left\{ u\in H^1(\R^2) \, : \, \int_{\R^2} \big| \log |x|\big | \, |u|^2\, dx < \infty\right\}. 
$$
Let $a_0$ be the self-adjoint operator in $L^2(\R^2)$ generated by $q_0$. Then $a_0$ acts on its domain as
\begin{equation}
a_0 = -\Delta + \log |x| \, ,
\end{equation} 
and the spectrum of $a_0$ is purely discrete because the potential is confining. Let
\begin{equation} \label{sp-a0}
E_1 < E_2 < E_3 < \dots
\end{equation}
be the distinguished eigenvalues of $a_0$ (possibly degenerate, with the exception of $E_1$). As for the operator $a_\kappa$, we notice that $\sigma_{\rm es}(a_\kappa) = [0,\infty)$ and that in view of the 
negativity of $V_{\rm ctr}$ the discrete spectrum of $a_\kappa$ is non-empty for all $\kappa$.  We denote 
$$
\mathcal{E}_1(\kappa) : =  \inf \sigma(a_\kappa)  
$$
the lowest eigenvalue of $a_\kappa$. Let $\phi_1$ and $\varphi_\kappa$ be the normalized eigenfunctions of $a_0$ and $a_\kappa$ respectively:
\begin{equation} \label{phi-k}
a_0\,  \phi_1 = E_1\, \phi_1, \qquad a_\kappa\, \varphi_\kappa = \E_1(\kappa)\, \varphi_\kappa, \quad \|\phi_1\|_{L^2(\R^2)}= \|\varphi_\kappa\|_{L^2(\R^2)} =1 . 
\end{equation} 
 
\smallskip

\begin{lem} \label{lem2d-1}
For $\kappa$ large enough it holds 
\begin{equation} \label{simple} 
\sigma(a_\kappa) \cap \Big(-\infty, -w(0) - \log \sqrt{\kappa} + \frac{E_2+E_1}{2} \Big) = \{\E_1(\kappa)\}.
\end{equation}
Moreover, we have 
\begin{equation} \label{lim-e1} 
\lim_{\kappa\to \infty} \Big(\E_1(\kappa) +  \log  \sqrt{\kappa} \Big) = E_1- w(0), 
\end{equation}
and 
\begin{equation} \label{lim-e2} 
\lim_{\kappa\to \infty} \|\varphi_\kappa -\phi_1\|_{L^2(\R^2)} = 0.
\end{equation}
\end{lem}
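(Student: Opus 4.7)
The shifted operator $\widetilde a_\kappa := a_\kappa + \log\sqrt{\kappa}$ is a pointwise-vanishing perturbation of $a_0 - w(0)$: a short computation from $V_{\rm ctr}(x) = \log\frac{|x|}{|x|+1} - w(|x|)$ gives
\begin{equation*}
  \widetilde a_\kappa = -\Delta + \log|x| - w(0) + B_\kappa(x), \qquad B_\kappa(x) := -\log\!\bigl(1 + |x|/\sqrt{\kappa}\bigr) + w(0) - w(\kappa^{-1/2}|x|),
\end{equation*}
and $B_\kappa \to 0$ uniformly on every compact subset of $\R^2$ by the continuity of $w$. Since $V_{\rm ctr}\to 0$ at infinity, $\sigma_{\rm es}(\widetilde a_\kappa) = [\log\sqrt{\kappa},\infty)$, so the spectrum of $\widetilde a_\kappa$ below $(E_1+E_2)/2 - w(0)$ is purely discrete for $\kappa$ large. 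The lemma reduces to showing that the first two min-max values $\mu_j(\widetilde a_\kappa)$ converge to $E_j - w(0)$ for $j=1,2$ together with the convergence of the ground state.

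For the upper bound I would use $\mathrm{span}(\phi_1,\dots,\phi_j)$ as trial subspace and the identity $\lef u, \widetilde a_\kappa u\rig = \lef u,(a_0 - w(0)) u\rig + \int_{\R^2} B_\kappa |u|^2 dx$; on a finite-dimensional subspace it is enough to show $\int B_\kappa \phi_i \phi_k \to 0$ for $i,k \leq 2$. The bound $|B_\kappa| \leq \log(1+|x|) + w(0)$ for $\kappa\geq 1$ combined with $\int \log(1+|x|)|\phi_j|^2 dx < \infty$ (a consequence of $\phi_j \in D(a_0)$) allows dominated convergence, giving $\mu_j(\widetilde a_\kappa) \leq E_j - w(0) + o(1)$ for $j=1,2$.

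For the matching lower bound I would apply Neumann bracketing at radius $R$: $\widetilde a_\kappa \geq A^{\rm in}_{\kappa,R} \oplus A^{\rm out}_{\kappa,R}$, where $A^{\rm in}, A^{\rm out}$ are $\widetilde a_\kappa$ restricted to $L^2(B_R)$ and $L^2(\R^2\setminus\overline{B}_R)$ with Neumann conditions on $\{|x|=R\}$. Inside, uniform convergence $B_\kappa\to 0$ on $B_R$ yields $\mu_j(A^{\rm in}_{\kappa,R}) \to \mu_j(-\Delta^N_{B_R} + \log|x|) - w(0)$ as $\kappa\to\infty$. Outside, a case split at $|x|=\sqrt{\kappa}$ produces the uniform lower bound $V_\kappa(x) \geq \log R - w(0) - \log 2$ for all $|x|\geq R$ and $\kappa \geq R^2$, so $A^{\rm out}_{\kappa,R} \geq \log R - w(0) - \log 2$. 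Choosing $R$ large enough that $\log R - \log 2 > E_2$, the exterior eigenvalues sit above the first two interior ones, and the bracketing yields $\mu_j(\widetilde a_\kappa) \geq \mu_j(-\Delta^N_{B_R} + \log|x|) - w(0) - o(1)$ for $j=1,2$. Letting $R\to\infty$ and invoking the standard convergence $\mu_j(-\Delta^N_{B_R} + \log|x|) \to E_j$ for the confining Neumann problem completes the lower bound.

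Combining the two bounds gives $\mu_j(\widetilde a_\kappa)\to E_j - w(0)$ for $j=1,2$, and the gap $\mu_2-\mu_1\to E_2-E_1 > 0$ forces $\mu_1$ to be simple with $\mu_1 < (E_1+E_2)/2 - w(0) < \mu_2$ for $\kappa$ large. Since $\mu_1(\widetilde a_\kappa) = \E_1(\kappa) + \log\sqrt{\kappa}$, this gives \eqref{simple} and \eqref{lim-e1}. For \eqref{lim-e2}, strong resolvent convergence $\widetilde a_\kappa \to a_0 - w(0)$ follows from the form version of the Trotter-Kato criterion on the core $C_0^\infty(\R^2)$ (on which $B_\kappa u \to 0$ in $L^2$ by uniform convergence on $\mathrm{supp}\,u$); together with the isolation of $E_1 - w(0)$ in $\sigma(a_0 - w(0))$, this yields strong convergence of the rank-one Riesz projections. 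Evaluating on $\phi_1$ gives $\lef\varphi_\kappa,\phi_1\rig\varphi_\kappa \to \phi_1$ in $L^2$, and positivity of both ground states forces $\lef\varphi_\kappa,\phi_1\rig \to 1$, so $\varphi_\kappa \to \phi_1$. The main obstacle will be the uniform outer lower bound on $V_\kappa$ together with the interleaving check in the Neumann bracketing, both handled by the elementary case split at $|x|=\sqrt{\kappa}$.
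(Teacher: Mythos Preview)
Your argument is correct, but it takes a genuinely different route from the paper. The paper does not split into separate upper and lower bounds; instead it shifts to $\hat a_\kappa := a_\kappa + w(0) + \log\sqrt{\kappa} - E_3$, proves strong resolvent convergence $\hat a_\kappa \to a_0 - E_3$ directly (via the resolvent equation, using that $W_\kappa := \hat a_\kappa - (a_0 - E_3)$ tends to zero on each $f \in D(a_0)$ by dominated convergence), and then exhibits a single $\kappa$-independent lower bound $\hat a_\kappa \geq S := -\Delta + \log\frac{|x|}{1+\kappa_0^{-1/2}|x|} - E_3$ with $\sigma_{\rm es}(S) = [0,\infty)$. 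These two ingredients feed into Weidmann's theorem on spectral continuity under strong resolvent convergence, which delivers convergence of the negative eigenvalues \emph{and} of the associated eigenfunctions in one stroke; equations \eqref{simple}--\eqref{lim-e2} all fall out of that single citation. Your approach trades that black box for more explicit work: a variational upper bound, a Neumann-bracketing lower bound with an auxiliary $R\to\infty$ limit (which in turn rests on the convergence $\mu_j(-\Delta^N_{B_R}+\log|x|)\to E_j$ that you label ``standard'' but which itself needs an IMS-type localization to justify), and then a separate Riesz-projection argument for the ground state. The payoff is that you avoid the external reference and keep the proof self-contained; the cost is a longer argument with more moving parts, and one step (strong convergence of the Riesz projection from strong resolvent convergence) that, while true here because you have already pinned down the eigenvalue locations and hence uniform resolvent bounds on the contour, deserves a sentence of justification rather than being asserted as automatic.
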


\begin{proof}
Keeping in mind \eqref{sp-a0} we introduce the operators 
\begin{equation} \label{op-hat}
\hat a_0 = a_0 -E_3 \quad \text{and} \quad \hat a_\kappa = a_\kappa + w(0) +  \log  \sqrt{\kappa}\ -E_3 \, , \quad \kappa \geq \kappa_0:= e^{2 E_3} . 
\end{equation}
Then 
$$
W_\kappa(x) :=  \hat a_\kappa - \hat a_0  = w(0)- w(\kappa^{-\frac 12} |x|) -\log(1+ \kappa^{-\frac 12} |x|). 
$$
Let $u\in\ L^2(\R^2)$ and let $f = (a_0+i)^{-1} u$. Then by the resolvent equation 
\begin{equation} \label{strong}
\|(\hat a_\kappa+i)^{-1}\, u -(a_0+i)^{-1}\, u\| \, \leq \|\, W_\kappa \, f\|, 
\end{equation}
Since $\log (1+|x| )\, f\in L^2(\R^2)$ and $W_\kappa \to 0$ uniformly on compact sets in $\R^2$, it follows that $ \| W_\kappa \, f\| \to 0$ a $\kappa\to+\infty$. Hence $\hat a_\kappa$ converges to $\hat a_0$ in the sense of strong-resolvent convergence as $\kappa\to \infty$. 
On the other hand, in view of \eqref{op-hat} 
\begin{align*}
\hat a_\kappa & = -\Delta + \log\frac{|x|}{1+ \kappa^{-1/2} |x|} -E_3 + w(0)- w(\kappa^{-\frac 12} |x|) \\
& \geq -\Delta + \log\frac{|x|}{1+ \kappa_0^{-1/2} |x|} -E_3 =: S.
\end{align*}
The operator $S$ is bounded from below in $L^2(\R^2)$ and its essential spectrum coincides with the half-line $[0,\infty)$. We can thus apply the result of \cite{weid}. The latter states that the negative eigenvalues of $\, \hat a_\kappa$ converge (including multiplicities) to the negative eigenvalues of $\hat a_0$ as $\kappa\to +\infty$. 
Since $\hat a_0$ has exactly two negative eigenvalues: $E_1-E_3$ and $E_2-E_3$, this implies that
$$
\lim_{\kappa\to \infty} \Big(\E_j(\kappa) +  \log  \sqrt{\kappa}\,  \Big) = E_j- w(0) , \qquad j=1,2,
$$
where $\E_2(\kappa)$ is the second eigenvalue of $a_\kappa$.
Hence \eqref{lim-e1} and \eqref{simple}. Moreover,  the eigenfunctions of $\hat a_\kappa$ relative to negative eigenvalues converge in norm to the eigenfunctions of $\hat a_0$ relative to its negative eigenvalues, see \cite{weid}. As the eigenfunctions of $\hat a_\kappa$ coincide with the eigenfunctions of $a_\kappa$, and the eigenfunctions of $\hat a_0$ coincide with those of $a_0$, we obtain \eqref{lim-e2}. 
\end{proof}


\vspace{0.2cm}

\subsection{Large coupling: absence of discrete spectrum} In this section we prove the absence of discrete spectrum of the operator $\hh_\kappa$ for large $\kappa$. We need some preliminary results.

\begin{lem} \label{lem-vc}
Let $	\kappa \geq 1$. Then for every $\eps>0$ there exists $C_\eps$ independent of $\kappa$ and such that 
\begin{equation}
 \big | \lef  V_{\rm ctr}(\kappa^{-\frac 12} x)  \, u , u \rig_{L^2(\R^2)} \big |  \, \leq \, \left(C_\eps+\log\sqrt{\kappa}\right)\, \|u\|_2^2 + \eps \|\nabla u\|_2^2
\end{equation}
holds for all $u\in H^1(\R^2)$. 
\end{lem}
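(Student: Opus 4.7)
The plan is to separate the $\kappa$-dependent constant $\log\sqrt{\kappa}$ out of $|V_{\rm ctr}(\kappa^{-1/2}x)|$ explicitly, and then reduce the remainder to a standard form-boundedness bound for an $L^p(\R^2)$ potential with respect to $-\Delta$.

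First I would write out $V_{\rm ctr}$ explicitly. Since $r_0=1$ and $w\geq 0$, the formula \eqref{v-appr} gives $V_{\rm ctr}\leq 0$, so
\begin{equation*}
|V_{\rm ctr}(\kappa^{-1/2}x)|=\log\!\bigl(1+\kappa^{-1/2}|x|\bigr)-\log(\kappa^{-1/2}|x|)+w(\kappa^{-1/2}|x|)=\log\!\Bigl(1+\tfrac{\sqrt{\kappa}}{|x|}\Bigr)+w(\kappa^{-1/2}|x|).
\end{equation*}
Next I would split the configuration space at $|x|=1$. On $\{|x|\leq 1\}$, using $\kappa^{-1/2}\leq 1$ one has $\log(1/|x|+\kappa^{-1/2})\leq\log 2+|\log|x||$, so $|V_{\rm ctr}(\kappa^{-1/2}x)|\leq \log\sqrt{\kappa}+\log 2+w(0)+|\log|x||$. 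On $\{|x|>1\}$ one considers two subregions: if $|x|\geq\sqrt{\kappa}$ then $\log(1+\sqrt{\kappa}/|x|)\leq\log 2$; if $1\leq|x|\leq\sqrt{\kappa}$ then $\log(1+\sqrt{\kappa}/|x|)\leq\log 2+\log\sqrt{\kappa}-\log|x|\leq\log 2+\log\sqrt{\kappa}$. Combining all three cases yields the key pointwise estimate valid for every $\kappa\geq 1$:
\begin{equation*}
|V_{\rm ctr}(\kappa^{-1/2}x)|\ \leq\ \log\sqrt{\kappa}+C_0+F(x),\qquad F(x):=\chi_{\{|x|\leq 1\}}\,|\log|x||,
\end{equation*}
with $C_0:=\log 2+w(0)$ independent of $\kappa$.

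The remaining task is to absorb $F$ into the quadratic form of $-\Delta$ with arbitrarily small relative bound. Since $F$ has compact support and $|\log|x||\in L^p_{\rm loc}(\R^2)$ for every $p<\infty$, we have $F\in L^p(\R^2)$ for all $p\in(1,\infty)$. By Hölder's inequality with conjugate exponents $(p,p')$,
\begin{equation*}
\int_{\R^2}F|u|^2\,dx\ \leq\ \|F\|_p\,\|u\|_{2p'}^2.
\end{equation*}
Invoking the two-dimensional Gagliardo--Nirenberg inequality $\|u\|_{2p'}\leq C_p\|u\|_2^{1/p'}\|\nabla u\|_2^{1-1/p'}$ followed by Young's inequality with a free parameter $\mu>0$, we obtain
\begin{equation*}
\|u\|_{2p'}^2\leq C_p^2\Bigl(\tfrac{\mu^{-p/p'}}{p}\|\nabla u\|_2^2+\tfrac{\mu}{p'}\|u\|_2^2\Bigr),
\end{equation*}
and choosing $\mu$ sufficiently large one makes the $\|\nabla u\|_2^2$--coefficient $\leq\eps$ at the cost of an enlarged constant $C_\eps$ in front of $\|u\|_2^2$. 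Combining the pointwise bound with this estimate gives
\begin{equation*}
|\langle V_{\rm ctr}(\kappa^{-1/2}x)u,u\rangle|\ \leq\ (\log\sqrt{\kappa}+C_0+C_\eps)\|u\|_2^2+\eps\|\nabla u\|_2^2,
\end{equation*}
which is the claim after renaming constants.

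I do not anticipate a serious obstacle: the only delicate point is making sure that the decomposition of $|V_{\rm ctr}(\kappa^{-1/2}x)|$ truly isolates $\log\sqrt{\kappa}$ as a constant, with the remainder bounded by a $\kappa$-independent $L^p$ function. The splitting across the three regions $\{|x|\leq 1\}$, $\{1\leq|x|\leq\sqrt{\kappa}\}$, $\{|x|\geq\sqrt{\kappa}\}$ is what makes this work uniformly in $\kappa\geq 1$; without it one would be tempted to bound $\log(1+\sqrt{\kappa}/|x|)$ crudely and pick up extra $\kappa$-dependent factors.
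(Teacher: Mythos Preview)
Your proof is correct and follows essentially the same strategy as the paper: isolate $\log\sqrt{\kappa}$ from $|V_{\rm ctr}(\kappa^{-1/2}x)|$ and then control the $\kappa$-independent logarithmic singularity near the origin via H\"older and a Sobolev-type inequality. The only cosmetic differences are that the paper replaces your three-region split by the single inequality $\log(1+\sqrt{\kappa}/|x|)\leq \log\sqrt{\kappa}+\log(1+1/|x|)$ (valid for $\kappa\geq 1$), and handles the local term through the compact embedding $H^1(\B_1)\hookrightarrow L^q(\B_1)$ rather than Gagliardo--Nirenberg on $\R^2$.
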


\begin{proof} 
Let $u\in H^1(\R^2)$. Since $V_{\rm ctr} <0$, $w$ is decreasing and $\kappa\geq 1$, we have 
\begin{align}
\big | \lef  V_{\rm ctr}(\kappa^{-\frac 12} x)  \, u , u \rig \big | - w(0) & \, \leq  \int_{\R^2} \log \Big(1+\frac{\sqrt{\kappa}}{|x|} \Big)\, u^2(x)\, dx  \nonumber \\
&\leq  \log\sqrt{\kappa} \, \|u\|_2^2 +  \int_{\R^2} \log \Big(1+\frac{1}{|x|} \Big)\, u^2(x)\, dx  \nonumber \\
& \leq   \log(2\sqrt{\kappa})\, \|u\|_2^2 +  \int_{\B_1} \log \Big(1+\frac{1}{|x|} \Big)\, u^2(x)\, dx, \label{ball}
\end{align}
where $\B_1 = \{x\in\R^2\, :\, |x| <1\}$. From the compactness of the imbedding $H^1(\B_1) \hookrightarrow L^q(\B_1)$ with $2\leq q<\infty$ it follows that for any $\eps>0$ there exists $C'_\eps$ such that 
\begin{equation} \label{eq-comp}
\left(\int_{\B_1} |u|^q \, dx \right)^{\frac 2q} \ \leq \ \eps \|\nabla u\|_2^2 + C'_\eps \|u\|_2^2 .
\end{equation} 
Since $ \log \big(1+\frac{1}{|x|} \big) \in L^p(\B_1)$ for all $1\leq p< \infty$, the claim follows by an application of the H\"older inequality to the last term in \eqref{ball}. 
\end{proof}

\begin{lem} \label{lem-exp}
Let $\varphi_\kappa$ be given by \eqref{phi-k}. Then there exist $\alpha>0$ and $\kappa_3\geq 1$ such that 
\begin{align} \label{exp-phi}
\sup_{\kappa\geq \kappa_3} \int_{\R^2} e^{2\alpha\sqrt{1+|x|^2}}  \, \varphi^2_\kappa(x)\,   dx  & < \infty \\
\sup_{\kappa\geq \kappa_3}\, \sup_{x\in\R^2}\, (1+|x|^3)\, \varphi_\kappa^2(x) \, &  < \infty. \label{pointwise-ub}
\end{align}
\end{lem}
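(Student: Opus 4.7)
The plan is to establish \eqref{exp-phi} first, by a Combes--Thomas / Kato-perturbation argument modelled on the one-dimensional proof of Lemma \ref{lem-2}, and then to derive the pointwise bound \eqref{pointwise-ub} from \eqref{exp-phi} by local elliptic regularity.

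For \eqref{exp-phi}, set $F(x)=\sqrt{1+|x|^2}$ and observe that on $C_0^\infty(\R^2)$
\begin{equation*}
e^{\alpha F}(a_\kappa - z)\,e^{-\alpha F} = a_\kappa - z + W_\alpha,\qquad W_\alpha = 2\alpha\,(\nabla F)\!\cdot\!\nabla + \alpha\,\Delta F - \alpha^2|\nabla F|^2 .
\end{equation*}
Since $|\nabla F|\leq 1$ and $|\Delta F|$ is bounded on $\R^2$, a straightforward computation yields the relative bound $\|W_\alpha u\|^2\leq 4\alpha^2 \|\nabla u\|^2 + C\alpha^2\|u\|^2$ for all $u\in H^1(\R^2)$ and $\alpha\in(0,1)$; Lemma \ref{lem-vc} then controls $\|\nabla u\|^2$ by the quadratic form of $a_\kappa+\log\sqrt{\kappa}$ with constants independent of $\kappa\geq 1$. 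By Lemma \ref{lem2d-1}, for $\kappa$ sufficiently large the circle $\Gamma_\kappa$ of radius $(E_2-E_1)/4$ centred at $\E_1(\kappa)$ encloses only $\E_1(\kappa)$ out of $\sigma(a_\kappa)$, and $\sup_{z\in\Gamma_\kappa}\|(a_\kappa-z)^{-1}\|$ is uniformly bounded in $\kappa$. Applying \cite[Thm.~IV.1.16]{ka} exactly as in Lemma \ref{lem-2}, I would fix $\alpha>0$ small enough, independent of $\kappa$, so that $a_\kappa+W_\alpha-z$ is uniformly invertible for all $z\in\Gamma_\kappa$, and integrate the corresponding resolvent around $\Gamma_\kappa$ to obtain
\begin{equation*}
\sup_{\kappa\geq \kappa_3}\|e^{\alpha F}\,P_\kappa\,e^{-\alpha F}\|_{L^2(\R^2)\to L^2(\R^2)} < \infty ,
\end{equation*}
where $P_\kappa f=\varphi_\kappa\,\lef\varphi_\kappa,f\rig_{L^2(\R^2)}$ is the rank-one projection onto $\varphi_\kappa$. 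Using the $L^2$-convergence $\varphi_\kappa\to\phi_1$ from Lemma \ref{lem2d-1} one then picks a compact cutoff $\chi_R$ such that $\lef\varphi_\kappa,\chi_R\varphi_\kappa\rig \geq 1/2$ for all $\kappa$ large, writes $\varphi_\kappa=\lef\varphi_\kappa,\chi_R\varphi_\kappa\rig^{-1} P_\kappa(\chi_R\varphi_\kappa)$, and bounds $\|e^{\alpha F}\chi_R\varphi_\kappa\|\leq e^{\alpha\sqrt{1+R^2}}$, which yields \eqref{exp-phi}.

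The pointwise bound \eqref{pointwise-ub} would then follow from \eqref{exp-phi} by standard interior elliptic regularity. From $-\Delta\varphi_\kappa=(\E_1(\kappa)-V_{\rm ctr}(\kappa^{-1/2}x))\,\varphi_\kappa$ and the fact that $V_{\rm ctr}(\kappa^{-1/2}\cdot)\in L^p_{\rm loc}(\R^2)$ for every $p<\infty$ with local $L^p$-norm of size $O(\log\sqrt{\kappa})$, $W^{2,p}$-interior estimates together with the Sobolev embedding $W^{2,p}(B_1)\hookrightarrow C^0(B_1)$ (for any $p>1$ in $\R^2$) give
\begin{equation*}
\|\varphi_\kappa\|_{L^\infty(B_1(x_0))}\ \lesssim\ (1+\log\sqrt{\kappa})\,\|\varphi_\kappa\|_{L^2(B_2(x_0))},\qquad x_0\in\R^2,
\end{equation*}
with an implicit constant independent of $\kappa$ and $x_0$. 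Combining this with \eqref{exp-phi} and absorbing the logarithmic factor into a slightly smaller exponential weight $\alpha'<\alpha$ gives $|\varphi_\kappa(x)|^2\lesssim e^{-2\alpha'\sqrt{1+|x|^2}}$ uniformly in $\kappa\geq\kappa_3$, which is more than enough for \eqref{pointwise-ub}.

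The main obstacle I expect is maintaining uniformity in $\kappa$ despite the $\log\sqrt{\kappa}$-shift in $a_\kappa$: the spectral gap supplied by Lemma \ref{lem2d-1} is what makes the contour $\Gamma_\kappa$ available after this shift, and Lemma \ref{lem-vc} is what yields the $\kappa$-independent relative form-bound for $W_\alpha$. Without these two inputs, both the Combes--Thomas step and the elliptic regularity step would lose their $\kappa$-uniformity.
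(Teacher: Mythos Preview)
Your argument for \eqref{exp-phi} is essentially identical to the paper's: the same Combes--Thomas conjugation, the same use of Lemma~\ref{lem-vc} to obtain a $\kappa$-independent relative bound for $W_\alpha$, the same Kato perturbation \cite[Thm.~IV.1.16]{ka}, contour integration around the isolated ground state supplied by Lemma~\ref{lem2d-1}, and the same $\chi_R$-trick exploiting $\varphi_\kappa\to\phi_1$.

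For \eqref{pointwise-ub} you take a genuinely different route. The paper does \emph{not} use local elliptic regularity combined with the $L^2$ exponential decay. Instead it obtains a \emph{global} uniform bound $\sup_{\kappa}\|\Delta\varphi_\kappa\|_{L^2(\R^2)}<\infty$ by splitting the potential as
\[
-\Delta\varphi_\kappa=\bigl(\E_1(\kappa)+\log\sqrt{\kappa}+w(\kappa^{-1/2}|x|)\bigr)\varphi_\kappa+\log\Bigl(\tfrac{1}{\sqrt{\kappa}}+\tfrac{1}{|x|}\Bigr)\varphi_\kappa,
\]
where the first bracket is uniformly bounded in $\kappa$ and $x$ (this is precisely the cancellation you anticipated but did not exploit). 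The Sobolev embedding $H^2(\R^2)\hookrightarrow L^\infty(\R^2)$ then gives $\sup_\kappa\|\varphi_\kappa\|_\infty<\infty$. The decay factor $(1+|x|^3)^{-1}$ is obtained separately by using that $\varphi_\kappa$ is \emph{radial}: an integration by parts identity
\[
r^3\varphi_\kappa^2(r)=6\int_0^r t^2\varphi_\kappa^2(t)\,dt+2\int_0^r t^3\varphi_\kappa'(t)\varphi_\kappa(t)\,dt
\]
together with the polynomial moment bounds coming from \eqref{exp-phi} and the $H^1$ control yields \eqref{pointwise-ub} directly.

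Both approaches are correct. Yours is more robust (it does not use radiality and actually produces pointwise \emph{exponential} decay, stronger than \eqref{pointwise-ub}), at the price of a bootstrap through $W^{2,p}$ whose constant is in fact a power of $(1+\log\sqrt\kappa)$ rather than linear---irrelevant, since you absorb it into the exponential weight anyway. The paper's approach is more elementary but relies on the radial structure and on spotting the cancellation between $\E_1(\kappa)$ and $\log\sqrt\kappa$ in the eigenvalue equation.
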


\begin{proof}
In order to prove \eqref{exp-phi}, we proceed as in the proof of Lemma \ref{lem-2}. 
By Lemma \ref{lem2d-1} there exist $\delta>0$ and $\kappa_\delta$ such that 
\begin{equation} \label{supsup2}
\sup_{z\in\gamma}\,  \sup_{\kappa\geq \kappa_\delta}\, \|(a_\kappa -z )^{-1} \|\  < \ \infty ,
\end{equation}
where
\begin{equation}
\gamma = \{ z\in\C: \ |z-E_1 +w(0)+ \log\sqrt{\kappa}\, | = \delta  \}. 
\end{equation}
Next, for any $z\in\C$ it holds
$$
e^{\alpha\sqrt{1+|\cdot|^2}}\, (a_\kappa-z) \, e^{-\alpha\sqrt{1+|\cdot|^2}}\, = a_\kappa -z +\widehat W_\alpha, 
$$
where $\widehat W_\alpha$ is a first order differential operator which acts the polar coordinates as 
\begin{equation}
\widehat W_\alpha = \frac{2\alpha r}{\sqrt{1+r^2}}\ \pd_r + \frac{\alpha}{(1+r^2)^{3/2}} + \frac{\alpha }{1+r^2}- \frac{\alpha^2 r^2}{1+r^2}\, .
\end{equation} 
For any $u\in  H^1(\R)$ and any $\alpha\in (0,1)$ we then have
\begin{equation}\label{What-upperb}
\|\widehat W_\alpha \, u\|^2 \ \lesssim \  \alpha^2\, ( \|\nabla u\|^2 +  \|u\|^2\, ).
\end{equation}
Now we note that
$$
a_\kappa + \log\sqrt{\kappa} \, \geq \, -\frac 12\, \Delta -C 
$$
holds in the sense of quadratic forms on $H^1(\R^2)$ for all $\kappa \geq 1$ and some $C>0$ independent of $\kappa$, see Lemma \ref{lem-vc}. Therefore 
\begin{align*}
\big | \lef (a_\kappa-z) u, u \rig_{L^2(\R)} \big | &\ \geq \ {\rm Re\, } \lef (a_\kappa-z) u, u \rig_{L^2(\R)} \ \geq\  \frac 12\|\nabla u\|^2 -\widehat C\ \|u\|^2.
\end{align*}
holds true for all $\kappa\geq 1$, all $z\in\gamma$ and some constant $\widehat C>0$ independent of $\kappa$. 
This in combination with \eqref{What-upperb} gives
\begin{align*}
\|\widehat W_\alpha\, u\|^2  & \ \lesssim \ \alpha^2 \big ( \|u\|^2 +  \|(a_\kappa-z) u\|^2\, \big)
\end{align*}
for all $z\in\gamma$. As in the proof of Lemma \ref{lem-2} we conclude that there exists $\alpha \in(0,1)$ such that the operator 
$$
e^{\alpha\sqrt{1+|\cdot|^2}}\, (a_\kappa-z) \, e^{-\alpha\sqrt{1+|\cdot|^2}}\,  = a_\kappa + \widehat W_\alpha-z 
$$
is invertible for all $z\in\gamma$ and all $\kappa\geq\kappa_\delta$, with a bounded inverse, see  \cite[Thm.~IV.1.16]{ka}. In view of the identity
$$
e^{\alpha\sqrt{1+|\cdot|^2}}\,\, (a_\kappa-z)^{-1} \, e^{-\alpha\sqrt{1+|\cdot|^2}}=  (a_\kappa + \widehat W_\alpha-z)^{-1} \, ,
$$
it follows that 
\begin{equation} 
\sup_{z\in\gamma}\,  \sup_{\kappa\geq \kappa_\delta}\, \big\| e^{\alpha\sqrt{1+|\cdot|^2}}\, (a_\kappa -z )^{-1}\, e^{-\alpha\sqrt{1+|\cdot|^2}}\ \big\|\ < \infty. 
\end{equation} 
Now let
\begin{equation} 
\widehat P_\kappa = \varphi_\kappa \lef\, \cdot\, , \varphi_\kappa \rig_{L^2(\R)} .
\end{equation}
Then by Lemma \ref{lem2d-1} and equation \eqref{supsup2} 
\begin{align*} 
\sup_{\kappa\geq \kappa_\delta}\,  \big\| e^{\alpha\sqrt{1+|\cdot|^2}}\, \widehat P_\kappa\, e^{-\alpha\sqrt{1+|\cdot|^2}}\ \big\| & \ \leq \ \sup_{\kappa\geq \kappa_\delta}\, \frac{1}{2\pi} \oint_\gamma  \big\| e^{\alpha\sqrt{1+|\cdot|^2}}\, (a_\kappa -z )^{-1}\, e^{-\alpha\sqrt{1+|\cdot|^2}}\ \big\|\, dz \ < \infty. 
\end{align*}
Since $\varphi_\kappa$ converges strongly to $\phi_0$ in $L^2(\R^2)$ as $\kappa\to\infty$, see \eqref{lim-e2},  we can now follow line by line the arguments of the proof of Lemma \ref{lem-1} and conclude that \eqref{exp-phi} holds true with some $\kappa_3 \geq \kappa_\delta$. 

\medskip

\noindent It remains to prove \eqref{pointwise-ub}. By \eqref{phi-k}
$$
-\Delta \varphi_\kappa = (\E_1(\kappa) +\log  \sqrt{\kappa} + w(\kappa^{-\frac 12} |x|) \big)\,  \varphi_\kappa + \log \Big(\frac{1}{\sqrt{\kappa}} +\frac{1}{|x|}\Big) \,  \varphi_\kappa.
$$
Since $\E_1(\kappa) +\log \sqrt{\kappa} + w(\kappa^{-\frac 12} |x|)$ is bounded in $\R^2$, uniformly with respect to $\kappa$, see \eqref{lim-e1}, it follows that 
$$
\| \Delta\,  \varphi_\kappa\|_2^2 \, \leq \, C +  \int_{\R^2} \log^2 \Big(1+\frac{1}{|x|} \Big)\,  \varphi_\kappa^2(x)\, dx \, \leq  C +\log^22 + \int_{\B_1} \log^2 \Big(1+\frac{1}{|x|} \Big)\,  \varphi_\kappa^2(x)\, dx. 
$$
Now we proceed as in the proof of Lemma \ref{lem-vc}: using \eqref{eq-comp} and the fact that 
\begin{equation} \label{phi-h1}
\|\nabla \varphi_\kappa\|_2^2\, \leq \, \|\Delta \varphi_\kappa\|_2\, \|\varphi_\kappa\|_2 \ \leq \delta \|\Delta \varphi_\kappa\|_2^2 + \delta^{-1} \qquad \forall\ \delta >0, 
\end{equation} 
which follows from integration by parts, we find that  $\| \Delta\,  \varphi_\kappa\|_2$ is bounded uniformly in $\kappa$. The continuity of the Sobolev imbedding  $H^2(\R^2)  \hookrightarrow L^\infty(\R^2)$ then implies that 
\begin{equation} \label{l-infty}
\sup_{\kappa\geq 1} \| \varphi_\kappa\|_\infty \ \lesssim \  \sup_{\kappa\geq 1}   \| \varphi_\kappa\|_{H^2(\R^2)}  < \infty . 
\end{equation} 
On the other hand, since $\varphi_\kappa$ is radial, being the ground-state of a Schr\"odigner operator with a radial potential, an integration by parts in combination with \eqref{l-infty} shows that
\begin{equation} \label{eq-pp}
2 \int_0^r t^3\, \varphi'_\kappa(t)\, \varphi_\kappa(t)\, dt = r^3 \varphi^2_\kappa(r) - 6 \int_0^r t^2\,  \varphi^2_\kappa(t)\, dt. 
\end{equation}
By \eqref{exp-phi}
\begin{equation} 
\sup_{\kappa\geq \kappa_3} \int_0^\infty  r^n\, \varphi^2_\kappa(r)\,  dr < \infty \qquad \forall\, n\geq 1. 
\end{equation}
Hence the claim follows from the H\"older inequality and equations \eqref{phi-h1}-\eqref{eq-pp}.  
\end{proof}

\begin{lem} \label{lem-conv}
There exists a constant $C>0$ such that 
\begin{equation} 
\int_{\R^2} |V_{\rm ctr}(\kappa^{-\frac 12}|x-y|) |\, \varphi_\kappa^2(x)\, dx \ \leq  - C\, V_{\rm ctr}(\kappa^{-\frac 12} y) \qquad \ \forall\ y\in \R^2 \, ,
\end{equation}
holds true for all $\kappa$ large enough. 
\end{lem}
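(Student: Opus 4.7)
\noindent\emph{Proof plan.} Write $g(r):=-V_{\rm ctr}(r)=\log(1+1/r)+w(r)$ for $r>0$; by Assumption \ref{ass-w} and \eqref{v-appr}, $g$ is positive and strictly decreasing on $(0,\infty)$, blows up logarithmically as $r\to 0^+$, and behaves like $1/r$ as $r\to\infty$ (because $w$ is exponentially decaying). The target inequality reads
\begin{equation*}
\int_{\R^2} g(\kappa^{-1/2}|x-y|)\,\varphi_\kappa^2(x)\,dx\ \leq\ C\, g(\kappa^{-1/2}|y|),\qquad y\in\R^2.
\end{equation*}
The first preliminary step in the plan is to record the elementary doubling bound $g(r/2)\leq C_0\,g(r)$, which one checks directly by inspecting the three regimes $r\to 0^+$, $r$ in a fixed compact subset of $(0,\infty)$, and $r\to\infty$.

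The plan is then to split $\R^2=\Omega_1\cup\Omega_2$ with $\Omega_1:=\{x:|x-y|\geq|y|/2\}$ and $\Omega_2:=\{x:|x-y|<|y|/2\}$, the latter being contained in $\{|x|>|y|/2\}$. On $\Omega_1$ the monotonicity of $g$, the identity $\|\varphi_\kappa\|_2=1$ and the doubling bound immediately yield
\begin{equation*}
\int_{\Omega_1} g(\kappa^{-1/2}|x-y|)\,\varphi_\kappa^2(x)\,dx\ \leq\ g(\kappa^{-1/2}|y|/2)\ \leq\ C_0\, g(\kappa^{-1/2}|y|),
\end{equation*}
so only $\Omega_2$ needs genuine work. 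I would fix $R_0\geq 1$ large and split into the subcases $|y|\leq R_0$ and $|y|>R_0$. If $|y|\leq R_0$, I would apply Lemma \ref{lem-vc} to $u=\varphi_\kappa$ after the translation $x\mapsto x-y$ (the estimate there is translation invariant in the argument of $u$), using the uniform-in-$\kappa$ bound on $\|\nabla\varphi_\kappa\|_2$ that is extracted from the bound on $\|\Delta\varphi_\kappa\|_2$ established in the proof of Lemma \ref{lem-exp} via the integration-by-parts inequality $\|\nabla\varphi_\kappa\|_2^2\leq\|\Delta\varphi_\kappa\|_2\,\|\varphi_\kappa\|_2$. This produces an upper bound of order $\log\sqrt\kappa$ for the whole integral; since $g(\kappa^{-1/2}|y|)\geq\log(1+\sqrt\kappa/R_0)\gtrsim\log\sqrt\kappa$ in this regime for $\kappa$ large, the desired comparison follows.

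For $|y|>R_0$ I would instead use the pointwise decay \eqref{pointwise-ub}: on $\Omega_2$ one has $|x|>|y|/2\geq R_0/2$, so $\varphi_\kappa^2(x)\lesssim|y|^{-3}$, which gives
\begin{equation*}
\int_{\Omega_2} g(\kappa^{-1/2}|x-y|)\,\varphi_\kappa^2(x)\,dx\ \lesssim\ \frac{2\pi\kappa}{|y|^3}\int_0^{|y|/(2\sqrt\kappa)} g(s)\,s\,ds.
\end{equation*}
Using the asymptotics of $g$, the last integral is $\lesssim T^2\log(1/T)$ as $T\to 0^+$ and $\lesssim T$ as $T\to\infty$, with $T=|y|/(2\sqrt\kappa)$; comparing with $g(\kappa^{-1/2}|y|)\asymp\log(\sqrt\kappa/|y|)$ in the regime $|y|\ll\sqrt\kappa$ and $g(\kappa^{-1/2}|y|)\asymp\sqrt\kappa/|y|$ in the regime $|y|\gg\sqrt\kappa$ produces an upper bound of the form $C\,g(\kappa^{-1/2}|y|)/|y|\leq C\,g(\kappa^{-1/2}|y|)/R_0$, uniformly in $\kappa$ and in $y$. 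The main obstacle in the plan is precisely this matching across the two asymptotic regimes of $|y|/\sqrt\kappa$: the doubling property of $g$ and the pointwise and $H^2$ control of $\varphi_\kappa$ must be balanced so that the small factor $1/R_0$ absorbs the implicit constants arising from Lemma \ref{lem-vc} and Lemma \ref{lem-exp}.
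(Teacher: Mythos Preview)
Your approach is correct and rests on the same core idea as the paper's proof---split into a far region handled by monotonicity of $g$ and $\|\varphi_\kappa\|_2=1$, and a near region controlled by the pointwise decay \eqref{pointwise-ub}---but the technical implementation differs. The paper first absorbs $w$ into the logarithm via $\log(1+\beta/t)\geq\beta/(\beta+t)$, then uses the radial symmetry of $\varphi_\kappa$ to reduce to a one-dimensional integral $\int_{-r}^\infty\log(1+\sqrt\kappa/|t|)\,\varphi_\kappa^2(r+t)(r+t)\,dt$, and finally evaluates $\int_0^r\log(1+\sqrt\kappa/t)\,dt$ in closed form; this handles all $r=|y|>0$ uniformly with no separate small-$|y|$ case and no appeal to Lemma~\ref{lem-vc}. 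You instead keep the full $g$, use its doubling property, stay in two dimensions, and treat small $|y|$ via Lemma~\ref{lem-vc} together with the uniform $H^1$ bound on $\varphi_\kappa$; for large $|y|$ your endpoint asymptotics amount to the uniform estimate $\int_0^T g(s)s\,ds\lesssim T^2 g(2T)$ (continuity plus the two asymptotics you record), which replaces the paper's explicit computation. Your route is slightly more robust since it does not use radiality of $\varphi_\kappa$, while the paper's is a bit shorter and avoids the case distinction in $|y|$.

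One minor point: your closing remark about needing $1/R_0$ to ``absorb the implicit constants'' is a red herring. The lemma only asks for \emph{some} finite $C$, so any fixed $R_0\geq 1$ works; the contributions from $\Omega_1$, from $\Omega_2$ with $|y|\leq R_0$, and from $\Omega_2$ with $|y|>R_0$ simply add up, with no balancing required. The ``matching across regimes'' you flag as the main obstacle is just the observation that $T\mapsto T^{-2}\int_0^T g(s)s\,ds\,/\,g(2T)$ is a continuous positive function on $(0,\infty)$ with finite limits at both ends, hence bounded.
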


\begin{proof} 
Note that $V_{\rm ctr} <0$ and that 
\begin{align*}
|V_{\rm ctr}(\kappa^{-\frac 12}|x-y|) | & = \log \Big(1+\frac{\sqrt{\kappa}}{|x-y|}\Big) + w(\kappa^{-\frac 12}|x-y|), \\   
-V_{\rm ctr}(\kappa^{-\frac 12} y)  & = \log \Big(1+\frac{\sqrt{\kappa}}{|y|}\, \Big) + w(\kappa^{-\frac 12}|y|)\, .
\end{align*}
Moreover, from the inequality 
\begin{equation} \label{aux-log}
\log \Big(1+\frac{\beta}{t}\, \Big) \ \geq \frac{\beta}{\beta+t} \qquad \forall\  t>0,\ \forall\ \beta>0 ,
\end{equation}
and from the assumptions on $w$ it follows that 
\begin{align*}
w(\kappa^{-\frac 12}|x-y|)\  \lesssim \ \frac{1}{1+\kappa^{-\frac 12}|x-y|} \ \leq \ \log \Big(1+\frac{\sqrt{\kappa}}{|x-y|}\Big) \, .
\end{align*}
Hence in view of the positivity of $w$ to prove the claim it suffices to show that 
\begin{equation} \label{enough-log}
\int_{\R^2}   \log \Big(1+\frac{\sqrt{\kappa}}{|x-y|}\Big) \, \varphi_\kappa^2(x)\, dx \ \leq \ c\,  \log \Big(1+\frac{\sqrt{\kappa}}{|y|}\Big) \qquad \ \forall\ y\in \R^2 .
\end{equation} 
holds for all $\kappa$ large enough and some $c>0$. To simplify the notation we write $t=|x|$ and $r=|y|$ keeping in mind that $\varphi_\kappa$ is radial. Then by Lemma \ref{lem-exp}
\begin{align}
& \int_{\R^2}   \log \Big(1+\frac{\sqrt{\kappa}}{|x-y|}\Big) \, \varphi_\kappa^2(x)\, dx  \  \leq \ 2\pi \int_0^\infty   \log \Big(1+\frac{\sqrt{\kappa}}{|t-r|}\Big) \, \varphi_\kappa^2(t)\, t\,  dt \nonumber  \\
& \qquad = 2\pi  \int_{-r}^\infty   \log \Big(1+\frac{\sqrt{\kappa}}{|t|}\Big) \, \varphi_\kappa^2(r+t)\, (r+t)\, dt \,  \nonumber \\
& \qquad \leq\, 2 \pi \log \Big(1+\frac{2\sqrt{\kappa}}{r}\Big)
 + \int_{-r/2}^{r/2}   \log \Big(1+\frac{\sqrt{\kappa}}{|t|}\Big) \, \varphi_\kappa^2(r+t)\, (r+t)\, dt \nonumber  \\
 & \qquad \leq 4 \pi \log \Big(1+\frac{\sqrt{\kappa}}{r}\Big)
 +  2 \sup_{|t| \leq r/2} \varphi_\kappa^2(r+t)\, (r+t)  \int_{0}^{r/2}   \log \Big(1+\frac{\sqrt{\kappa}}{t}\Big) \, dt \nonumber \\
 & \qquad \leq 4 \pi \log \Big(1+\frac{\sqrt{\kappa}}{r}\Big)
 +  \frac{c\, r}{1+r^3} \int_{0}^{r}   \log \Big(1+\frac{\sqrt{\kappa}}{t}\Big) \, dt ,  \label{eq-conv}
\end{align}
holds for some $c>0$. Here we have used the fact that $\log(1+2x) \leq 2 \log (1+x)$ holds for any $x>0$. A simple calculation shows that
\begin{align*}
 \int_{0}^{r}   \log \Big(1+\frac{\sqrt{\kappa}}{t}\Big) \, dt  & = r \log \Big(1+\frac{\sqrt{\kappa}}{r}\Big) + \sqrt{\kappa}\ \log \Big(1+\frac{r}{\sqrt{\kappa}}\Big) \\
 & \leq  r \log \Big(1+\frac{\sqrt{\kappa}}{r}\Big) + r. 
\end{align*}
This in combination with \eqref{aux-log} and \eqref{eq-conv} proves \eqref{enough-log} and hence the claim. 
\end{proof} 

\vspace{0.2cm}

\subsection{Proof of Theorem \ref{thm-2d}{\rm (i)}}
We are going to prove the absence of discrete spectrum of the operator $A_\kappa$ defined in \eqref{A-2d}. Proposition \ref{prop-es} shows that for $\kappa\geq 1$ it holds 
\begin{equation}
\inf\sigma_{\rm es}(A_\kappa)  = \E_1(\kappa). 
\end{equation}
Since the form domain of $A_\kappa$ coincides with $H^1(\R^4)$, see Lemma \ref{lem-vc}, it suffices to show that 
\begin{equation} \label{enough-ak}
\lef \,  A_\kappa\,  u,\, u\rig_{L^2(\R^4)} \ \geq\  \E_1(\kappa)  \, \|u\|_2^2 \qquad \forall\ u\in H^1(\R^4)
\end{equation}  
holds true for $\kappa$ large enough. Given $u\in  H^1(\R^4)$ we write 
\begin{equation} \label{decomp}
u(x,y) = \varphi_\kappa(x)\, \psi(y) + f(x,y), \quad \psi(y) = \int_{\R^2}  \varphi_\kappa(x)\, u(x,y)\, dx .
\end{equation}
Then 
\begin{equation} \label{orth}
\int_{\R^2} \varphi_\kappa(x)\, f(x,y)\, dx = 0 \qquad \forall\ y\in\R^2,
\end{equation} 
and integrating by parts we obtain
\begin{align}
\lef \,  A_\kappa\,  u,\, u\rig_{L^2(\R^4)} & = \int_{\R^2} |\nabla_y \psi|^2\, dy + \E_1(\kappa) \int_{\R^2} | \psi|^2\, dy + \int_{\R^4} \big( |\nabla f|^2  + V_{\rm ctr}(\kappa^{-\frac 12} x)\, f^2\big)\, dx dy \nonumber\\
& \quad-  \int_{\R^4} \big( V_{\rm ctr}(\kappa^{-\frac 12} y) - \kappa^{-1} V_{\rm ctr}(\kappa^{-\frac 12} |x-y|)\big)\, (\varphi_\kappa \psi +f)^2\, dx dy \nonumber\\
& =  \int_{\R^2} |\nabla_y \psi|^2\, dy + \E_1(\kappa) \, \|u\|_2^2 +  \int_{\R^4} \Big[ |\nabla f|^2 + (V_{\rm ctr}(\kappa^{-\frac 12} x) - \E_1(\kappa)) f^2 \Big]\, dx dy\nonumber\\
& \quad  + \int_{\R^2} \psi^2(y) \left( \int_{\R^2} \kappa^{-1} V_{\rm ctr}(\kappa^{-\frac 12}|x-y|) \, \varphi_\kappa^2(x)\, dx-V_{\rm ctr}(\kappa^{-\frac 12} |y|) \right) dy\nonumber \\
& \quad +  2  \kappa^{-1} \int_{\R^4} V_{\rm ctr}(\kappa^{-\frac 12} |x-y|)\, \varphi_\kappa(x) \psi(y) \, f(x,y) \, dx dy \nonumber\\ 
& \quad - \int_{\R^4} \big( V_{\rm ctr}(\kappa^{-\frac 12} y) - \kappa^{-1} V_{\rm ctr}(\kappa^{-\frac 12} |x-y|)\big)\, f^2\, dx dy \label{ak-lowerb0} .
\end{align}
Hence
\begin{align}
\lef \,  A_\kappa\,  u,\, u\rig_{L^2(\R^4)} & \geq    \E_1(\kappa) \, \|u\|_2^2  \nonumber \\
& \quad +  \int_{\R^4} \Big[ |\nabla f|^2 + (V_{\rm ctr}(\kappa^{-\frac 12} x) - \E_1(\kappa)+ 2\kappa^{-1} V_{\rm ctr}(\kappa^{-\frac 12} |x-y|)) f^2 \Big]\, dx dy \nonumber \\
& \quad + \int_{\R^2} \psi^2(y) \left( \int_{\R^2} 2 \kappa^{-1}V_{\rm ctr}(\kappa^{-\frac 12}|x-y|) \, \varphi_\kappa^2(x)\, dx-V_{\rm ctr}(\kappa^{-\frac 12} |y|) \right) dy, \label{ak-lowerb1}
\end{align}
where we have used the inequality
$$
2 V_{\rm ctr}(\kappa^{-\frac 12} |x-y|)\, \varphi_\kappa \psi \, f \ \geq \ V_{\rm ctr}(\kappa^{-\frac 12} |x-y|)\, \varphi^2_\kappa \psi^2 +  V_{\rm ctr}(\kappa^{-\frac 12} |x-y|)\, f^2,
$$
and  the fact that $V_{\rm ctr} <0$. Note that the last term in \eqref{ak-lowerb1} is positive for $\kappa$ large enough by Lemma \ref{lem-conv}. Moreover, since $\E_1(\kappa)$ is simple,  Lemma \ref{lem2d-1} and equation \eqref{orth} ensure that
\begin{align}  \label{eq-gap} 
 \int_{\R^2} \Big[ |\nabla_x f|^2 + (V_{\rm ctr}(\kappa^{-\frac 12} x) - \E_1(\kappa)\,  f^2(x,y) \Big]\, dx  & \geq \  (\E_2(\kappa)- \E_1(\kappa)) \int_{\R^2} f^2(x,y)\, dx \nonumber \\
 & \geq \frac{E_2-E_1}{2}\,  \int_{\R^2} f^2(x,y)\, dx
\end{align} 
holds for all $y\in\R^2$ and $\kappa$ large enough. Note also that $E_2-E_1>0$ and that $E_1$ is simple. Hence for every $\eta\in (0,1)$  it holds
\begin{align}
\lef \,  A_\kappa\,   u,\, u\rig_{L^2(\R^4)}  -\E_1(\kappa) \, \|u\|_2^2 &\  \geq\   \frac{E_2-E_1}{2} \, \|f\|_2^2 \nonumber \\
& \quad + \frac 12 \int_{\R^4} \Big[ (1-\eta)\, |\nabla_x f|^2 + V_{\rm ctr}(\kappa^{-\frac 12} x)\, f^2\,  -\E_1(\kappa) f^2 \Big]\, dx dy\nonumber \\
& \quad + \int_{\R^4} \Big( \eta\, |\nabla f|^2 +2 \kappa^{-1}V_{\rm ctr}(\kappa^{-\frac 12}|x-y|)\, f^2 \Big)\, dx dy \label{lowerb-1}
\end{align}
By scaling and Lemma \ref{lem2d-1} it follows that for $\kappa\to\infty$ 
\begin{equation} 
\inf \sigma \big( -(1-\eta) \Delta_x +V_{\rm ctr}(\kappa^{-\frac 12} x)  -\E_1(\kappa)\big) = \log(1-\eta)+ o_\kappa(1)\qquad \forall \ \eta\in (0,1),
\end{equation} 
where $o_\kappa(1)$ denotes a quantity which tends to zero as $\kappa\to\infty$. 
Hence inserting
$$
\eta=1-\exp((E_1-E_2)/4)
$$ 
into \eqref{lowerb-1} we get
\begin{align*}
\lef \,  A_\kappa  u,\, u\rig_{L^2(\R^4)}  -\E_1(\kappa) \, \|u\|_2^2 &\  \geq\  \Big(  \frac{E_2-E_1}{4}+o_\kappa(1) \Big) \, \|f\|_2^2  \\
& \quad + \int_{\R^4} \Big( \eta\, |\nabla f|^2 +2 \kappa^{-1}V_{\rm ctr}(\kappa^{-\frac 12}|x-y|)\, f^2 \Big)\, dx dy
\end{align*}
In order to estimate the second term on the right hand side we use again the change of variables $(x,y)\mapsto (s,t)= (x-y, \frac{x+y}{2})$. This gives 
\begin{align*}
\int_{\R^4} \Big( \eta\, |\nabla f|^2 +2 \kappa^{-1}V_{\rm ctr}(\kappa^{-\frac 12}|x-y|)\, f^2 \Big)\, dx dy& = \int_{\R^4} \Big(2 \eta\, |\nabla_s g|^2 + \frac 12\, \eta\, |\nabla_t g|^2 \\ 
& \qquad \ \quad +2 \kappa^{-1}V_{\rm ctr}(\kappa^{-\frac 12}|s|)\, g^2 \Big)\, ds dt,
\end{align*}
where $g(s,t)= f(t+s/2, t -s/2)$. 
In view of Lemma \ref{lem-vc} we then obtain the lower bound 
\begin{align*}
\lef \,  A_\kappa\,  u,\, u\rig_{L^2(\R^4)}  -\E_1(\kappa) \, \|u\|_2^2 &\  \geq\   \left( \frac{E_2-E_1}{4} -  \mathcal{O}(\kappa^{-1}\log \kappa) +o_\kappa(1) \right) \, \|f\|_2^2,
\end{align*}
which proves \eqref{enough-ak}. 
\qed

\vspace{0.2cm}

\subsection{Proof of Theorem \ref{thm-2d}{\rm (ii)}}

\begin{prop} \label{prop-infty}
If $\kappa \in [k_e, 1)$, then $\hh_\kappa$ has infinitely many discrete eigenvalues. 
\end{prop}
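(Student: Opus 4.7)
My plan is to construct, for $\kappa\in[k_e,1)$, an infinite-dimensional subspace of $H^1(\R^4)$ on which the quadratic form of $\hh_\kappa-e(\kappa)$ is strictly negative, where
\begin{equation*}
e(\kappa):=\inf\sigma\bigl(-\Delta+\kappa V_{\rm ctr}\bigr)<0.
\end{equation*}
By Proposition~\ref{prop-es} applied with $V=-V_{\rm ctr}\ge 0$, the number $e(\kappa)$ coincides with $\inf\sigma_{\rm es}(\hh_\kappa)$ for $\kappa\in[k_e,1)$, so by min-max such a subspace will produce infinitely many eigenvalues of $\hh_\kappa$ below its essential spectrum.

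The ansatz will be the product $u(x,y)=\phi(x)g(y)$, where $\phi\in L^2(\R^2)$ is the normalized ground state of $-\Delta+\kappa V_{\rm ctr}$; since $e(\kappa)<0$, Agmon-type estimates yield an exponential decay bound $\phi(x)\lesssim e^{-\alpha|x|}$. Using $\|\phi\|=1$ together with the eigenvalue equation for $\phi$, a routine computation reduces the three-body form to an effective one-body one:
\begin{equation*}
\lef u,(\hh_\kappa-e(\kappa))u\rig_{L^2(\R^4)}=\int_{\R^2}\bigl(|\nabla g|^2+U_\kappa(y)\,g^2(y)\bigr)dy,
\end{equation*}
with
\begin{equation*}
U_\kappa(y):=-\kappa V_{\rm ctr}(y)+\int_{\R^2}V_{\rm ctr}(|x-y|)\,\phi^2(x)\,dx.
\end{equation*}
The task is therefore to exhibit an infinite-dimensional subspace of $H^1(\R^2)$ on which the one-body operator $-\Delta_y+U_\kappa$ has negative quadratic form.

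The main analytical step, and the one I expect to be the principal obstacle, is to establish the Coulomb-type tail
\begin{equation*}
U_\kappa(y)=-\frac{1-\kappa}{|y|}+\mathcal{O}\bigl(|y|^{-2}\bigr),\qquad |y|\to\infty.
\end{equation*}
From the explicit form \eqref{v-appr} with $r_0=1$ one reads off $V_{\rm ctr}(|y|)=-|y|^{-1}+\mathcal{O}(|y|^{-2})$; the convolution $(V_{\rm ctr}\ast\phi^2)(y)$ will be handled by splitting the $x$-integration into $\{|x|\le|y|/2\}$ and its complement. On the first region a Taylor expansion of $|x-y|^{-1}$ around $y$ produces a remainder of order $|y|^{-2}$ (using $\int|x|\phi^2(x)dx<\infty$), while on the second region the exponential decay of $\phi$ easily dominates the logarithmic singularity of $V_{\rm ctr}$. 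Combining the contributions gives $(V_{\rm ctr}\ast\phi^2)(y)=-|y|^{-1}+\mathcal{O}(|y|^{-2})$, whence the claimed asymptotics for $U_\kappa$. The crucial point is that, since $\kappa<1$, the leading tail is strictly attractive with long-range Coulomb behaviour.

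Once the tail is in hand, the rest is a standard geometric-scale construction. Fix a radial $\chi\in C_c^\infty(\R^2)$ with $\supp\chi\subset\{1\le|y|\le 2\}$ and set $\chi_R(y)=\chi(y/R)$. A simple scaling shows that $\|\nabla\chi_R\|_2^2=\|\nabla\chi\|_2^2$, while for $R$ larger than some $R_0$
\begin{equation*}
\int_{\R^2} U_\kappa(y)\chi_R^2(y)\,dy\le-\frac{1-\kappa}{2}\int_{\R^2}\frac{\chi_R^2(y)}{|y|}\,dy=-\frac{(1-\kappa)R}{2}\int_{\R^2}\frac{\chi^2(z)}{|z|}\,dz,
\end{equation*}
so each Rayleigh quotient of $-\Delta_y+U_\kappa$ on $\chi_R$ is strictly negative. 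Choosing a geometric sequence $R_n=A^n$ with $A\ge 3$ makes the supports $\supp\chi_{R_n}$ pairwise disjoint, hence the family $\{\chi_{R_n}\}$ is orthogonal in $L^2(\R^2)$ and all cross terms in the quadratic form vanish on its linear span; in particular the form remains strictly negative on the whole span. Lifting via $u_n(x,y)=\phi(x)\chi_{R_n}(y)$ produces the required infinite-dimensional negative subspace for $\hh_\kappa-e(\kappa)$, and the min-max principle concludes.
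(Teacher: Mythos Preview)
Your proposal is correct and follows essentially the same route as the paper: product ansatz with the one-body ground state in the $x$ variable, reduction to an effective one-body operator $-\Delta_y+U_\kappa$, and the observation that for $\kappa<1$ the effective potential has an attractive Coulomb tail, which forces infinitely many bound states. The only cosmetic differences are that the paper works with the unitarily equivalent scaled operator $A_\kappa$ of \eqref{A-2d} (so its limit reads $|y|\,U_\kappa(y)\to\kappa^{1/2}-\kappa^{-1/2}$) and simply invokes ``standard results of spectral theory'' where you supply the explicit disjoint-annuli test-function construction.
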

\begin{proof}
Let  $\kappa \in [k_e, 1)$. In view of Proposition \ref{prop-es} and the variational principle it suffices to show that there exists a subspace $\F_\kappa \subset H^1(\R^4)$ such that 
\begin{equation} \label{var-pr}
{\rm dim} (\F_\kappa) = \infty \quad \wedge \quad \forall\, u \in \mathcal F_\kappa :\ \lef \,  A_\kappa\,  u,\, u\rig_{L^2(\R^4)}  < \E_1(\kappa) \, \|u\|_2^2 .
\end{equation} 
By choosing 
$$
u(x,y) = \varphi_\kappa(x)\, \psi(y), \ \psi\in H^1(\R^2)
$$
and using the calculations made in the proof of Theorem \ref{thm-2d} (with $f=0$) we obtain the identity 
\begin{align*}
\lef \,  A_\kappa\,  u,\, u\rig_{L^2(\R^4)}  & =  \E_1(\kappa) \, \|u\|_2^2 + \int_{\R^2} |\nabla_y \psi|^2\, dy + \int_{\R^2} U_\kappa(y)\, \psi^2(y) \, dy ,
\end{align*}
where
$$
U_\kappa(y) =  \int_{\R^2} \kappa^{-1} V_{\rm ctr}(\kappa^{-\frac 12}|x-y|) \, \varphi_\kappa^2(x)\, dx-V_{\rm ctr}(\kappa^{-\frac 12} |y|) .
$$
A direct calculation now shows that 
$$
\lim_{|y|\to \infty}\, |y|\, U_\kappa(y) = \kappa^{1/2}- \kappa^{-1/2} \ <0 . 
$$
Hence by standard results of spectral theory it follows that there exists an infinite-dimensional subspace $\mathcal G_\kappa \subset H^1(\R^2)$ such that 
$$
\int_{\R^2} |\nabla_y \psi|^2\, dy + \int_{\R^2} U_\kappa(y)\, \psi^2(y) \, dy < 0 \qquad \forall\, \psi \in \mathcal G_\kappa. 
$$
Setting
$\F_\kappa = \{ u\in H^1(\R^4): u(x,y) = \varphi_\kappa(x)\, \psi(y), \ \psi\in \mathcal G_\kappa\}$ then completes the proof of \eqref{var-pr}. 
\end{proof}

\subsection{Small coupling} 

\begin{lem} \label{lem-monotone}
The number of discrete eigenvalues of the operator $\HH_\kappa(v)$ is non-decreasing in $\kappa$ on the interval $(0,k_e]$. 
\end{lem}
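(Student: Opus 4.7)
The plan is to apply the Feynman--Hellmann theorem to the linear family $\HH_\kappa(v)=H_0+\kappa W$, where $H_0:=-\Delta-v(x-y)$ and $W:=v(y)-v(x)$, and show that every discrete eigenvalue of $\HH_\kappa(v)$ is \emph{strictly decreasing} in $\kappa$. Since Proposition \ref{prop-es} asserts that $\sigma_{\rm es}(\HH_\kappa(v))=[\Lambda_0(v),\infty)$ throughout $(0,k_e]$ with a $\kappa$-independent threshold, this monotone descent of the branches prevents any eigenvalue existing at some $\kappa_1\in(0,k_e]$ from escaping upward into the essential spectrum as $\kappa$ grows; this is precisely the non-decreasing statement for $N(\HH_\kappa(v),\Lambda_0(v))$.

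The key sign comes from a short variational computation. If $\psi\in H^1(\R^{2d})$ is a normalized eigenfunction of $\HH_\kappa(v)$ with eigenvalue $E<\Lambda_0(v)$, the variational characterization $\Lambda_0(v)=\inf\sigma(H_0)$ gives $\lef\psi,H_0\psi\rig\geq\Lambda_0(v)$, while the eigenvalue equation reads $\lef\psi,H_0\psi\rig+\kappa\,\lef\psi,W\psi\rig=E$. Subtracting,
\[
\kappa\,\lef\psi,W\psi\rig\,\leq\,E-\Lambda_0(v)\,<\,0,
\]
so $\lef\psi,W\psi\rig<0$ for every bound state of $\HH_\kappa(v)$ lying strictly below the essential spectrum. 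Note that this argument is insensitive to the indefinite sign of $W$: the whole point is that the kinetic-plus-pair-potential part $H_0$ already sits above $\Lambda_0(v)$, so the $\kappa W$ contribution is forced to compensate with a strictly negative expectation.

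Because $\kappa W$ is bounded in the $1d$ setting (by Assumption \ref{ass-v}) and relatively form-bounded with arbitrarily small relative bound in the $2d$ setting (by Lemma \ref{lem-vc}), the map $\kappa\mapsto\HH_\kappa(v)$ is a self-adjoint analytic family, and Rellich--Kato perturbation theory supplies analytic branches $E_n(\kappa)$ for the discrete eigenvalues. Combining the Feynman--Hellmann identity $\tfrac{d}{d\kappa}E_n(\kappa)=\lef\psi_n(\kappa),W\psi_n(\kappa)\rig/\|\psi_n(\kappa)\|^2$ with the inequality above yields $\tfrac{d}{d\kappa}E_n(\kappa)<0$ on every interval where $E_n(\kappa)<\Lambda_0(v)$, and the lemma follows. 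The main delicate points concern possible crossings of eigenvalue branches and the endpoints where a branch approaches $\Lambda_0(v)$; these are standard within the analytic perturbation framework once the strict sign of $\lef\psi,W\psi\rig$ has been secured.
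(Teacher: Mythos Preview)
Your proof is correct, and the decisive observation---that any normalized eigenfunction $\psi$ with eigenvalue $E<\Lambda_0(v)$ satisfies $\lef\psi,W\psi\rig<0$ because $\lef\psi,H_0\psi\rig\geq\Lambda_0(v)$---is exactly the one the paper uses. The difference is in how this sign is exploited. You feed it into Feynman--Hellmann and track analytic eigenvalue branches, which is valid but forces you to invoke Rellich--Kato machinery and then hand-wave past crossings and endpoint behaviour. The paper instead runs a one-step min--max argument: take the $N$ orthonormal eigenfunctions $\psi_1,\dots,\psi_N$ of $\HH_\kappa(v)$ below $\Lambda_0(v)$ and use them directly as trial functions for $\HH_{\kappa'}(v)$ with $\kappa'>\kappa$; since $\lef\psi_j,W\psi_j\rig<0$ (and in fact the same inequality holds on the whole span), one gets $\lef\HH_{\kappa'}(v)\psi_j,\psi_j\rig<\Lambda_0(v)$ and hence $N(\HH_{\kappa'}(v),\Lambda_0(v))\geq N$ immediately. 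This bypasses analytic perturbation theory entirely and leaves no ``delicate points'' to address. Your differential route would pay off if one wanted quantitative information on how fast the eigenvalues descend, but for the bare counting statement the paper's variational shortcut is both shorter and fully rigorous.
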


\begin{proof}
By Proposition \ref{prop-es} the number of discrete eigenvalues of the operator $\HH_\kappa(V)$ is equal to $N(\HH_\kappa(V), \Lambda_0(V))_{L^2(\R^4)}$ for all $\kappa$ in the interval $(0,k_e]$. Let $0 < \kappa < k_e$ and assume that $N(\HH_\kappa(V), \Lambda_0(V))_{L^2(\R^4)} =N \geq 1$. Then there exist $\psi_1, \psi_2 \dots \psi_N \in H^1(\R^4)$ (which can be chosen real valued) and $E_1, E_2, \dots E_N$ such that 
$$
\HH_\kappa(V)\, \psi_j = E_j\, \psi_j, \quad E_j < \Lambda_0(v) \qquad \forall \, j=1,\dots, N,
$$
and $\lef \psi_j, \psi_k \rig_{L^2(\R^4)} = \delta_{jk}$. 
Since by definition of $\Lambda_0(V)$:
$$
\int_{\R^4} |\nabla \psi_j|^2\, dx dy - \int_{\R^4} \psi_j^2(x,y) v(x-y)\, dx dy \, \geq \, \Lambda_0(V) >\ 
E_j=\langle \HH_\kappa(v)\psi_j, \psi_j\rangle \quad \forall\, j=1,\dots, N,
$$
it follows that 
\begin{equation} \label{psij}
\int_{\R^4} \psi_j^2(x,y) v(y)\, dx dy - \int_{\R^4} \psi_j^2(x,y) v(x)\, dx dy < 0 \qquad \forall \, j=1,\dots, N. 
\end{equation}
Now let $ \kappa' \in (\kappa, k_e]$. Then in view of \eqref{psij} we have 
$$
\lef \,  \HH_{\kappa'}(V)\,   \psi_j,\, \psi_j\rig_{L^2(\R^4)}  \, < \, \lef \,  \HH_\kappa(V)\,   \psi_j,\, \psi_j\rig_{L^2(\R^4)}  < \Lambda_0(v) \qquad \forall \, j=1,\dots, N,
$$
and since $\psi_j$ are mutually orthonormal, this implies that $N(\HH_{\kappa'}(v), \Lambda_0(v))_{L^2(\R^4)}\geq N$. 
\end{proof}


\appendix

\section{}
\label{sec-app} 

\begin{lem} \label{lem-app}
Let $v$ satisfy assumption \ref{ass-v} and let $0<\eps < \frac 14$. Then there exists a constant $c_1>0$ such that 
\begin{equation} \label{eq-lowerb}
\Delta_\kappa(x) = \sqrt{\kappa}\, (v(0) -v ( \kappa^{-\frac 14}\, x) ) + \kappa^{2\eps}\, \chi(\kappa^{-\eps} x)\, \geq \, c_1\, \kappa^{2\eps} \qquad \forall\ x\in\R, \  \ \forall\ \kappa\geq 1. 
\end{equation}
\end{lem}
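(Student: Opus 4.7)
The plan is to split $\R$ into three regions according to the size of $|x|$ and prove the lower bound separately in each, relying only on tools already available in the proof of Lemma \ref{lem-1}.

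First, I would extract quantitative information about the behavior of $v$ near its maximum. Since $v$ is even, $v'(0)=0$, and Taylor's formula with integral remainder (applied also in \eqref{taylor}) gives
$$
v(0)-v(y) \,=\, \omega^2 y^2 \,+\, r(y)\, y^3,\qquad |r(y)|\le \tfrac{1}{6}\|v'''\|_\infty.
$$
Choose $y_0>0$ small enough that $|r(y)|\,|y| \le \omega^2/2$ for $|y|\le y_0$; this yields the local quadratic estimate
$$
v(0)-v(y) \,\ge\, \tfrac{\omega^2}{2}\, y^2 \qquad\text{for }|y|\le y_0.
$$
Next, note that $v$ has compact support (say contained in $[-M,M]$), $v(0)-v(y)>0$ for $y\ne 0$ by Assumption \ref{ass-v}(3), and $v(0)>0$ (since $v\ge0$ and $v''(0)<0$ force $v$ not to vanish at its maximum). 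By continuity and compactness,
$$
c_* \,:=\, \inf_{|y|\ge y_0}\bigl(v(0)-v(y)\bigr) \,>\, 0.
$$

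Now I would do the case analysis. On the region $|x|\le \kappa^\eps$, the cutoff satisfies $\chi(\kappa^{-\eps}x)=1$, and since $v(0)-v(\kappa^{-1/4}x)\ge 0$, we immediately get $\Delta_\kappa(x)\ge \kappa^{2\eps}$. On the region where $|\kappa^{-1/4}x|\le y_0$ and $|x|>\kappa^\eps$, discard the nonnegative cutoff term and use the local quadratic bound:
$$
\Delta_\kappa(x)\,\ge\, \sqrt{\kappa}\cdot\tfrac{\omega^2}{2}\bigl(\kappa^{-1/4}x\bigr)^2 \,=\,\tfrac{\omega^2}{2}\,x^2 \,\ge\, \tfrac{\omega^2}{2}\,\kappa^{2\eps}.
$$
Finally, on the region $|\kappa^{-1/4}x|>y_0$, we again discard the cutoff term and use the uniform away-from-max bound:
$$
\Delta_\kappa(x) \,\ge\, c_*\sqrt{\kappa} \,\ge\, c_*\,\kappa^{2\eps},
$$
where the last inequality uses $\eps<\tfrac14$ (hence $2\eps<\tfrac12$) and $\kappa\ge 1$. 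Setting $c_1:=\min\{1,\omega^2/2,c_*\}$ finishes the proof.

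The only subtle point (and really just a bookkeeping issue rather than a genuine obstacle) is to check that these three regions cover all of $\R$ uniformly in $\kappa\ge 1$: when $\kappa$ is small enough that $\kappa^\eps\ge y_0\kappa^{1/4}$ the intermediate region (b) is empty, but then $|x|>\kappa^\eps$ automatically implies $|\kappa^{-1/4}x|>y_0$, so region (c) absorbs it; when $\kappa$ is large, $\kappa^\eps<y_0\kappa^{1/4}$ and all three regions contribute, with the estimates matching up cleanly at the boundaries. No other step is quantitatively delicate.
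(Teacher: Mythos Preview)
Your proof is correct and follows essentially the same route as the paper's: a three-region split governed by the scales $\kappa^\eps$ and $y_0\kappa^{1/4}$, with the Taylor estimate near the origin and a compactness argument away from it. The only cosmetic differences are that the paper separates the far region $|\kappa^{-1/4}x|>a$ (outside the support of $v$) from the intermediate region $[t_0,a]$, whereas you merge them into a single $c_*$-bound, and you make the coverage discussion explicit while the paper leaves it implicit.
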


\begin{proof}
Since $\Delta_\kappa(x) =\Delta_\kappa(-x)$, it suffices to prove \eqref{eq-lowerb} for all $x\geq 0$. 
Let $a>0$ such that $v(x) =0$ whenever $x >a$. Let $t := x \kappa^{-1/4}$ and fix a $\kappa\geq 1$. From assumption \ref{ass-v} and the definition of $\Delta_\kappa$ and $\chi$ it follows that 
$$
\Delta_\kappa(x) = \Delta_\kappa(t \, \kappa^{\frac 14}) \ \geq\ \min\{v(0), 1\} \, \kappa^{2\eps} \qquad \forall\ t\in[0,\kappa^{\eps-\frac 14} ) \cup (a,\infty).
$$
Now define 
$$
t_0 := \frac{3\, \omega^2}{\|v'''\|_\infty}\, ,
$$
keeping in mind that by assumption \ref{ass-v} we have $\|v'''\|_\infty >0$. In view of \eqref{taylor} it then follows that 
$$
\sqrt{\kappa}\, v(0) -\sqrt{\kappa}\, v(t)\ \geq\ \sqrt{\kappa}\ \, \frac{\omega^2}{2}\, t^2 \, \geq \ \frac{\omega^2}{2}\, \kappa^{2\eps}\, \qquad \forall\ t\in \big [\kappa^{\eps-\frac 14}\, ,\,  t_0\big ].
$$
To complete the proof we note that the function $v(0) -v(t)$ attains a positive minimum on $[t_0, a]$. Therefore
$$
\sqrt{\kappa}\,v(0) -\sqrt{\kappa}\, v(t) \, \geq \, c\, \sqrt{\kappa} \qquad \forall\ t\in [t_0, a]
$$
holds true with some $c>0$ independent of $\kappa$. 
\end{proof}

\begin{lem} \label{lem-app-2} 
Let $b_2$ and $\kappa_2$ be given by equation \eqref{ab} and Lemma \ref{lem-2} respectivelly. Then for all $\kappa \geq \kappa_2$  it holds
\begin{align}
\int_{\R^2} \psi^2_\kappa(x') \,  v^2(\kappa^{-\frac 14}\, (x'-y')) \, y'^2  \, dx' dy' & \leq \   2 \kappa^{\frac 34}\, m_{2,2} + 2 \kappa^{\frac 14}\, m_{0,2}\, \mu_2\,  \label{aux-4} \\
\int_\R b^2_2(y)\,  dy & \leq \ \kappa^{\frac 14}\, m'_{0,1}\, \mu_1  \label{aux-1}\\
\int_\R b^2_2(y)\, y^2\,  dy  & \leq \  \kappa^{\frac 34}\, m'_{2,1}\, \mu_1 + \kappa^{\frac 14}\, m'_{0,1}\, \mu_3 \label{aux-2}
\end{align}
where we adopted notation \eqref{notation}.
\end{lem}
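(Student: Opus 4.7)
The plan is to establish each of the three inequalities by elementary changes of variables combined with the parity of $v$ from Assumption \ref{ass-v}. None of the steps involves a subtle estimate; the only structural observation needed is how to handle the signed integral $\int_0^x$ that appears in $b_2$, which we replace by an integral over $I(x):=[\min(0,x),\max(0,x)]$ after moving the absolute value inside.

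For \eqref{aux-4} I would use $y'^2\le 2(y'-x')^2+2x'^2$ and apply the change of variables $s=\kappa^{-1/4}(y'-x')$ with $x'$ fixed, which decouples the two integrations. Using $\|\psi_\kappa\|_{L^2}=1$ in the first piece and the definition of $\mu_2$ in the second piece yields the two contributions $2\kappa^{3/4}m_{2,2}$ and $2\kappa^{1/4}m_{0,2}\mu_2$ respectively.

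For the other two estimates I pull the absolute value inside $b_2^2(y)$, bound
\[
\left|\int_0^x v'(\kappa^{-1/4}(t-y))\,dt\right|\le \int_{I(x)} |v'(\kappa^{-1/4}(t-y))|\,dt,
\]
and apply Fubini to interchange the $y$-integration with the $(x,t)$-integrations. The substitution $s=\kappa^{-1/4}(y-t)$ then decouples the $y$-integral. For \eqref{aux-1} this is immediate: $\int_{\R}|v'(\kappa^{-1/4}(t-y))|\,dy=\kappa^{1/4}m'_{0,1}$, the inner integral over $I(x)$ contributes $|x|$, and the definition of $\mu_1$ closes the bound.

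For \eqref{aux-2} the extra weight $y^2$ would be treated via the exact identity $y^2=t^2+2t\kappa^{1/4}s+\kappa^{1/2}s^2$ after the substitution. The key observation is that, since $v$ is even, $v'$ is odd and hence $|v'|$ is even, so the cross term $\int_{\R}s\,|v'(s)|\,ds$ vanishes by parity. What remains is $\kappa^{1/4}t^2 m'_{0,1}+\kappa^{3/4}m'_{2,1}$, after which the identities $\int_{I(x)}dt=|x|$ and $\int_{I(x)}t^2\,dt=|x|^3/3\le |x|^3$, combined with the definitions of $\mu_1$ and $\mu_3$, complete the estimate. There is no real obstacle; the only care needed is in tracking constants and in invoking the parity of $|v'|$ to eliminate the linear-in-$s$ cross term (a cruder bound $y^2\le 2(y-t)^2+2t^2$ would also work, at the cost of an irrelevant factor of $2$).
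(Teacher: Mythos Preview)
Your proposal is correct and follows essentially the same route as the paper's proof: the same substitution $s=\kappa^{-1/4}(y-t)$ and Fubini for \eqref{aux-1}, the same parity argument $\int_\R s\,|v'(s)|\,ds=0$ to eliminate the cross term in \eqref{aux-2}, and the same inequality $(a+b)^2\le 2a^2+2b^2$ combined with a linear change of variable for \eqref{aux-4}. Your use of $I(x)$ to track the sign of $x$ and your explicit bound $\int_{I(x)}t^2\,dt=|x|^3/3\le |x|^3$ are slightly more carefully stated than in the paper, but the argument is the same.
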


\begin{proof}
We have
\begin{align*}
\int_{\R^2} \psi^2_\kappa(x') \,  v^2(\kappa^{-\frac 14}\, (x'-y')) \,  y'^2\, dx' dy' & = \kappa^{\frac 14}\, \int_{\R^2} \psi^2_\kappa(x') \, (r+x')^2\,  v^2(r)\, dr dx'  \\
& = 2 \kappa^{\frac 14}\, \int_{\R^2} \psi^2_\kappa(x') \, (r^2+x'^2)\,  v^2(r)\, dr dx'  \\
& = 2 \kappa^{\frac 34}\, m_{2,2} + 2 \kappa^{\frac 14}\, m_{0,2}\, \mu_2\, .
\end{align*}
To prove \eqref{aux-1} we introduce the new variable $s= \kappa^{-\frac 14}\, (y-t)$ to get
\begin{align*}
\int_\R b^2_2(y)\,  dy & \leq \ \int_\R \psi^2_\kappa(x) \int_0^x \Big(\int_\R |v'(s)|\,  \kappa^{\frac 14}\, ds \Big)\, dt dx \ \leq \  \kappa^{\frac 14}\, m'_{0,1}  \int_\R \psi^2_\kappa(x) |x|\, dx  \\
& = \kappa^{\frac 14}\, m'_{0,1}\, \mu_1 \, .
\end{align*}
Similarly, 
\begin{align*}
\int_\R b^2_2(y)\, y^2\,  dy & \leq \ \int_\R \psi^2_\kappa(x) \int_0^x \Big(\int_\R (\kappa^{\frac 14}\, s+t)^2 |v'(s)|\,  \kappa^{\frac 14}\, ds \Big)\, dt dx \\ 
& =  \int_\R \psi^2_\kappa(x) \int_0^x \Big(\int_\R (\kappa^{\frac 12}\, s^2+t^2)\, |v'(s)|\,  \kappa^{\frac 14}\, ds \Big)\, dt dx \\
& \leq \ \kappa^{\frac 14} \int_\R \psi^2_\kappa(x) (\kappa^{\frac 12}\, m'_{2,1}\, |x| + m'_{0,1}\, |x|^3)\, dx \\
& = \kappa^{\frac 34}\, m'_{2,1}\, \mu_1 + \kappa^{\frac 14}\, m'_{0,1}\, \mu_3\, .
\end{align*}
\end{proof}

\vspace{0.2cm}

\noindent{\bf Acknowledgements.} 

\vspace{0.2cm}

T.G.P. is supported by the QUSCOPE Center, which is funded by the Villum Foundation. H.C. was partially supported by the Danish Council of Independent Research | Natural Sciences, Grant No. DFF-4181-00042. H.K. was partially supported by the No. MIUR-PRIN2010-11 grant for the project "Calcolo delle variazioni."


 \end{document}